\def\dOi{12(4:1)2016}
\subjclass{F.4.1, I.2.3, I.2.4}
\setlist{leftmargin=2em,itemsep=.5ex}
\theoremstyle{definition}
\newcommand{\rulefont}[1]{{\fontfamily{cmss}\fontshape{sl}\selectfont #1}}
\newcommand{\ie}{i.e.\ }
\newcommand{\eg}{e.g.\ }
\newcommand{\wrt}{w.r.t.\ }
\newcommand{\cf}{c.f.\ }
\newcommand{\NP}{\textsc{NP}\xspace}
\newcommand{\ExpTime}{\textsc{ExpTime}\xspace}
\newcommand{\FLz}{\ensuremath{\mathcal{F\!L}_0}\xspace}
\newcommand{\EL}{\ensuremath{\mathcal{E\!L}}\xspace}
\newcommand{\NC}{\ensuremath{\mathsf{N}_{\mathsf{C}}}\xspace}
\newcommand{\NR}{\ensuremath{\mathsf{N}_{\mathsf{R}}}\xspace}
\newcommand{\Nv}{\ensuremath{\mathsf{N}_{\mathsf{v}}}\xspace}
\newcommand{\Nc}{\ensuremath{\mathsf{N}_{\mathsf{c}}}\xspace}
\newcommand{\Imc}{\ensuremath{\mathcal{I}}\xspace}
\newcommand{\nequiv}{\not\equiv}
\newcommand{\nsqsubseteq}{\not\sqsubseteq}
\newcommand{\At}{\ensuremath{\mathsf{At}}\xspace}
\newcommand{\NV}{\ensuremath{\mathsf{At_{nv}}}\xspace}
\newcommand{\rd}{\ensuremath{\mathsf{rd}}\xspace}
\newcommand{\Var}{\ensuremath{\mathsf{Var}}\xspace}
\newcommand{\sub}{\ensuremath{\mathfrak{s}}\xspace}
\newcommand{\Cl}{\ensuremath{\mathsf{Cl}}\xspace}
\newenvironment{algrule}[1][Rule]{%
\noindent\begin{minipage}{\textwidth}%
\rulefont{#1:}

\vspace{.2ex}%
\centering%
\begin{boxedminipage}{.98\textwidth}
\small%
}{%
\end{boxedminipage}%
\end{minipage}%
}
\newcommand{\condition}{\rulefont{Condition:}\xspace}
\newcommand{\action}{\vspace*{.2em}\rulefont{Action:}\xspace}
\newcommand{\rulespace}{\vspace{1ex}}
\title[Dismatching and Local Disunification in \EL]{Extending Unification in
\EL to Disunification: The Case of Dismatching and Local Disunification}
\thanks{Supported by DFG under grant BA 1122/14-2.}
\author[F.~Baader]{Franz Baader}
\address{Theoretical Computer Science, Technische Universität Dresden, Germany}
\email{\{Franz.Baader, Stefan.Borgwardt, Barbara.Morawska\}@tu-dresden.de}
\author[S.~Borgwardt]{Stefan Borgwardt}
\address{\vspace{-18 pt}}
\author[B.~Morawska]{Barbara Morawska}
\address{\vspace{-18 pt}}
\keywords{Knowledge Representation, Description Logics, Unification,
Disunification, Computational Complexity}
\begin{document}

\begin{abstract}
  Unification in Description Logics has been introduced as a means to
  detect redundancies in ontologies.
  We try to extend the known decidability results for unification in the
  Description Logic \EL to disunification since negative constraints can be
  used to avoid unwanted unifiers.
  While decidability of the solvability of general \EL-disunification problems
  remains an open problem, we obtain \NP-completeness results for two
  interesting special cases: \emph{dismatching problems}, where one side of
  each negative constraint must be ground, and \emph{local} solvability of
  disunification problems, where we consider only solutions that are
  constructed from terms occurring in the input problem.
  More precisely, we first show that dismatching can be reduced to local 
  disunification,
  and then provide two complementary \NP-algorithms for finding local solutions
  of disunification problems.
\end{abstract}

\maketitle

\section{Introduction}

Description logics (DLs) \cite{BCNMP03} are a family of logic-based
knowledge representation formalisms, which can be used to represent
the conceptual knowledge of an application domain in a structured
and formally well-understood way. 
They are employed in various application
areas, but their most notable success so far is
the adoption of the DL-based language OWL~\cite{HoPH03}
as standard ontology language for the semantic web. 
DLs allow their users to define the important notions
(classes, relations) of the domain using concepts and roles;
to state constraints on the way these notions can be interpreted using 
terminological axioms; 
and to deduce consequences such as subsumption (subclass) relationships from 
the definitions and
constraints. The expressivity of a particular DL is determined by the
constructors available for
building concepts.
 
The DL \EL, which offers the concept
constructors conjunction ($\sqcap$), existential restriction ($\exists r.C$),
and the top concept ($\top$),
has drawn considerable attention in the last decade
since, on the one hand, important inference problems such as
the subsumption problem are polynomial in \EL, even with respect to expressive
terminological axioms~\cite{Bran04}. 
On the other hand, though quite inexpressive, \EL is used to define biomedical
ontologies, such as the large medical ontology SNOMED\,CT.\footnote{%
\url{http://www.ihtsdo.org/snomed-ct/}}
For these reasons, the most recent OWL version, OWL\,2,
contains the profile OWL\,2\,EL,\footnote{%
\url{http://www.w3.org/TR/owl2-profiles/}}
which is based on a maximally tractable extension of~\EL~\cite{BaBL08}.

Unification in Description Logics was introduced in~\cite{BaNa01}
as a novel inference service that can be used to detect redundancies in 
ontologies.
It is shown there that unification in the DL \FLz, which differs from
\EL in that existential restriction is replaced by value restriction
($\forall r.C$),
is \ExpTime-complete. The applicability of this result was not only hampered
by this high complexity, but also by the fact that \FLz is not
used in practice to formulate ontologies.

In contrast, as mentioned above, \EL is employed to build large biomedical 
ontologies for which detecting
redundancies is a useful inference service.
For example, assume that one developer of a medical ontology defines
the concept of a \emph{patient with severe head injury} as
\begin{eqnarray}
\label{term1}
\mathsf{Patient}\sqcap\exists\mathsf{finding}.(\mathsf{Head\_injury}\sqcap\exists\mathsf{severity}.\mathsf{Severe}),
\end{eqnarray}
whereas another one represents it as
\begin{eqnarray}
\label{term2}
\mathsf{Patient}\sqcap\exists\mathsf{finding}.(\mathsf{Severe\_finding}\sqcap\mathsf{Injury}\sqcap\exists\mathsf{finding\_site}.\mathsf{Head}).
\end{eqnarray}
Formally, these two concepts are not equivalent, but they are nevertheless 
meant to represent the same
concept. They can obviously be made equivalent by treating the concept names
$\mathsf{Head\_injury}$ and $\mathsf{Severe\_finding}$ as variables, and 
substituting the first one
by  $\mathsf{Injury}\sqcap\exists\mathsf{finding\_site}.\mathsf{Head}$
and the second one by $\exists\mathsf{severity}.\mathsf{Severe}$.
In this case, we say that the concepts are unifiable, and
call the substitution that makes them equivalent a \emph{unifier}.
In~\cite{BaMo-LMCS10}, we were
able to show that unification in \EL is of considerably lower
complexity than unification in \FLz: the decision problem for \EL 
is \NP-complete. 
The main idea underlying the proof of this result is to show that any solvable 
\EL-unification problem
has a local unifier, i.e., a unifier built from a polynomial number of 
so-called atoms determined by the unification problem.
However, the brute-force ``guess and then test'' \NP-algorithm obtained from
this result, which guesses a local substitution and then checks
(in polynomial time) whether it is a unifier, is not useful in practice.
We thus developed a goal-oriented
unification algorithm for~\EL, which is more efficient since nondeterministic
decisions are only made
if they are triggered by ``unsolved parts'' of the unification problem. 
Another option for obtaining a more efficient
unification algorithm is a translation to satisfiability in propositional 
logic (SAT): in~\cite{BaMo-LPAR10} it is shown
how a given \EL-unification problem~$\Gamma$ can be translated in
polynomial time into a propositional formula whose satisfying valuations
correspond to the local unifiers of $\Gamma$.

Intuitively, a unifier of two \EL concepts proposes definitions for the
concept names that are used as variables: in our example, we know that,
if we define $\mathsf{Head\_injury}$ as 
$\mathsf{Injury}\sqcap\exists\mathsf{finding\_site}.\mathsf{Head}$
and $\mathsf{Severe\_finding}$ as $\exists\mathsf{severity}.\mathsf{Severe}$,
then the two concepts (\ref{term1}) and (\ref{term2}) are equivalent w.r.t.\ 
these definitions.
Of course, this example was constructed such that the unifier (which is 
actually local) provides sensible definitions
for the concept names used as variables. In general, the existence of a 
unifier only says that
there is a structural similarity between the two concepts. The developer that 
uses unification as
a tool for finding redundancies in an ontology or between two different 
ontologies needs to inspect the
unifier(s) to see whether the definitions it suggests really make sense. For 
example, the substitution
that replaces $\mathsf{Head\_injury}$ by $\mathsf{Patient}\sqcap 
\mathsf{Injury}\sqcap\exists\mathsf{finding\_site}.\mathsf{Head}$
and $\mathsf{Severe\_finding}$ by $\mathsf{Patient}\sqcap 
\exists\mathsf{severity}.\mathsf{Severe}$ is also a local
unifier, which however does not make sense since findings (\ie
$\mathsf{Head\_Injury}$ or $\mathsf{Severe\_finding}$) cannot be patients.
Unfortunately, even small unification problems like the one in our example can
have too many local unifiers for manual inspection.
In~\cite{BaBM-AiML12} we propose to restrict the attention to so-called 
minimal unifiers, which form a subset of all local unifiers. 
In our example, the nonsensical unifier is
indeed not minimal. In general, however, the restriction to minimal unifiers 
may preclude interesting local unifiers. In addition,
as shown in~\cite{BaBM-AiML12}, computing minimal unifiers is actually harder 
than computing local unifiers (unless the polynomial
hierarchy collapses). 
In the present paper, we propose disunification as a more direct approach for 
avoiding local unifiers that do not make sense.
In addition to positive constraints (requiring equivalence or subsumption 
between concepts), a disunification problem may also
contain negative constraints (preventing equivalence or subsumption between 
concepts). In our example, the nonsensical unifier
can be avoided by adding the dissubsumption constraint 
\begin{eqnarray}
\label{dissubs:ex}
\mathsf{Head\_injury}\not\sqsubseteq^?\mathsf{Patient}
\end{eqnarray} 
to the equivalence constraint $(\ref{term1}) \equiv^? (\ref{term2})$.
We add a superscript $\cdot^?$ to the relation symbols (like~$\nsqsubseteq$
and~$\equiv$) to make clear that these are not axioms that are stated to hold,
but rather constraints that need to be solved by finding an appropriate
substitution.

Unification and disunification in DLs is actually a special case of unification
and disunification modulo equational theories (see~\cite{BaNa01}
and~\cite{BaMo-LMCS10} for the equational theories respectively corresponding 
to
\FLz and \EL).
Disunification modulo equational theories has, e.g., been investigated
in~\cite{BuerckertBuntine94,Comon91}. It is well-known
in unification theory that for effectively finitary equational theories, i.e., 
theories for which finite complete sets of unifiers 
can effectively be computed, disunification can be reduced to unification: to
decide whether a disunification problem has
a solution, one computes a finite complete set of unifiers of the equations and
then checks whether any of the unifiers in this set
also solves the disequations. Unfortunately, for \FLz and \EL, this approach 
is not feasible since the corresponding
equational theories have unification type zero~\cite{BaMo-LMCS10,BaNa01}, and
thus finite complete sets of unifiers need not even exist.
Nevertheless, it was shown in~\cite{BaOk12} that the approach used
in~\cite{BaNa01} to decide unification (reduction to
language equations, which are then solved using tree automata) can be adapted
such that it can also deal with disunification.
This yields the result that disunification in \FLz has the same 
complexity (\ExpTime-complete) as unification.

For \EL, going from unification to disunification appears to be more 
problematic. In fact, the main reason for unification to
be decidable and in \NP is locality: if the problem has a unifier then it has
a local unifier. We will show that
disunification in \EL is not local in this sense by providing an example of a 
disunification problem that has a solution, but
no local solution. Decidability and complexity of disunification in \EL
remains an open problem, but we provide partial solutions
that are of interest in practice.
On the one hand, we investigate
\emph{dismatching problems}, i.e., disunification problems where the negative
constraints are dissubsumptions $C\nsqsubseteq^? D$ for
which either $C$ or $D$ is ground (i.e., does not contain a variable).
Note that the dissubsumption (\ref{dissubs:ex}) from above actually satisfies
this restriction since $\mathsf{Patient}$ is
not a variable.  We prove that (general) solvability of
dismatching problems can be reduced to \emph{local disunification}, i.e., the
question whether a given \EL-disunification problem has a \emph{local} 
solution,
which shows that dismatching in \EL is \NP-complete.
On the other hand, we develop two specialized algorithms to solve local
disunification problems that extend the ones for
unification~\cite{BaMo-LPAR10,BaMo-LMCS10}: a goal-oriented algorithm that
reduces the amount of nondeterministic guesses necessary to find a local
solution, as well as a translation to SAT.
The reason we present two kinds of algorithms is that, in the case of 
unification, 
they have proved to complement each other well in first 
evaluations~\cite{BBMM-DL12}: the goal-oriented
algorithm needs less memory and finds minimal solutions faster, while the SAT
reduction generates larger data structures, but outperforms the goal-oriented
algorithm on unsolvable problems.

The remainder of this article is organized as follows.
Section~\ref{sec:subsumption} introduces syntax and semantics of \EL and
recalls some basic results about (dis)subsumption in \EL.
In Section~\ref{sec:disunification}, we introduce disunification and the special case of
unification, and recall known results about unification in~\EL and local solutions.
Section~\ref{sec:dismatching} contains our reduction from
dismatching to local disunification, while Sections~\ref{sec:rules}
and~\ref{sec:sat} describe the two algorithms for local disunification.
We discuss related work in Section~\ref{sec:related-work}, 
and summarize our results as well as sketch directions for future research in 
Section~\ref{sec:conclusions}.

This is an extended version of the conference paper~\cite{BaBM-RTA15}. In this
paper, we give full proofs of all our results, and add some results on how to
actually compute local solutions using the decision procedures presented in
Sections~\ref{sec:rules} and~\ref{sec:sat}.

\section{\texorpdfstring{Subsumption and dissubsumption in \EL}{Subsumption
and Dissubsumption in EL}}
\label{sec:subsumption}

The syntax of~\EL is defined based on two sets \NC and \NR of \emph{concept
names} and \emph{role names}, respectively. \emph{Concept terms} are
built from concept names using the constructors \emph{conjunction}
($C\sqcap D$), \emph{existential restriction} ($\exists r.C$ for $r\in\NR$),
and \emph{top}~($\top$).
An \emph{interpretation} $\Imc=(\Delta^\Imc,\cdot^\Imc)$ consists of a
non-empty domain~$\Delta^\Imc$ and an interpretation function that maps concept
names to subsets of~$\Delta^\Imc$ and role names to binary relations
over~$\Delta^\Imc$. This function is extended to concept terms as shown
in the semantics column of Table~\ref{tbl:el}.
\begin{table}[t]
\centering
\caption{Syntax and semantics of \EL}
\label{tbl:el}
\begin{tabular}{lcc}
  \toprule
  Name & Syntax & Semantics \\
  \midrule
  concept name & $A$ & $A^\Imc\subseteq\Delta^\Imc$
    \\ \addlinespace
  role name & $r$ & $r^\Imc\subseteq\Delta^\Imc\times\Delta^\Imc$
    \\ \addlinespace
  top & $\top$ & $\top^\Imc:=\Delta^\Imc$
    \\ \addlinespace
  conjunction & $C\sqcap D$ & $(C\sqcap D)^\Imc:=C^\Imc\cap D^\Imc$
    \\ \addlinespace
  existential restriction & $\exists r.C$
    & $(\exists r.C)^\Imc:=\{x\mid\exists y.(x,y)\in r^\Imc\land y\in C^\Imc\}$
    \\
  \bottomrule
\end{tabular}
\end{table}

A concept term~$C$ is \emph{subsumed} by a concept term~$D$
(written $C\sqsubseteq D$) if for every interpretation~\Imc it holds that
$C^\Imc\subseteq D^\Imc$.
We write a \emph{dissubsumption} $C\nsqsubseteq D$ to abbreviate the fact that
$C\sqsubseteq D$ does not hold.
The two concept terms $C$ and $D$ are \emph{equivalent} (written $C\equiv D$)
if $C\sqsubseteq D$ and $D\sqsubseteq C$,
\ie they are always interpreted as the same set.
The binary subsumption relation~$\sqsubseteq$ on concept terms is reflexive and
transitive, and $\equiv$ is an equivalence relation, which justifies the
notation.
Note that we use ``$=$'' to denote \emph{syntactic} equality between concept 
terms, whereas ``$\equiv$'' denotes semantic equivalence.

Since conjunction is interpreted as set intersection, we can treat $\sqcap$ as
a commutative and associative operator, and thus dispense with parentheses in
nested conjunctions.
An \emph{atom} is a concept name or an existential restriction.
Hence, every concept term~$C$ is a conjunction of atoms or~$\top$. We
call the atoms in this conjunction the \emph{top-level atoms} of~$C$.
Obviously, $C$ is equivalent to the conjunction of its top-level atoms, where
the empty conjunction corresponds to~$\top$.
An atom is \emph{flat} if it is a concept name or an existential restriction of
the form $\exists r.A$ with $A\in\NC$.

Subsumption in~\EL is decidable in polynomial time~\cite{BaKM-IJCAI99} and can
be checked by recursively comparing the top-level atoms of the two concept
terms.

\begin{lem}[\cite{BaMo-LMCS10}]
\label{lem:sub}
  For two atoms~$C,D$, we have $C\sqsubseteq D$ iff $C=D$ is a concept name or
  $C=\exists r.C'$, $D=\exists r.D'$, and $C'\sqsubseteq D'$.
  If $C,D$ are concept terms, then $C\sqsubseteq D$ iff for every
  top-level atom~$D'$ of~$D$ there is a top-level atom~$C'$ of~$C$ such that
  $C'\sqsubseteq D'$.
\qed
\end{lem}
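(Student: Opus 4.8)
The plan is to prove both biconditionals by a straightforward structural analysis, using the semantic definition of subsumption together with the set-theoretic interpretation of the constructors in Table~\ref{tbl:el}. I would first dispose of the atomic case, then derive the top-level-atom characterisation from it.

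For the first statement, suppose $C,D$ are atoms. The ``if'' direction is easy: if $C=D$ is a concept name then $C\sqsubseteq D$ by reflexivity; and if $C=\exists r.C'$, $D=\exists r.D'$ with $C'\sqsubseteq D'$, then for any interpretation~\Imc and any $x\in(\exists r.C')^\Imc$ there is a $y$ with $(x,y)\in r^\Imc$ and $y\in C'^\Imc\subseteq D'^\Imc$, so $x\in(\exists r.D')^\Imc$. For the ``only if'' direction I would argue by the shape of the atoms. If $C$ is a concept name~$A$, I build an interpretation witnessing that $D$ must also be~$A$: take $\Delta^\Imc=\{x\}$, put $x\in A^\Imc$, make every other concept name empty and every role empty; then $x\in C^\Imc$, so $x\in D^\Imc$, which forces $D$ to be the concept name~$A$ (an existential restriction is interpreted as the empty set here). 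Symmetrically, if $C=\exists r.C'$ then $D$ cannot be a concept name: take the single-point interpretation with all concept names empty and $r^\Imc=\{(x,x)\}$; then $x\in C^\Imc$ but $x$ lies in no concept name, so $D$ would have to be an existential restriction. Once $C=\exists r.C'$ and $D=\exists s.D'$, a similar pointwise interpretation (collapsing $C'$ to a point and separating the roles $r$ and $s$) forces $r=s$; and finally $C'\sqsubseteq D'$ follows because if some interpretation refuted it we could lift it to an $r$-successor of a fresh point to refute $C\sqsubseteq D$.

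For the second statement, write $C$ as the conjunction of its top-level atoms $C_1,\dots,C_m$ and $D$ as $D_1,\dots,D_n$ (the empty conjunction being~$\top$), using the observation from the text that every concept term is equivalent to the conjunction of its top-level atoms. Since $(\cdot)^\Imc$ turns $\sqcap$ into $\cap$, we have $C^\Imc=\bigcap_i C_i^\Imc$ and $D^\Imc=\bigcap_j D_j^\Imc$. Then $C\sqsubseteq D$ iff $C\sqsubseteq D_j$ for every~$j$ (intersection on the right), so it suffices to treat the case where $D$ is a single atom $D'$. For that, ``if'' is immediate: if some $C_i\sqsubseteq D'$ then $C^\Imc\subseteq C_i^\Imc\subseteq D'^\Imc$. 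For ``only if'', assuming no top-level atom of~$C$ is subsumed by~$D'$, I would construct an interpretation separating them: by the first part, for each~$i$ either $D'$ is a different concept name from $C_i$, or the role/recursive-subsumption conditions fail; in each case there is a pointed interpretation placing a point in $C_i^\Imc$ but not $D'^\Imc$, and these can be combined (disjoint union, or a single point when all $C_i$ are concept names) into one interpretation with a point in $\bigcap_i C_i^\Imc=C^\Imc$ but outside $D'^\Imc$, contradicting $C\sqsubseteq D'$.

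The only genuinely delicate step is the ``only if'' direction of the atomic case, specifically building the right countermodels and handling the recursion on existential restrictions without circularity. I would organise it as an induction on the size of $C$ (or on role depth): the claim ``$C\sqsubseteq D'$ fails'' at depth~$k$ is reduced to a strictly smaller failure of subsumption between $C'$ and $D'$, so the countermodels can be assembled level by level. Everything else is a routine unfolding of the semantics and the fact that $\sqcap$ is interpreted as intersection.
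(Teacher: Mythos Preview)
The paper does not actually prove this lemma; it is imported from~\cite{BaMo-LMCS10} (where the result is obtained via the characterisation of \EL-subsumption by homomorphisms between description trees) and simply closed with a \qed. So there is no in-paper proof to compare against; your direct semantic argument is a reasonable, more elementary alternative route.

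That said, one of your countermodels is broken. In the step ``if $C=\exists r.C'$ then $D$ cannot be a concept name'', you take a single point~$x$ with all concept names empty and $r^\Imc=\{(x,x)\}$ and assert $x\in C^\Imc$. This fails whenever $C'$ mentions a concept name: for $C=\exists r.A$ you get $A^\Imc=\emptyset$, hence $C^\Imc=\emptyset$ and $x\notin C^\Imc$, so the model proves nothing. A simple repair is a two-point model $\Delta^\Imc=\{x,y\}$ with every concept name interpreted as~$\{y\}$ and every role as $\{(x,y),(y,y)\}$: then $y$ satisfies every \EL concept term (easy induction), so $x\in(\exists r.C')^\Imc$, while $x$ lies in no concept name. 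More systematically, the canonical tree model of~$C$ does all of these case splits uniformly and also handles the $r\neq s$ case without further ad-hoc constructions; this is essentially what the cited paper does. Your ``combination'' step in the second part is also a little loose: a plain disjoint union does not produce a \emph{single} point lying in all~$C_i^\Imc$; you must identify the roots of the pointed models and then check that the merged root still avoids~$D'$ (it does, but this needs a sentence, distinguishing whether $D'$ is a concept name or an existential restriction). With these two fixes the argument goes through.
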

We obtain the following contrapositive formulation characterizing
dissubsumption.

\begin{lem}
\label{lem:dissub}
  For two concept terms $C,D$, we have $C\nsqsubseteq D$ iff there is a
  top-level atom~$D'$ of~$D$ such that for all top-level atoms~$C'$ of~$C$ it
  holds that $C'\nsqsubseteq D'$.
\qed
\end{lem}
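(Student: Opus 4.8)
The plan is to derive Lemma~\ref{lem:dissub} directly from Lemma~\ref{lem:sub} by contraposition, with no additional machinery. Recall that Lemma~\ref{lem:sub} states that, for concept terms $C,D$, we have $C\sqsubseteq D$ iff for every top-level atom~$D'$ of~$D$ there is a top-level atom~$C'$ of~$C$ with $C'\sqsubseteq D'$. Negating the left-hand side gives $C\nsqsubseteq D$, and negating the right-hand side, which has the form $\forall D'\, \exists C'\, (C'\sqsubseteq D')$, yields $\exists D'\, \forall C'\, (C'\nsqsubseteq D')$, i.e.\ there is a top-level atom~$D'$ of~$D$ such that for every top-level atom~$C'$ of~$C$ we have $C'\nsqsubseteq D'$. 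This is precisely the claimed characterization, so the proof is essentially a one-line application of De~Morgan's laws to the quantifier prefix.

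The only point that deserves a word of care is the treatment of the degenerate case $D=\top$, i.e.\ when $D$ has no top-level atoms. In that case $C\sqsubseteq\top$ always holds, so $C\nsqsubseteq D$ is false; correspondingly, the right-hand side of Lemma~\ref{lem:dissub} is also false, since there is no top-level atom~$D'$ of~$D$ to witness it (an existential over the empty set). Hence the equivalence holds vacuously here as well, consistent with the reading of Lemma~\ref{lem:sub} in which the universal quantifier over an empty set of top-level atoms of~$D$ makes $C\sqsubseteq\top$ trivially true. A symmetric remark applies when $C=\top$: then $C$ has no top-level atoms, so the inner universal statement $\forall C'\,(C'\nsqsubseteq D')$ is vacuously true, and indeed $\top\nsqsubseteq D$ holds exactly when $D$ has some top-level atom, matching the outer existential.

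Since there is no real obstacle here, I would simply write: ``By Lemma~\ref{lem:sub}, $C\sqsubseteq D$ holds iff for every top-level atom~$D'$ of~$D$ there is a top-level atom~$C'$ of~$C$ with $C'\sqsubseteq D'$. Negating both sides yields the claim.'' If a slightly more explicit version is wanted, one can spell out the quantifier negation as above and note the vacuous cases, but no induction or appeal to the semantics is needed beyond what Lemma~\ref{lem:sub} already provides.
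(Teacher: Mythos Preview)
Your proposal is correct and matches the paper's own treatment: the paper presents Lemma~\ref{lem:dissub} explicitly as the contrapositive of Lemma~\ref{lem:sub} and gives no further proof (hence the bare \qed). Your quantifier-negation argument is exactly this, and your handling of the $\top$ cases is a nice bonus that the paper leaves implicit.
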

In particular, $C\nsqsubseteq D$ is characterized by the existence of a
top-level atom~$D'$ of~$D$ for which $C\nsqsubseteq D'$ holds.
By further analyzing the structure of atoms, we obtain the following.

\begin{lem}
\label{lem:dissub-atoms}
  Let $C,D$ be two atoms. Then we have $C\nsqsubseteq D$ iff either
  \begin{enumerate}
    \item $C$ or $D$ is a concept name and $C \not= D$; or
    \item $D = \exists r. D'$, $C = \exists s.C'$, and $r \not= s$; or
    \item $D = \exists r. D'$, $C = \exists r.C'$, and $C'\nsqsubseteq D'$.
      \qed
  \end{enumerate}
\end{lem}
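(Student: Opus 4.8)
The plan is to prove Lemma~\ref{lem:dissub-atoms} as a direct case analysis based on the structure of the two atoms $C$ and $D$, using the characterization of subsumption between atoms given in Lemma~\ref{lem:sub}. Since an atom is either a concept name or an existential restriction $\exists r.C'$, there are essentially four shapes for the pair $(C,D)$: both concept names; $C$ a concept name and $D = \exists r.D'$; $C = \exists s.C'$ and $D$ a concept name; and finally $C = \exists s.C'$, $D = \exists r.D'$. I will show in each case that $C \nsqsubseteq D$ is equivalent to the stated disjunction of conditions~(1)--(3).

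\textbf{Case analysis.} If both $C$ and $D$ are concept names, then by Lemma~\ref{lem:sub} we have $C \sqsubseteq D$ iff $C = D$; hence $C \nsqsubseteq D$ iff $C \neq D$, which is exactly condition~(1) (and conditions~(2),~(3) cannot apply since $D$ is not an existential restriction). If $C$ is a concept name and $D = \exists r.D'$ (or symmetrically $C = \exists s.C'$ and $D$ a concept name), then Lemma~\ref{lem:sub} tells us $C \sqsubseteq D$ can never hold, because the only way an atom is subsumed by an atom is to be syntactically equal and a concept name, or to match the $\exists r$-prefix; a concept name and an existential restriction do neither. So $C \nsqsubseteq D$ holds unconditionally here, and indeed condition~(1) is satisfied because one of $C,D$ is a concept name and $C \neq D$ (they have different outermost shapes). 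The remaining case is $C = \exists s.C'$ and $D = \exists r.D'$. By Lemma~\ref{lem:sub}, $C \sqsubseteq D$ iff $s = r$ and $C' \sqsubseteq D'$; contraposing, $C \nsqsubseteq D$ iff $s \neq r$ or $C' \nsqsubseteq D'$, which is the disjunction of conditions~(2) and~(3). Finally, I must check that the three conditions are consistent across cases — i.e.\ that no pair $(C,D)$ makes the right-hand disjunction true without $C \nsqsubseteq D$ actually holding; this follows because each condition was derived as the negation of the corresponding branch of Lemma~\ref{lem:sub}, so the reverse implication is immediate.

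\textbf{Main obstacle.} This lemma is essentially a bookkeeping exercise, so there is no deep obstacle; the one point requiring a little care is the ``mixed'' case where $C$ is a concept name and $D$ an existential restriction (or vice versa). One has to notice that $C \nsqsubseteq D$ always holds there, and then verify that this is correctly captured by condition~(1) rather than slipping through the cracks — i.e.\ that ``$C$ or $D$ is a concept name and $C \neq D$'' fires precisely because the two atoms have incompatible top-level forms, so they cannot be syntactically equal. Making the quantifier structure of Lemma~\ref{lem:sub} explicit (an atom is subsumed by an atom only via the two listed routes) is what makes this step airtight.
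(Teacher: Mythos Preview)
Your proposal is correct and is exactly the natural argument: a case analysis on the shape of the two atoms, contraposing the atomic subsumption characterization of Lemma~\ref{lem:sub}. The paper itself does not spell out a proof at all---it states the lemma with a \qed immediately after noting that it is obtained ``by further analyzing the structure of atoms''---so your write-up simply makes explicit what the paper leaves implicit.
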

%

\section{Disunification}
\label{sec:disunification}

As described in the introduction, we now partition the set~\NC into a set of
\emph{(concept) variables}~(\Nv) and a set of \emph{(concept) constants}~(\Nc).
A concept term is \emph{ground} if it does not contain any variables.
We define a quite general notion of disunification problems that is similar to
the equational formulae used in~\cite{Comon91}.

\begin{defi}
\label{def:disunification}
  A \emph{disunification problem} $\Gamma$ is a formula built from subsumptions
  of the form $C\sqsubseteq^?D$, where $C$ and $D$ are concept terms, using the
  logical connectives $\land$, $\lor$, and~$\lnot$.
  We use equations $C\equiv^?D$ to abbreviate
  $(C\sqsubseteq^?D)\land(D\sqsubseteq^?C)$, dissubsumptions
  $C\nsqsubseteq^?D$ for $\lnot(C\sqsubseteq^?D)$, and disequations
  $C\nequiv^?D$ instead of $(C\nsqsubseteq^?D)\lor(D\nsqsubseteq^?C)$.
  A \emph{basic disunification problem} is a conjunction of subsumptions and
  dissubsumptions.
  A \emph{dismatching problem} is a basic disunification problem in which all
  dissubsumptions $C\nsqsubseteq^?D$ are such that either $C$ or $D$ is ground.
  Finally, a \emph{unification problem} is a conjunction of subsumptions.
\end{defi}
To define the semantics of disunification problems, fix a \emph{finite
signature} $\Sigma\subseteq\NC\cup\NR$ and assume that all disunification
problems contain only concept terms constructed over the symbols in~$\Sigma$.
A \emph{substitution} $\sigma$ maps every variable in $\Sigma$ to a ground
concept term constructed over the symbols of~$\Sigma$. This mapping can be
extended to all concept terms (over~$\Sigma$) in the usual way.
A substitution $\sigma$ \emph{solves} a subsumption $C\sqsubseteq^?D$ if
$\sigma(C)\sqsubseteq\sigma(D)$; it \emph{solves} $\Gamma_1\land\Gamma_2$ if
it solves both~$\Gamma_1$ and~$\Gamma_2$; it solves $\Gamma_1\lor\Gamma_2$ if
it solves~$\Gamma_1$ or~$\Gamma_2$; and it solves~$\lnot\Gamma$ if it does
not solve~$\Gamma$.
A substitution that solves a given disunification problem is called a
\emph{solution} of this problem.
A disunification problem is \emph{solvable} if it has a solution.

By \emph{disunification} we refer to the decision problem of checking
whether a given disunification problem is solvable, and will similarly talk of
\emph{dismatching} and \emph{unification}.
In contrast to unification, in disunification it does make a difference 
whether or not
solutions may contain variables from $\Nv\cap\Sigma$ or additional symbols from
$(\NC\cup\NR)\setminus\Sigma$~\cite{BuerckertBuntine94}.
In the context of the application sketched in the introduction, restricting
solutions to ground terms over the signature of the ontology to be checked for
redundancy is appropriate: 
since a solution~$\sigma$ is supposed to provide definitions for the variables
in~$\Sigma$, it should not use the variables themselves to define them;
moreover, definitions that contain newly generated symbols would be meaningless
to the user.

\subsection{Reduction to basic disunification problems}
\label{sec:basic-disunification}

We will consider only basic disunification problems in the following.
The reason is that there is a straightforward \NP-reduction from solvability
of arbitrary disunification problems to solvability of basic disunification
problems.
In this reduction, we view all subsumptions occurring in the disunification
problem as propositional variables and guess a satisfying valuation of the
resulting propositional formula in nondeterministic polynomial time.
It then suffices to check solvability of the basic disunification problem
obtained as the conjunction of all subsumptions evaluated to true and the
negations of all subsumptions evaluated to false.
This reduction consists of polynomially many guesses followed by a polynomial
satisfaction check.
Hence, doing this before the \NP-algorithms for the problems considered in the
following sections leaves the overall complexity in \NP.
In fact, in contrast to the use of an \NP-oracle within an \NP-algorithm, all
the tests that are applied are deterministic polynomial time.
Overall, there are polynomially many guesses (in the reduction and the
\NP-algorithm) with deterministic polynomial tests at the end.

Hence, from now on we restrict our considerations to basic disunification
problems.
For simplicity, we will call them \emph{disunification problems} and consider
them to be \emph{sets} containing subsumptions and dissubsumptions.

\subsection{Reduction to flat disunification problems}
\label{sec:flat-disunification}

We further simplify our analysis by considering \emph{flat} disunification
problems, which means that they may only contain \emph{flat} dissubsumptions of
the form $C_1\sqcap\dots\sqcap C_n\nsqsubseteq^?D_1\sqcap\dots\sqcap D_m$ for
flat atoms $C_1,\dots,C_n,D_1,\dots,D_m$ with $m,n\ge 0$,%
\footnote{Recall that the empty conjunction is $\top$.}
and \emph{flat} subsumptions of the form
$C_1\sqcap\dots\sqcap C_n\sqsubseteq^? D_1$ for flat atoms $C_1,\dots,C_n,D_1$
with $n\ge 0$.
This restriction is without loss of generality:
to flatten concept terms, one can simply introduce new variables and equations
to abbreviate subterms~\cite{BaMo-LMCS10}. Moreover, a subsumption of the form
$C\sqsubseteq^?D_1\sqcap\dots\sqcap D_m$ is equivalent to the conjunction of
$C\sqsubseteq^?D_1$, \dots, $C\sqsubseteq^?D_m$.
Any solution of a disunification problem~$\Gamma$ can be extended to a solution
of the resulting flat disunification problem~$\Gamma'$, and conversely every
solution of~$\Gamma'$ also solves~$\Gamma$.

This flattening procedure also works for unification problems. However,
dismatching problems cannot without loss of generality be restricted to being
flat since the introduction of new variables to abbreviate subterms may destroy
the property that one side of each dissubsumption is ground (see also
Section~\ref{sec:dismatching}).

\subsection{Local disunification}
\label{sec:local-disunification}

For solving flat unification problems, it has been shown that it suffices to
consider so-called local solutions~\cite{BaMo-LMCS10}, which are restricted to
use only the atoms occurring in the input problem.
We define this notion here for disunification.

Let $\Gamma$ be a flat disunification problem. We denote by \At the set of
all (flat) atoms occurring as subterms in~$\Gamma$, by \Var the set of
variables occurring in~$\Gamma$, and by $\NV:=\At\setminus\Var$ the set of
\emph{non-variable atoms} of~$\Gamma$.
Let $S\colon\Var\to2^{\NV}$ be an \emph{assignment (for~$\Gamma$)}, \ie a
function that assigns to each variable $X\in\Var$ a set $S_X\subseteq\NV$ of
non-variable atoms.
The relation $>_S$ on \Var is defined as the transitive closure of
$\{(X,Y)\in\Var\times\Var \mid \text{$Y$ occurs in an atom of $S_X$}\}$.
If this defines a strict partial order, \ie $>_S$ is irreflexive, then $S$ is
called \emph{acyclic}. In this case, we can define the substitution
$\sigma_S$ inductively along $>_S$ as follows:
if $X$ is minimal w.r.t.~$>_S$, then all elements of $S_X$ are ground and we
simply take
\[ \sigma_S(X):=\bigsqcap_{D\in S_X}D; \]
otherwise, we assume that $\sigma_S(Y)$ is defined for all $Y\in\Var$ with
$X>_SY$, and set
\[ \sigma_S(X):=\bigsqcap_{D\in S_X}\sigma_S(D). \]
It is easy to see that the concept terms $\sigma_S(D)$ are ground and
constructed from the symbols of~$\Sigma$, and hence $\sigma_S$ is a valid
candidate for a solution of~$\Gamma$ according to
Definition~\ref{def:disunification}.

\begin{defi}
\label{def:local}
  Let $\Gamma$ be a flat disunification problem.
  A substitution $\sigma$ is called \emph{local} (w.r.t.~$\Gamma$) if there
  exists an acyclic assignment~$S$ for~$\Gamma$ such that $\sigma=\sigma_S$.
  The disunification problem $\Gamma$ is \emph{locally solvable} if it has a
  local solution, \ie a solution that is a local substitution.
  \emph{Local disunification} is the problem of checking flat disunification
  problems for local solvability.
\end{defi}
Note that assignments and local solutions are defined only for \emph{flat}
disunification problems, because both are based on the assumption that all
subterms occurring in the input problem are flat.
Although solvability of disunification problems is equivalent to solvability of
flat disunification problems, it is not straightforward to extend the notion of
local solutions to general disunification problems~$\Gamma$. In particular,
there may be several flat disunification problems that are equivalent
to~$\Gamma$ \wrt solvability, but they induce different sets of flat atoms, and
hence different kinds of local substitutions.

Obviously, local disunification is decidable in~\NP: We can guess an
assignment~$S$, and check it for acyclicity and whether the induced
substitution solves the disunification problem in polynomial time.
The corresponding complexity lower bound follows from \NP-hardness of (local)
solvability of unification problems in \EL~\cite{BaMo-LMCS10}.

\begin{fact}
\label{fact:np}
  Local disunification in \EL is \NP-complete.
\qed
\end{fact}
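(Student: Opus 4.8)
The statement to prove is Fact~\ref{fact:np}: local disunification in \EL is \NP-complete. This is essentially a combination of two observations that the preceding text has already set up, so the proof is short, and the plan below reflects that.

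\medskip

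\noindent\textbf{Plan.} The plan is to establish membership in \NP\ and \NP-hardness separately.

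For the upper bound, I would argue as follows. Let $\Gamma$ be a flat disunification problem. By Definition~\ref{def:local}, a local solution is $\sigma_S$ for some acyclic assignment $S\colon\Var\to 2^{\NV}$. Since $\NV\subseteq\At$ and $\At$ consists of the (flat) atoms occurring in $\Gamma$, the set $\NV$ has size linear in $|\Gamma|$, so an assignment $S$ can be described by polynomially many bits and guessed in nondeterministic polynomial time. I then need to check, in deterministic polynomial time, (i) that $S$ is acyclic, and (ii) that $\sigma_S$ solves $\Gamma$. For (i), compute the relation $>_S$ on $\Var$ as the transitive closure of the edge relation ``$Y$ occurs in an atom of $S_X$'' and test it for irreflexivity; transitive closure of a relation on the linearly many variables is polynomial. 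For (ii), once $S$ is acyclic, $\sigma_S$ is well defined; note however that the ground terms $\sigma_S(X)$ may have size exponential in $|\Gamma|$ in general, so one cannot simply write them out. Instead I would observe that checking whether $\sigma_S$ solves each flat (dis)subsumption of $\Gamma$ can be done on the structure-sharing / DAG representation of $\sigma_S$: subsumption between \EL\ concept terms is decidable in polynomial time (Lemma~\ref{lem:sub} and \cite{BaKM-IJCAI99}), and this decision procedure works on the compact representation, comparing top-level atoms recursively with memoization. Since $\Gamma$ has polynomially many (dis)subsumptions and each check is polynomial in the size of the DAG for $\sigma_S$, the whole verification is polynomial. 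Hence local disunification is in \NP. (Alternatively, and more in the spirit of the paper's Sections~\ref{sec:rules} and~\ref{sec:sat}, one could simply cite those later algorithms, but the direct guess-and-check argument above is self-contained and is exactly what the paragraph preceding the Fact sketches.)

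For the lower bound, I would recall that \NP-hardness of local solvability of \EL-unification problems was proved in \cite{BaMo-LMCS10}. A unification problem is in particular a flat (after the flattening of Section~\ref{sec:flat-disunification}, which preserves both solvability and local solvability) disunification problem with no dissubsumptions, and for such problems ``locally solvable'' in the sense of Definition~\ref{def:local} coincides with the notion of local solvability used in \cite{BaMo-LMCS10}. Therefore local disunification is \NP-hard already on this subclass of inputs, and combining the two bounds gives \NP-completeness.

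\medskip

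\noindent\textbf{Main obstacle.} The only non-routine point is the verification step (ii) in the \NP\ upper bound: a naive reading of $\sigma_S$ gives exponentially large ground terms, so one must be careful to argue that the check whether $\sigma_S$ solves $\Gamma$ is genuinely polynomial. The resolution is standard---keep $\sigma_S$ in DAG form and run the polynomial subsumption test on that representation---but it is the one place where a little care is needed rather than a one-line appeal to earlier results. Everything else (size of $\NV$, acyclicity test, and the hardness direction) is immediate from the definitions and the cited results.
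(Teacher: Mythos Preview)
Your proposal is correct and follows exactly the two-part argument the paper sketches in the paragraph immediately preceding the Fact: guess an acyclic assignment and verify in polynomial time for the upper bound, and inherit \NP-hardness from (local) \EL-unification for the lower bound. Your explicit discussion of the DAG/structure-sharing representation of~$\sigma_S$ is a useful elaboration of a detail the paper leaves implicit when it simply asserts that the check is ``in polynomial time,'' but the overall strategy is identical.
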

It has been shown that unification in~\EL is \emph{local} in the sense that the
equivalent flattened problem has a local solution iff the original problem is
solvable, and hence (general) solvability of unification problems in~\EL can be
decided in \NP~\cite{BaMo-LMCS10}.
The next example shows that disunification in~\EL is \emph{not local} in this
sense.

\begin{exa}
\label{exa:disunification-not-local}
  Consider the flat disunification problem
  \begin{align*}
    \Gamma := \{
      &X\sqsubseteq^?B,\ 
      A\sqcap B\sqcap C\sqsubseteq^?X,\ 
      \exists r.X\sqsubseteq^?Y,\ 
      \top\nsqsubseteq^?Y,\ 
      Y\nsqsubseteq^?\exists r.B \}
  \end{align*}
  with concept variables $X,Y$ and concept constants $A,B,C$. Then the
  substitution~$\sigma$ with $\sigma(X):=A\sqcap B\sqcap C$ and
  $\sigma(Y):=\exists r.(A\sqcap C)$ is a solution of~$\Gamma$.
  For~$\sigma$ to be local, the atom $\exists r.(A\sqcap C)$ would have to be
  of the form~$\sigma(D)$ for a non-variable atom~$D$ occurring in~$\Gamma$.
  But the only candidates for $D$ are $\exists r.X$ and $\exists r.B$, none of
  which satisfy $\exists r.(A\sqcap C)=\sigma(D)$.

  We show that $\Gamma$ cannot have another solution that is local. Assume to
  the contrary that $\Gamma$ has a local solution~$\gamma$. We know that
  $\gamma(Y)$ cannot be~$\top$ since~$\gamma$ must solve $\top\nsqsubseteq^?Y$.
  Furthermore, none of the constants $A,B,C$ can be a top-level atom
  of~$\gamma(Y)$ since this would contradict $\exists r.X\sqsubseteq^?Y$ (see
  Lemma~\ref{lem:sub}).
  That leaves only the non-variable atoms $\exists r.\gamma(X)$ and
  $\exists r.B$, which are, however, ruled out by $Y\nsqsubseteq^?\exists r.B$
  since both $\gamma(X)$ and $B$ are subsumed by~$B$ (see
  Lemma~\ref{lem:dissub-atoms}).
\end{exa}
The decidability and complexity of general disunification in \EL is still open.
In the following, we first consider the special case of solving dismatching
problems, for which we show a similar result as for unification: every
dismatching problem can be polynomially reduced to a flat problem that has a
local solution iff the original problem is solvable.
The main difference is that this reduction is nondeterministic.
In this way, we reduce dismatching to local disunification.
We then provide two different NP-algorithms for the latter problem by extending
the rule-based unification algorithm from~\cite{BaMo-LMCS10} and adapting the
SAT encoding of unification problems from~\cite{BaMo-LPAR10}. These algorithms
are more efficient than the brute-force ``guess and then test'' procedure on
which our argument for Fact~\ref{fact:np} was based.

\section{Reducing dismatching to local disunification}
\label{sec:dismatching}

Our investigation of dismatching is motivated in part by the work on
\emph{matching} in description logics, where similar restrictions are imposed
on unification problems~\cite{BKBM-JLC99,Kust-01_6,BaMo-KI14}.
In particular, the matching problems for~\EL investigated in~\cite{BaMo-KI14}
are similar to our dismatching problems in that there subsumptions are
restricted to ones where one side is ground.
Another motivation comes from our experience that dismatching problems already
suffice to formulate most of the negative constraints one may want to put on
unification problems, as described in the introduction.

As mentioned in Section~\ref{sec:disunification}, we cannot restrict our
attention to flat dismatching problems without loss of generality.
Instead, the nondeterministic algorithm we present in the following reduces any
dismatching problem~$\Gamma$ to a flat \emph{disunification} problem~$\Gamma'$
with the property that local solvability of~$\Gamma'$ is equivalent to the
solvability of~$\Gamma$.
Since the algorithm takes at most polynomial time in the size of~$\Gamma$, this
shows, together with Fact~\ref{fact:np}, that dismatching in~\EL is
\NP-complete.
For simplicity, we assume that the subsumptions and the non-ground sides of the
dissubsumptions have already been flattened using the approach mentioned in the
previous section. This retains the property that all dissubsumptions have one
ground side and does not affect the solvability of the problem.

Our procedure exhaustively applies a set of rules to the (dis)subsumptions in
a dismatching problem (see Figures~\ref{fig:dismatching-decomp}
and~\ref{fig:dismatching-flatten}).
Each rule consists of a \emph{condition} under which it is applicable to a
given subsumption or dissubsumption~\sub, and an \emph{action} that is executed
on~\sub. Actions usually include the removal of~\sub from the input problem,
and often new subsumptions or dissubsumptions are introduced to replace it.
Actions can \emph{fail}, which indicates that the current dismatching problem
has no solution.
In all rules, $C_1,\dots,C_n$ and $D_1,\dots,D_m$ denote atoms.
The rule \rulefont{Left Decomposition} includes the special case where the
left-hand side of~\sub is~$\top$, in which case \sub is simply removed from the
problem.
We use the rule \rulefont{Flattening Left-Ground Subsumptions} to eliminate the
non-flat, left-ground subsumptions that may be introduced by
\rulefont{Flattening Right-Ground Dissubsumptions}.

Note that at most one rule is applicable to any given (dis)subsumption.
The choice which (dis)subsumption to consider next is \emph{don't care}
nondeterministic, but the choices in the rules \rulefont{Right Decomposition}
and \rulefont{Solving Left-Ground Dissubsumptions} are \emph{don't know}
nondeterministic.
\begin{figure}[tb]
  \centering
  \setlength{\parskip}{0pt}
  \begin{algrule}[Right Decomposition]
    \condition This rule applies to
      $\sub=C_1\sqcap\dots\sqcap C_n\nsqsubseteq^?
        D_1 \sqcap\dots\sqcap D_m$
      if $m\neq 1$ and $C_1,\dots,C_n$, $D_1,\dots,D_m$ are atoms.

    \action If $m=0$, then \emph{fail}. Otherwise, choose an index
      $i\in\{1,\dots,m\}$ and replace \sub by
      $C_1\sqcap\dots\sqcap C_n\nsqsubseteq^?D_i$.
  \end{algrule}

  \rulespace
  \begin{algrule}[Left Decomposition]
    \condition This rule applies to
      $\sub=C_1\sqcap\dots\sqcap C_n\nsqsubseteq^?D$ if $n\neq 1$,
      $C_1,\dots,C_n$ are atoms, and $D$ is a non-variable atom.

    \action If $n=0$, then remove \sub from~$\Gamma$. Otherwise, replace \sub
      by $C_1\nsqsubseteq^?D$, \dots, $C_n\nsqsubseteq^?D$.
  \end{algrule}

  \rulespace
  \begin{algrule}[Atomic Decomposition]
    \condition This rule applies to $\sub=C\nsqsubseteq^?D$ if $C$ and $D$ are
      non-variable atoms.

    \action Apply the first case that matches~\sub:
      \begin{enumerate}[label=\alph*)]
        \item if $C$ and $D$ are ground and $C\sqsubseteq D$, then \emph{fail};
        \item if $C$ and $D$ are ground and $C\nsqsubseteq D$, then remove \sub
          from~$\Gamma$;
        \item if $C$ or $D$ is a constant, then remove~\sub
          from~$\Gamma$;
        \item if $C = \exists r. C'$ and $D = \exists s.D'$  with $r\neq s$,
          then remove \sub from $\Gamma$;
        \item if $C = \exists r. C'$ and $D = \exists r.D'$, then replace \sub
          by $C' \not\sqsubseteq^? D'$.
      \end{enumerate}
  \end{algrule}
  \vspace*{-2ex}
  \caption{Decomposition rules}
  \label{fig:dismatching-decomp}
\end{figure}
\begin{figure}[tb]
  \centering
  \begin{algrule}[Flattening Right-Ground Dissubsumptions]
    \condition This rule applies to
      $\sub=X\nsqsubseteq^?\exists r.D$ if $X$ is a variable and $D$ is ground
      and is not a concept name.

    \action Introduce a new variable~$X_D$ and replace \sub by
      $X\nsqsubseteq^?\exists r.X_D$ and $D\sqsubseteq^?X_D$.
  \end{algrule}

  \rulespace
  \begin{algrule}[Flattening Left-Ground Subsumptions]
    \condition This rule applies to
      $\sub=C_1\sqcap\dots\sqcap C_n\sqcap
        \exists r_1.D_1\sqcap\dots\sqcap \exists r_m.D_m\sqsubseteq^?X$
      if $m>0$, $X$ is a variable, $C_1,\dots,C_n$ are flat ground atoms, and
      $\exists r_1.D_1,\dots,\exists r_m.D_m$ are non-flat ground atoms.

    \action Introduce new variables $X_{D_1},\dots,X_{D_m}$ and replace \sub by
      $D_1\sqsubseteq^?X_{D_1}$, \dots, $D_m\sqsubseteq^?X_{D_m}$, and
      $C_1\sqcap\dots\sqcap C_n\sqcap
        \exists r_1.X_{D_1}\sqcap\dots\sqcap\exists r_m.X_{D_m}\sqsubseteq^?X$.
  \end{algrule}

  \rulespace
  \begin{algrule}[Solving Left-Ground Dissubsumptions]
    \condition This rule applies to
      $\sub=C_1\sqcap\dots\sqcap C_n\nsqsubseteq^?X$ if $X$ is a variable and
      $C_1,\dots,C_n$ are ground atoms.

    \action Choose one of the following options:
      \begin{itemize}
        \item Choose a concept constant $A\in\Sigma$ and replace \sub by
          $X\sqsubseteq^?A$. If $C_1\sqcap\dots\sqcap C_n\sqsubseteq A$, then
          \emph{fail}.
        \item Choose a role $r\in\Sigma$, introduce a
          new variable~$Z$, replace \sub by $X\sqsubseteq^?\exists r.Z$,
          $C_1\nsqsubseteq^?\exists r.Z$, \dots,
          $C_n\nsqsubseteq^?\exists r.Z$, and immediately apply
          \rulefont{Atomic Decomposition} to each of these dissubsumptions.
      \end{itemize}
  \end{algrule}
  \vspace*{-2ex}
  \caption{Flattening and solving rules}
  \label{fig:dismatching-flatten}
\end{figure}

\begin{algo}
\label{alg:dismatching-reduction}
  Let $\Gamma_0$ be a dismatching problem. We initialize $\Gamma:=\Gamma_0$.
  While any of the rules of Figures~\ref{fig:dismatching-decomp}
  and~\ref{fig:dismatching-flatten} is applicable to any element of~$\Gamma$,
  choose one such element and apply the corresponding rule. If any rule
  application fails, return ``failure''.
\end{algo}
Note that each rule application takes only polynomial time in the size of the
chosen (dis)subsumption. In particular, subsumptions between ground atoms can
be checked in polynomial time~\cite{BaKM-IJCAI99}.

\begin{lem}
\label{lem:dismatching-termination}
  Every run of Algorithm~\ref{alg:dismatching-reduction} terminates in time
  polynomial in the size of~$\Gamma_0$.
\end{lem}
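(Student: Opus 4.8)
The plan is to give each rule a weight and show that some well-founded measure strictly decreases with every rule application, so that only polynomially many rules can fire, and then observe that each application is polynomial-time. The natural measure to track is a pair (or lexicographic tuple) built from structural parameters of the current problem~$\Gamma$. First I would argue that the total number of occurrences of symbols appearing inside (dis)subsumptions is, up to a polynomial factor, bounded by a polynomial in the size of~$\Gamma_0$: the only rules that introduce fresh variables are \rulefont{Flattening Right-Ground Dissubsumptions}, \rulefont{Flattening Left-Ground Subsumptions}, and \rulefont{Solving Left-Ground Dissubsumptions}, and in each case the new variable abbreviates a \emph{ground} subterm that already occurs in~$\Gamma_0$; hence one can bound the number of fresh variables ever introduced by the number of ground subterms of~$\Gamma_0$, which is linear in~$|\Gamma_0|$. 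Moreover, the ground material that gets copied around (as in \rulefont{Solving Left-Ground Dissubsumptions}, which copies $C_1,\dots,C_n$) is copied a bounded number of times per original ground atom, so the overall size of~$\Gamma$ stays polynomially bounded throughout any run.

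With a polynomial size bound in hand, I would set up a lexicographic termination measure on~$\Gamma$. A good candidate is, roughly, $(\mu_1,\mu_2,\mu_3)$ where $\mu_1$ counts the ``unsolved'' left-ground dissubsumptions and right-ground dissubsumptions on which a flattening/solving rule can still act, $\mu_2$ is the total number of conjuncts on the right-hand sides of dissubsumptions plus the total role-depth of dissubsumptions (controlling \rulefont{Right Decomposition} and the $\exists r.C'\nsqsubseteq^? \exists r.D'$ case of \rulefont{Atomic Decomposition}), and $\mu_3$ is the number of dissubsumptions with a non-singleton or non-atomic left-hand side (controlling \rulefont{Left Decomposition}) together with the count of as-yet-unresolved non-flat left-ground subsumptions (controlling \rulefont{Flattening Left-Ground Subsumptions}). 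One then checks rule by rule: \rulefont{Right Decomposition} and \rulefont{Left Decomposition} and the decomposing case of \rulefont{Atomic Decomposition} strictly decrease $\mu_2$ or $\mu_3$ while not increasing the earlier components; the rule-a)--d) cases of \rulefont{Atomic Decomposition} just delete a dissubsumption (or fail); \rulefont{Flattening Right-Ground Dissubsumptions} replaces $X\nsqsubseteq^?\exists r.D$ by a flat dissubsumption plus a \emph{subsumption} (not a dissubsumption), strictly decreasing $\mu_1$; \rulefont{Flattening Left-Ground Subsumptions} decreases $\mu_3$; and \rulefont{Solving Left-Ground Dissubsumptions} removes a left-ground dissubsumption and the dissubsumptions it spawns are immediately consumed by \rulefont{Atomic Decomposition} (its $C_i\nsqsubseteq^?\exists r.Z$ all have constant or ground-existential left-hand sides, so cases a)--d) apply and none survive), leaving only the new subsumption $X\sqsubseteq^?\exists r.Z$; hence it too strictly decreases $\mu_1$.

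Combining, the measure $(\mu_1,\mu_2,\mu_3)$ takes values in $\mathbb{N}^3$ bounded componentwise by a polynomial in~$|\Gamma_0|$ (using the size bound from the first paragraph), and it strictly decreases in the lexicographic order at every rule application; therefore the number of rule applications in any run is polynomial in~$|\Gamma_0|$. Since, as already noted in the text, each rule application—including the ground subsumption checks in \rulefont{Atomic Decomposition} and \rulefont{Solving Left-Ground Dissubsumptions}, which are polynomial by~\cite{BaKM-IJCAI99}—costs only polynomial time in the size of the chosen (dis)subsumption, and the problem size stays polynomially bounded, the whole run runs in polynomial time in~$|\Gamma_0|$.

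The main obstacle I anticipate is the subtle interaction between \rulefont{Flattening Right-Ground Dissubsumptions}, \rulefont{Flattening Left-Ground Subsumptions}, and \rulefont{Solving Left-Ground Dissubsumptions}: each introduces fresh variables and new (dis)subsumptions, so one must be careful that the flattening rules cannot be triggered again on their own output in an unbounded cascade. The key facts that tame this are that a fresh $X_D$ (resp. $X_{D_i}$) abbreviates a strictly smaller ground term, that the subsumptions $D\sqsubseteq^? X_D$ created are either already flat or become flat after one pass of \rulefont{Flattening Left-Ground Subsumptions}, and that \rulefont{Solving Left-Ground Dissubsumptions} is the only rule that can re-create a left-ground dissubsumption but it does so on smaller ground atoms and immediately discharges them via \rulefont{Atomic Decomposition}. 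Making the bookkeeping for $\mu_1$ precise enough to absorb these cascades—i.e.\ choosing $\mu_1$ so that it counts not just current left/right-ground dissubsumptions but a suitable potential accounting for the ones that can still be spawned from ground subterms—is the delicate part of the argument; everything else is a routine rule-by-rule verification.
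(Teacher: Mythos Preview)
Your proposal has two concrete gaps, both centered on \rulefont{Solving Left-Ground Dissubsumptions}. First, the claim that the spawned dissubsumptions $C_i\nsqsubseteq^?\exists r.Z$ are all discharged by cases a)--d) of \rulefont{Atomic Decomposition} is false: since $\exists r.Z$ is not ground, cases a) and b) never apply, and if $C_i=\exists r.C_i'$ with the \emph{same} role~$r$, case~e) fires and produces $C_i'\nsqsubseteq^?Z$. This is again a left-ground dissubsumption with a variable on the right, so \rulefont{Solving Left-Ground Dissubsumptions} applies to it. A single application can therefore turn one left-ground dissubsumption into several, and your $\mu_1$ (as a mere count) can increase. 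You notice something is off in your final paragraph, but ``choosing $\mu_1$ as a suitable potential'' is precisely the whole proof; without it you have no termination argument. Second, your size bound is circular: you assert that every fresh variable abbreviates a ground subterm of~$\Gamma_0$, but the $Z$'s introduced by \rulefont{Solving Left-Ground Dissubsumptions} abbreviate nothing---one is created per application of that rule, so bounding their number requires the termination bound you are trying to establish.

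The paper sidesteps both issues with a single scalar measure rather than a lexicographic one: set $c(C\bowtie^?D):=|C|\cdot|D|$ and let $c(\Gamma)$ be the sum of $c(\sub)$ over those $\sub\in\Gamma$ to which some rule is still applicable. One then checks rule by rule that $c(\Gamma)$ strictly decreases. The key points you were missing are that the new flat (dis)subsumptions produced by the flattening and solving rules (e.g.\ $X\sqsubseteq^?\exists r.Z$, or $X\nsqsubseteq^?\exists r.X_D$) have no rule applicable to them and hence contribute~$0$ to $c(\Gamma)$, and that for \rulefont{Solving Left-Ground Dissubsumptions} one has $|C_1\sqcap\dots\sqcap C_n|\cdot|X|=|C_1|+\dots+|C_n|+(n-1)>\sum_i|\sub_i|$, where $|\sub_i|$ is the size of the result of \rulefont{Atomic Decomposition} on $C_i\nsqsubseteq^?\exists r.Z$ (which is $|C_i'|\cdot|Z|=|C_i'|<|C_i|$ in case~e), and~$0$ otherwise). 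Since $c(\Gamma_0)$ is polynomial in~$|\Gamma_0|$, this gives the polynomial bound on the number of rule applications directly, with no separate size argument needed.
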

\proof
  Let $\Gamma_0$, \dots, $\Gamma_k$ be the sequence of disunification problems
  created during a run of the algorithm, \ie
  \begin{itemize}
    \item $\Gamma_0$ is the input dismatching problem;
    \item for all $j$, $0\le j\le k-1$, $\Gamma_{j+1}$ is the result of
      successfully applying one rule to a (dis)subsumption in~$\Gamma_j$; and
    \item either no rule is applicable to any element of~$\Gamma_k$, or a rule
      application to a (dis)subsumption in~$\Gamma_k$ failed.
  \end{itemize}
  We prove that $k$ is polynomial in the size of~$\Gamma_0$ by measuring the
  size of subsumptions and dissubsumptions via the function~$c$ that is defined
  as follows:
  \[ c(C\nsqsubseteq^?D):=c(C\sqsubseteq^?D):=|C|\cdot|D|, \]
  where $|C|$ is the size of the concept term~$C$; the latter is measured in
  the number of symbols it takes to write down~$C$, where we count each concept
  name as one symbol, and ``$\exists r.$'' is also one symbol.
  Note that we always have $|C|\ge 1$ since $C$ must contain at least one
  concept name or~$\top$, and thus also $c(\sub)\ge 1$ for any
  (dis)subsumption~\sub.
  We now define the size $c(\Gamma)$ of a disunification problem~$\Gamma$ as
  the sum of the sizes $c(\sub)$ for all $\sub\in\Gamma$ to which a rule is
  applicable.

  Since $c(\Gamma_0)$ is obviously polynomial in the size of~$\Gamma_0$, it
  now suffices to show that $c(\Gamma_j)>c(\Gamma_{j+1})$ holds for all $j$,
  $0\le j\le k-1$.
  To show this, we consider the rule that was applied to $\sub\in\Gamma_j$ in
  order to obtain~$\Gamma_{j+1}$:
  \begin{itemize}
    \item \rulefont{Right Decomposition:} Then
      $\sub=C_1\sqcap\dots\sqcap C_n\nsqsubseteq^?D_1\sqcap\dots D_m$ and we
      must have $m>1$ since we assumed that the rule application was
      successful.\par
      Thus, we get
      \[|C_1\sqcap\dots\sqcap C_n|\cdot|D_1\sqcap\dots\sqcap D_m|
        > |C_1\sqcap\dots\sqcap C_n|\cdot|D_i|\] for every choice of
      $i\in\{1,\dots,m\}$, and hence $c(\Gamma_j)>c(\Gamma_{j+1})$.
    \item \rulefont{Left Decomposition:} Then
      $\sub=C_1\sqcap\dots\sqcap C_n\nsqsubseteq^?D$ and, if $n=0$, we
      therefore have $c(\Gamma_j)=c(\Gamma_{j+1})+c(\sub)
        \ge c(\Gamma_{j+1})+1>c(\Gamma_{j+1})$.
      Otherwise, $n>1$, and thus
      \begin{align*}
        |C_1\sqcap\dots\sqcap C_n|\cdot|D|
        &= (|C_1|+\dots+|C_n|+(n-1))\cdot|D| \\
        &> |C_1|\cdot|D|+\dots+|C_n|\cdot|D|.
      \end{align*}
    \item \rulefont{Atomic Decomposition:} It suffices to consider Case~e)
      since Case~a) is impossible and the other cases are trivial. Then
      $\sub=\exists r.C'\nsqsubseteq^?\exists r.D'$, and we get
      \[|\exists r.C'|\cdot|\exists r.D'|=(|C'|+1)\cdot(|D'|+1)>|C'|\cdot|D'|\,.\]
    \item \rulefont{Flattening Right-Ground Dissubsumptions:} Then
      $\sub=X\nsqsubseteq^?\exists r.D$ is replaced by
      $X\nsqsubseteq^?\exists r.X_D$ and $D\sqsubseteq^?X_D$. To the
      dissubsumption, no further rule is applicable, and hence it does not
      count towards $c(\Gamma_j)$. Regarding the subsumption, we have
      \[|X|\cdot|\exists r.D|=|D|+1>|D|=|D|\cdot|X_D|\,.\]
    \item \rulefont{Flattening Left-Ground Subsumptions:} Then the subsumption
      \sub is of the form \[C_1\sqcap\dots\sqcap C_n\sqcap
        \exists r_1.D_1\sqcap\dots\sqcap\exists r_m.D_m\sqsubseteq^?X\] and only
      to the subsumptions $D_1\sqsubseteq^?X_{D_1}$, \dots,
      $D_m\sqsubseteq^?X_{D_m}$ this rule may be applicable again.
      But we have
      \begin{align*}
        |C_1\sqcap{}&\dots\sqcap C_n\sqcap
              \exists r_1.D_1\sqcap\dots\sqcap\exists r_m.D_m|\cdot|X| \\
        &= |C_1|+\dots+|C_n|+|\exists r_1.D_1|+\dots+|\exists r_m.D_m|+(n+m-1)
          \displaybreak[0] \\
        &\ge |\exists r_1.D_1|+\dots+|\exists r_m.D_m|
          \displaybreak[0] \\
        &> |D_1|+\dots+|D_m| \\
        &= |D_1|\cdot|X_{D_1}|+\dots+|D_m|\cdot|X_{D_m}|.
      \end{align*}
    \item \rulefont{Solving Left-Ground Dissubsumptions:} Then
      $\sub=C_1\sqcap\dots\sqcap C_n\nsqsubseteq^?X$ and to a generated
      subsumption of the form $X\sqsubseteq^?A$ or $X\sqsubseteq^?\exists r.Z$
      no further rule is applicable. If $n=0$, then no further dissubsumptions
      are generated, and thus $c(\Gamma_j)>c(\Gamma_{j+1})$. Otherwise, we
      denote by $|\sub_i|$ the size of the dissubsumption resulting from
      applying \rulefont{Atomic Decomposition} to
      $C_i\nsqsubseteq^?\exists r.Z$, $1\le i\le n$, where we consider this
      number to be~$0$ if the dissubsumption was simply discarded (\cf
       Cases~b)--d) of \rulefont{Atomic Decomposition}).

      If $|\sub_i|=0$, we obtain $|C_i|\ge 1>0=|\sub_i|$. But also in Case~e),
      we have $C_i=\exists r.C_i'$, and thus
      $|C_i|=|C_i'|+1=|C_i'|\cdot|Z|+1>|\sub_i|$. Hence, we get
      \begin{align*}
        |C_1\sqcap\dots\sqcap C_n|\cdot|X|
        &= |C_1|+\dots+|C_n|+(n-1) \\
        &\ge |C_1|+\dots+|C_n| \\
        &> |\sub_1|+\dots+|\sub_n|,
      \end{align*}
      and thus again $c(\Gamma_j)>c(\Gamma_{j+1})$.
      \qed
  \end{itemize}

\noindent Note that the rule \rulefont{Solving Left-Ground Dissubsumptions} is not
limited to non-flat dissubsumptions, and thus the algorithm completely
eliminates all left-ground dissubsumptions from~$\Gamma$.
It is also easy to see that, if the algorithm is successful, then the resulting
disunification problem~$\Gamma$ is flat.
We now prove that this nondeterministic procedure is correct in the following
sense.

\begin{lem}
\label{lem:dismatching}
  The dismatching problem $\Gamma_0$ is solvable iff there is a successful run
  of Algorithm~\ref{alg:dismatching-reduction} such that the resulting flat
  disunification problem~$\Gamma$ has a local solution.
\end{lem}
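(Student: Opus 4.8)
The plan is to prove both directions of the equivalence by tracking what each rule does to the solution set, distinguishing the \emph{don't care} rules (\rulefont{Right Decomposition} is actually \emph{don't know}, but \rulefont{Left Decomposition}, \rulefont{Atomic Decomposition}, and \rulefont{Flattening Right-Ground Dissubsumptions} and \rulefont{Flattening Left-Ground Subsumptions} are \emph{don't care}) from the two \emph{don't know} rules (\rulefont{Right Decomposition} and \rulefont{Solving Left-Ground Dissubsumptions}). The key invariant I would establish is a soundness/completeness statement about single rule applications: for a \emph{don't care} rule transforming $\Gamma_j$ into $\Gamma_{j+1}$, a substitution $\sigma$ (over $\Sigma$, possibly extended to the freshly introduced variables) solves $\Gamma_j$ iff it solves $\Gamma_{j+1}$; for a \emph{don't know} rule, $\sigma$ solves $\Gamma_j$ iff there is \emph{some} choice of the rule's option under which (an extension of) $\sigma$ solves $\Gamma_{j+1}$. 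These per-step claims are verified by appealing to Lemma~\ref{lem:dissub}, Lemma~\ref{lem:dissub-atoms}, and Lemma~\ref{lem:sub}: e.g. \rulefont{Left Decomposition} is exactly Lemma~\ref{lem:dissub} contrapositively applied to a single right-hand atom, \rulefont{Atomic Decomposition} is a case split straight out of Lemma~\ref{lem:dissub-atoms}, and the flattening rules use that $\sigma(C)\sqsubseteq\sigma(\exists r.X_D)$ can be arranged by setting $\sigma(X_D):=$ (the relevant subterm) without changing anything else, and that the abbreviation is harmless because $X_D$ occurs only in the two new (dis)subsumptions.

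The ``only if'' direction (solvable $\Rightarrow$ successful run with a locally solvable output) is the harder half, and I expect the main obstacle to lie here. Given an arbitrary solution $\sigma$ of $\Gamma_0$, I would use the completeness invariant to guide the \emph{don't know} choices: whenever \rulefont{Right Decomposition} fires on $C_1\sqcap\dots\sqcap C_n\nsqsubseteq^? D_1\sqcap\dots\sqcap D_m$, Lemma~\ref{lem:dissub} guarantees some $D_i$ with $\sigma(C_1\sqcap\dots\sqcap C_n)\nsqsubseteq D_i$, and I pick that $i$; whenever \rulefont{Solving Left-Ground Dissubsumptions} fires on $C_1\sqcap\dots\sqcap C_n\nsqsubseteq^? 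X$, I look at $\sigma(X)$: since $\sigma(C_1\sqcap\dots\sqcap C_n)\nsqsubseteq\sigma(X)$, Lemma~\ref{lem:dissub} gives a top-level atom $D'$ of $\sigma(X)$ with $\sigma(C_1\sqcap\dots\sqcap C_n)\nsqsubseteq D'$ — if $D'$ is a constant $A$ I take the first option with that $A$, and if $D'=\exists r.E$ I take the second option with that $r$ and set $\sigma(Z):=E$. Since the don't-care steps preserve solvability exactly and each don't-know step preserves it along the chosen branch, no \emph{fail} is ever triggered along this branch, and by Lemma~\ref{lem:dismatching-termination} the run terminates; call the resulting flat problem $\Gamma$, solved by the extended substitution $\sigma$. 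It remains to turn $\sigma$ into a \emph{local} solution of $\Gamma$. Here I invoke the known locality result for \emph{unification} in \EL from~\cite{BaMo-LMCS10}: I would argue that $\Gamma$, being flat, has the property that from any solution one can extract a local solution of the set of \emph{positive} subsumptions, and then check that this local replacement still satisfies the dissubsumptions. Concretely, I would examine the structure of $\Gamma$'s dissubsumptions after the rules have run — by construction every remaining dissubsumption is either between two non-variable atoms (already decided by \rulefont{Atomic Decomposition}, so trivially true) or of the form $X\nsqsubseteq^?\exists r.X_D$ with $X$ a variable — and observe that the locality construction, which only shrinks the substituted terms down to conjunctions of atoms already ``used'' by $\sigma$, cannot turn a failed subsumption $\sigma(X)\nsqsubseteq\sigma(\exists r.X_D)$ into a holding one, because making $\sigma(X)$ ``smaller'' (a sub-conjunction closure) cannot make it subsume more. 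This monotonicity observation is the crux and must be stated carefully: the appropriate local solution is the \emph{minimal} local solution below $\sigma$ on the positive part, and dissubsumptions of the shape $X\nsqsubseteq^?(\text{something})$ are preserved downward.

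The ``if'' direction is routine: given a successful run producing $\Gamma$ with a local solution $\tau$, apply the soundness invariant backwards along the chain $\Gamma=\Gamma_k,\Gamma_{k-1},\dots,\Gamma_0$. At each don't-care step $\Gamma_{j}\to\Gamma_{j+1}$, a solution of $\Gamma_{j+1}$ is a solution of $\Gamma_j$ (ignoring the fresh variables, whose values are irrelevant to $\Gamma_j$); at each don't-know step, the chosen option was one whose solutions solve $\Gamma_j$ — e.g. for \rulefont{Right Decomposition}, $\sigma(C_1\sqcap\dots\sqcap C_n)\nsqsubseteq\sigma(D_i)$ implies $\sigma(C_1\sqcap\dots\sqcap C_n)\nsqsubseteq\sigma(D_1\sqcap\dots\sqcap D_m)$ by Lemma~\ref{lem:dissub}; for \rulefont{Solving Left-Ground Dissubsumptions}, $\sigma(X)\sqsubseteq A$ (resp. $\sigma(X)\sqsubseteq\exists r.\sigma(Z)$) together with $C_1\sqcap\dots\sqcap C_n\nsqsubseteq A$ (resp. the decomposed dissubsumptions) forces $\sigma(C_1\sqcap\dots\sqcap C_n)\nsqsubseteq\sigma(X)$. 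Restricting the final substitution back to $\Sigma$ (it never introduced symbols outside $\Sigma$) yields a solution of $\Gamma_0$. I would present the two per-step invariants as a single auxiliary lemma proved by a rule-by-rule case analysis, then derive both directions of Lemma~\ref{lem:dismatching} from it as indicated.
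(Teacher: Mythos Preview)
Your overall plan is right, and both the soundness direction and the guidance of the don't-know choices in the completeness direction are essentially what the paper does. The gap is in the final step of the ``only if'' direction, where you pass from the (extended, non-local) solution~$\gamma$ of the output problem~$\Gamma$ to a \emph{local} solution.

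Your monotonicity argument (``making $\sigma(X)$ smaller cannot make it subsume more'') handles the left-hand side correctly: the standard locality construction gives a local~$\sigma_S$ with $\gamma(C)\sqsubseteq\sigma_S(C)$ for every concept term~$C$, and hence any remaining dissubsumption $X\nsqsubseteq^?D$ with \emph{ground}~$D$ is preserved, since $\sigma_S(X)\sqsubseteq D$ would force $\gamma(X)\sqsubseteq D$. But the dissubsumptions $X\nsqsubseteq^?\exists r.X_D$ produced by \rulefont{Flattening Right-Ground Dissubsumptions} have a \emph{variable}~$X_D$ on the right, and the same inequality $\gamma(X_D)\sqsubseteq\sigma_S(X_D)$ now points the wrong way: $\sigma_S(\exists r.X_D)$ may be strictly weaker than $\exists r.\gamma(X_D)=\exists r.D$, so $\sigma_S(X)\nsqsubseteq\exists r.D$ does not by itself yield $\sigma_S(X)\nsqsubseteq\sigma_S(\exists r.X_D)$. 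Monotonicity alone does not close this case, and your description of the localisation as ``shrinking'' is in fact backwards---the construction weakens, it does not strengthen.

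The paper fills this gap with a separate argument, by induction on the role depth of the ground term~$C$, that for every flattening variable~$X_C$ (introduced together with a subsumption $C\sqsubseteq^?X_C$) one actually has $\sigma_S(X_C)\sqsubseteq C$, i.e.\ the localised substitution is at least as strong as the ground term it abbreviates. This uses that the flattened subsumption $C'_1\sqcap\dots\sqcap C'_n\sqsubseteq^?X_C$ lies in~$\Gamma$, so that $\gamma(X_C)=C\sqsubseteq C_i=\gamma(C'_i)$ places each~$C'_i$ into~$S_{X_C}$ under the assignment $S_X=\{D\in\NV\mid\gamma(X)\sqsubseteq\gamma(D)\}$, combined with the induction hypothesis for the variables~$X_{D_i}$ sitting one level deeper. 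With $\sigma_S(X_D)\sqsubseteq D$ in hand one gets $\sigma_S(\exists r.X_D)\sqsubseteq\exists r.D$, and then transitivity finishes the case. You need to add this step (or an equivalent one) to make the argument go through.
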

\begin{proof}
  For soundness (\ie the ``if'' direction),
  let $\sigma$ be the local solution of~$\Gamma$ and consider the run of
  Algorithm~\ref{alg:dismatching-reduction} that produced~$\Gamma$. It is easy
  to show by induction on the reverse order in which the rules have been
  applied that $\sigma$ solves all subsumptions that have been considered.
  Indeed, this follows from simple applications of
  Lemmata~\ref{lem:sub}--\ref{lem:dissub-atoms} and the properties of
  subsumption.
  This implies that $\sigma$ is also a solution of~$\Gamma_0$.

  Showing completeness (\ie the ``only if'' direction) is a little more
  involved.
  Let $\gamma$ be a solution of~$\Gamma_0$. We guide the rule applications of
  Algorithm~\ref{alg:dismatching-reduction} and extend~$\gamma$ to the newly
  introduced variables in such a way to maintain the invariant that
  ``\emph{$\gamma$~solves all (dis)subsumptions of~$\Gamma$}''. This obviously
  holds after the initialization $\Gamma:=\Gamma_0$.
  Afterwards, we will use~$\gamma$ to define a local solution of~$\Gamma$.

  Consider a (dis)subsumption~$\sub\in\Gamma$ (which is solved by~$\gamma$) to
  which one of the rules of Figures~\ref{fig:dismatching-decomp}
  and~\ref{fig:dismatching-flatten} is applicable. We make a case distinction
  on which rule is to be applied:
  \begin{itemize}
    \item \rulefont{Right Decomposition:} Then \sub is of the form
      $C_1\sqcap\dots\sqcap C_n\nsqsubseteq^?D_1\sqcap\dots\sqcap D_m$ for
      $m\neq 1$. Since $\gamma(C_1\sqcap\dots\sqcap C_n) \nsqsubseteq
      \gamma(D_1\sqcap\dots\sqcap D_m)$, by applying Lemma~\ref{lem:dissub}
      twice, we can find an index $i\in\{1,\dots,m\}$ such that
      $\gamma(C_1\sqcap\dots\sqcap C_n) \nsqsubseteq \gamma(D_i)$. Thus, we can
      choose this index in the rule application in order to satisfy the
      invariant.
    \item \rulefont{Left Decomposition:} Then \sub is of the form
      $C_1\sqcap\dots\sqcap C_n\nsqsubseteq^?D$, where $n\neq 1$ and $D$ is a
      non-variable atom. This means that $\gamma(D)$ is also an atom, and thus
      by Lemma~\ref{lem:dissub} we know that $\gamma(C_i)\nsqsubseteq\gamma(D)$
      holds for all $i\in\{1,\dots,n\}$, as required.
    \item \rulefont{Atomic Decomposition:} Then \sub is of the form
      $C\nsqsubseteq^?D$ for two non-variable atoms~$C$ and~$D$. Since
      $\gamma(C)\nsqsubseteq\gamma(D)$, Case a) cannot apply. If one of the
      Cases b)--d) applies, then \sub is simply removed from~$\Gamma$ and there
      is nothing to show. Otherwise, we have $D=\exists r.D'$ and
      $C=\exists r.C'$, and the new dissubsumption $C'\nsqsubseteq^?D'$ is
      added to~$\Gamma$. Moreover, we have $\gamma(C)=\exists r.\gamma(C')$
      and~$\gamma(D)=\exists r.\gamma(D')$, and thus by
      Lemma~\ref{lem:dissub-atoms} we know that
      $\gamma(C')\nsqsubseteq\gamma(D')$.
    \item \rulefont{Flattening Right-Ground Dissubsumptions:} Then \sub is of
      the form $X\nsqsubseteq^?\exists r.D$. By defining $\gamma(X_D):=D$,
      $\gamma$ solves $X\nsqsubseteq^?\exists r.X_D$ and $D\sqsubseteq^?X_D$.
    \item \rulefont{Flattening Left-Ground Subsumptions:} Then the
      subsumption~\sub is of the form \[C_1\sqcap\dots\sqcap C_n\sqcap
        \exists r_1.D_1\sqcap\dots\sqcap\exists r_m.D_m\sqsubseteq^?X\,,\] where
      all $D_1,\dots,D_m$ are ground. If we extend~$\gamma$ by defining
      $\gamma(X_{D_i}):=D_i$ for all $i\in\{1,\dots,m\}$, then this obviously
      satisfies the new subsumptions $D_1\sqsubseteq^?X_{D_1}$, \dots,
      $D_m\sqsubseteq^?X_{D_m}$, and $C_1\sqcap\dots\sqcap C_n\sqcap
        \exists r_1.X_{D_1}\sqcap\dots\sqcap\exists r_m.X_{D_m}\sqsubseteq^?X$
      by our assumption that $\gamma$ solves~\sub.
    \item \rulefont{Solving Left-Ground Dissubsumptions:} Then the
      dissubsumption~\sub is of the form
      \[C_1\sqcap\dots\sqcap C_n\nsqsubseteq^?X\,,\] where $X$ is a variable and
      $C_1,\dots,C_n$ are ground atoms. By Lemma~\ref{lem:dissub}, there must
      be a ground top-level atom~$D$ of~$\gamma(X)$ such that
      $C_1\sqcap\dots\sqcap C_n\nsqsubseteq D$, \ie $C_1\nsqsubseteq D$, \dots,
      $C_n\nsqsubseteq D$. If $D$ is a concept constant, we can choose this in
      the rule application since we know that $\gamma(X)\sqsubseteq D$.
      Otherwise, we have $D=\exists r.D'$. By extending~$\gamma$ to
      $\gamma(Z):=D'$, we ensure that $X\sqsubseteq^?\exists r.Z$,
      $C_1\nsqsubseteq^?\exists r.Z$, \dots $C_n\nsqsubseteq^?\exists r.Z$ are
      solved by~$\gamma$.
      The remaining claim follows as for \rulefont{Atomic Decomposition} above.
  \end{itemize}
  Once no more rules can be applied, we obtain a flat disunification
  problem~$\Gamma$ of which the extended substitution~$\gamma$ is a (possibly
  non-local) solution.
  To obtain a local solution, we denote by~\At, \Var, and~\NV the sets as
  defined in Section~\ref{sec:disunification} and define the assignment~$S$
  induced by~$\gamma$ as in~\cite{BaMo-LPAR10}:
  \[ S_X := \{ D\in\NV \mid \gamma(X)\sqsubseteq\gamma(D) \}, \]
  for all (old and new) variables~$X\in\Var$. It was shown
  in~\cite{BaMo-LPAR10} that $S$ is acyclic and the substitution~$\sigma_S$
  solves all subsumptions in~$\Gamma$.%
\footnote{More precisely, it was shown that $\gamma$ induces a satisfying
valuation of a SAT problem, which in turn induces the solution~$\sigma_S$
above. For details, see~\cite{BaMo-LPAR10} or Sections~\ref{sec:sat-soundness}
and~\ref{sec:sat-completeness}.}
  Furthermore, it is easy to show that $\gamma(C)\sqsubseteq\sigma_S(C)$ holds
  for all concept terms~$C$.

  Since $\Gamma$ contains no left-ground dissubsumptions anymore, it remains to
  show that $\sigma_S$ solves all remaining right-ground dissubsumptions
  in~$\Gamma$ and all flat dissubsumptions created by an application of the
  rule \rulefont{Flattening Right-Ground Dissubsumptions}.
  Consider first any flat right-ground dissubsumption $X\nsqsubseteq^?D$
  in~$\Gamma$. We have already shown that $\gamma(X)\nsqsubseteq D$ holds.
  Since $\gamma(X)\sqsubseteq\sigma_S(X)$, by the transitivity of subsumption
  $\sigma_S(X)\sqsubseteq D$ cannot hold, and thus $\sigma_S$ also solves the
  dissubsumption.

  Consider now a dissubsumption $X\nsqsubseteq^?\exists r.X_D$ that was created
  by an application of the rule \rulefont{Flattening Right-Ground
  Dissubsumptions} to $X\nsqsubseteq^?\exists r.D$. By the same argument as
  above, from $\gamma(X)\nsqsubseteq\exists r.D$ we can derive that
  $\sigma_S(X)\nsqsubseteq\exists r.D$ holds.
  We now show that $\sigma_S(X_D)\sqsubseteq D$ holds, which implies that
  $\sigma_S(\exists r.X_D)\sqsubseteq\exists r.D$, and thus by the transitivity
  of subsumption it cannot be the case that
  $\sigma_S(X)\sqsubseteq\sigma_S(\exists r.X_D)$, which concludes the proof by
  showing that $\sigma_S$ solves~$\Gamma$.

  We show that $\sigma_S(X_C)\sqsubseteq C$ holds for all variables~$X_C$ for
  which a subsumption $C\sqsubseteq^?X_C$ was introduced by a
  \rulefont{Flattening} rule.
  We prove this claim by induction on the \emph{role depth} of~$C$, which is
  the maximum nesting depth of existential restrictions occurring in it.
  Let $C_1,\dots,C_n$ be the top-level atoms of~$C$. Then $\Gamma$ contains a
  flat subsumption $C_1'\sqcap\dots\sqcap C_n'\sqsubseteq^?X_C$, where
  $C_i=C_i'$ if $C_i$ is flat, and $C_i=\exists r.D_i$ and
  $C_i'=\exists r.X_{D_i}$ otherwise. Since the role depth of each such~$D_i$
  is strictly smaller than that of~$C$, by induction we know that
  $\sigma_S(X_{D_i})\sqsubseteq D_i$, and thus
  $\sigma_S(C_1'\sqcap\dots\sqcap C_n') \sqsubseteq
    C_1\sqcap\dots\sqcap C_n = C$ by Lemma~\ref{lem:sub}.
  Furthermore, for all $i\in\{1,\dots,n\}$ we have
  $\gamma(X_C)=C\sqsubseteq C_i=\gamma(C_i')$ and $C_i'\in\NV$. Thus,
  $C_i'\in S_{X_C}$ by the definition of~$S$. The definition of~$\sigma_S$ now
  yields that
  $\sigma_S(X_C)\sqsubseteq \sigma_S(C_1'\sqcap\dots\sqcap C_n')\sqsubseteq C$
  (see Section~\ref{sec:local-disunification}).
\end{proof}
The disunification problem of Example~\ref{exa:disunification-not-local}
is in fact a dismatching problem. Applying
Algorithm~\ref{alg:dismatching-reduction} to this problem, we can use the rule
\rulefont{Solving Left-Ground Dissubsumptions} to replace $\top\nsqsubseteq^?Y$
with $Y\sqsubseteq^? \exists r.Z$.
The presence of the new atom $\exists r.Z$ makes the solution~$\sigma$
introduced in Example~\ref{exa:disunification-not-local} local.

Together with Fact~\ref{fact:np} and the \NP-hardness of unification
in~\EL~\cite{BaMo-LMCS10}, this shows the following complexity result.

\begin{thm}
\label{thm:dismatching-np}
  Dismatching in \EL is \NP-complete.
\qed
\end{thm}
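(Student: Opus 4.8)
The plan is to assemble the theorem from three ingredients already established in the excerpt. First, \emph{membership in \NP}: given a dismatching problem $\Gamma_0$, one nondeterministically runs Algorithm~\ref{alg:dismatching-reduction}, making the don't-know choices in \rulefont{Right Decomposition} and \rulefont{Solving Left-Ground Dissubsumptions} by guessing. By Lemma~\ref{lem:dismatching-termination} every run terminates after polynomially many rule applications, each taking polynomial time, so this produces a flat disunification problem~$\Gamma$ (or reports failure) in nondeterministic polynomial time. One then feeds~$\Gamma$ into the \NP-procedure for local disunification from Fact~\ref{fact:np}: guess an acyclic assignment~$S$ and verify in polynomial time that $\sigma_S$ solves~$\Gamma$. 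By the ``only if'' (completeness) direction of Lemma~\ref{lem:dismatching}, if $\Gamma_0$ is solvable then some run of the algorithm yields a~$\Gamma$ with a local solution, so this composite nondeterministic procedure has an accepting computation; by the ``if'' (soundness) direction, any accepting computation witnesses solvability of~$\Gamma_0$. Hence dismatching in~\EL is in~\NP.

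Second, \emph{\NP-hardness}: a unification problem is in particular a (trivial) dismatching problem, since it contains no dissubsumptions at all, so the emptiness-of-negative-constraints case of Definition~\ref{def:disunification} is vacuously satisfied. Solvability of \EL-unification problems is \NP-hard~\cite{BaMo-LMCS10}, and this is a polynomial-time (indeed identity) reduction to dismatching, so dismatching in~\EL is \NP-hard as well.

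Combining the two bounds gives \NP-completeness, which is the statement. Since the theorem is marked with \qed in the excerpt, the intended argument is exactly this short assembly, and I would present it in one or two sentences pointing to Lemma~\ref{lem:dismatching}, Lemma~\ref{lem:dismatching-termination}, Fact~\ref{fact:np}, and the \NP-hardness of unification~\cite{BaMo-LMCS10}. The only point deserving a word of care — and the closest thing to an ``obstacle'' — is the bookkeeping on nondeterminism: one must check that stacking the guesses inside Algorithm~\ref{alg:dismatching-reduction} on top of the guess of the assignment~$S$ (and, if one has not yet reduced to basic problems, the propositional guess from Section~\ref{sec:basic-disunification}) still leaves only polynomially many binary choices followed by a single deterministic polynomial-time check, rather than nested \NP-oracle calls. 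This is exactly the observation already made in Section~\ref{sec:basic-disunification}, so no new work is required; it just needs to be invoked.
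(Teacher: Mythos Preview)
Your proposal is correct and matches the paper's own argument essentially verbatim: the theorem is stated with \qed precisely because it follows immediately from Lemma~\ref{lem:dismatching-termination}, Lemma~\ref{lem:dismatching}, Fact~\ref{fact:np}, and the \NP-hardness of \EL-unification from~\cite{BaMo-LMCS10}, exactly as you describe. Your remark about stacking the nondeterministic guesses rather than using oracle calls is the same observation the paper makes in Section~\ref{sec:basic-disunification}, so nothing further is needed.
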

Additionally, one can see from the proof of Lemma~\ref{lem:dismatching} that
any local solution of the constructed disunification problem~$\Gamma$ is also a
solution of the original problem~$\Gamma_0$.
Hence, if we are interested in actually computing solutions of~$\Gamma_0$ in
order to show them to the user, we can collect the solutions of the flat
problems~$\Gamma$ produced by the successful runs of
Algorithm~\ref{alg:dismatching-reduction}.

\section{A goal-oriented algorithm for local disunification}
\label{sec:rules}

In this section, we present a sound and complete algorithm that provides a more
goal-directed way to solve local disunification problems than blindly guessing
an assignment as described in Section~\ref{sec:dismatching}.
The approach is based on transformation rules that are applied to subsumptions
and dissubsumptions in order to derive a local solution.
To solve the \emph{subsumptions}, we reuse the rules of the goal-oriented
algorithm for unification in~\EL~\cite{BaMo-LMCS10,BaBM-DL12}, which produces
only local unifiers.
Since any local solution of the disunification problem is in particular a local
unifier of the subsumptions in the problem, one might think that it is then
sufficient to check whether any of the produced unifiers also solves the
dissubsumptions.
This would not be complete, however, since the goal-oriented algorithm for
unification does \emph{not} produce  \emph{all} local unifiers.  For this
reason, we have additional rules for solving the dissubsumptions. Both rule
sets contain (deterministic) \emph{eager} rules that are applied with the
highest priority, and \emph{nondeterministic} rules that are only applied if no
eager rule is applicable. The goal of the eager rules is to enable the
algorithm to detect obvious contradictions as early as possible in order to
reduce the number of nondeterministic choices it has to make.

Let now $\Gamma_0$ be the flat disunification problem for which we want to
decide local solvability, and let the sets \At, \Var, and \NV be defined as in
Section~\ref{sec:disunification}.
We assume without loss of generality that the dissubsumptions in~$\Gamma_0$
have only a single atom on the right-hand side. If this is not the case, it can
easily be achieved by exhaustive application of the nondeterministic rule
\rulefont{Right Decomposition} (see Figure~\ref{fig:dismatching-decomp})
without affecting the complexity of the overall procedure.

Starting with $\Gamma_0$, the algorithm maintains a current disunification
problem~$\Gamma$ and a current acyclic assignment~$S$, which initially assigns
the empty set to all variables.
In addition, for each subsumption or dissubsumption in~$\Gamma$, it maintains
the information on whether it is \emph{solved} or not.
Initially, all subsumptions of $\Gamma_0$ are unsolved, except those with a
variable on the right-hand side, and all dissubsumptions in $\Gamma_0$ are
unsolved, except those with a variable on the left-hand side and a non-variable
atom on the right-hand side.

Subsumptions of the form $C_1 \sqcap \dots \sqcap  C_n \sqsubseteq^? X$ and
dissubsumptions of the form $X \nsqsubseteq^? D$, for a non-variable atom~$D$,
are called \emph{initially solved}.
Intuitively, they only specify constraints on the assignment~$S_X$. More
formally, this intuition is captured by the process of \emph{expanding}
$\Gamma$ w.r.t.\ the variable~$X$, which performs the following actions:
\begin{itemize}
  \item every initially solved subsumption $\sub \in \Gamma$ of the form
    $C_1 \sqcap \dots \sqcap  C_n \sqsubseteq^? X$ is expanded by adding
    the subsumption $C_1 \sqcap \dots \sqcap C_n \sqsubseteq^?E$ to $\Gamma$
    for every $E\in S_X$, and
  \item every initially solved dissubsumption $X \not\sqsubseteq^? D \in 
  \Gamma$ is expanded
    by adding $E \not\sqsubseteq^? D$ to $\Gamma$ for every $E\in S_X$.
\end{itemize}
A (non-failing) application of a rule of our algorithm does the following:
\begin{itemize}
\item it solves exactly one unsolved subsumption or dissubsumption,
\item it may extend the current assignment $S$ by adding elements of \NV to
  some set~$S_X$,
\item it may introduce new flat subsumptions or dissubsumptions built from 
elements of \At, and
\item it keeps $\Gamma$ expanded \wrt all variables~$X$.
\end{itemize}
Subsumptions and dissubsumptions are only added by a rule application or by 
expansion if they are not already present in $\Gamma$.
If a new subsumption or dissubsumption is added to $\Gamma$, it is marked as 
unsolved, unless it is initially solved (because of its form).
Solving subsumptions and dissubsumptions is mostly independent, except for
expanding~$\Gamma$, which can add new unsolved subsumptions and dissubsumptions
at the same time, and may be triggered by solving a subsumption or a
dissubsumption.

The rules of our algorithm are depicted in
Figures~\ref{fig:eager-rules} and~\ref{fig:nondet-rules}.
The rules dealing with subsumptions are essentially the same as
in~\cite{BaBM-DL12}; note that several of these may be applicable to the same
subsumption.
In the rule \rulefont{Local Extension}, the left-hand side of \sub may be a
variable, and then \sub is of the form $Y \nsqsubseteq^? X$. This
dissubsumption is not initially solved, because $X$ is not a non-variable atom.
\begin{figure}[tb]
  \centering
  \begin{algrule}[Eager Ground Solving]
    \condition This rule applies to
      $\sub=C_1\sqcap\dots\sqcap C_n \bowtie^? D$ with
      ${\bowtie}\in\{\sqsubseteq,\nsqsubseteq\}$ if \sub is ground.

    \action If $C_1\sqcap\dots\sqcap C_n\bowtie D$, then mark \sub as
      \emph{solved}; otherwise, \emph{fail}.
  \end{algrule}

  \rulespace
  \begin{algrule}[Eager Solving]
    \condition This rule applies to
      $\sub=C_1\sqcap\dots\sqcap C_n \bowtie^? D$ with
      ${\bowtie}\in\{\sqsubseteq,\nsqsubseteq\}$ if there is an index
      $i \in \{1, \dots, n\}$ such that $C_i = D$ or $C_i$ is a variable with
      $D \in S_{C_i}$.

    \action If ${\bowtie}={\sqsubseteq}$, then mark \sub as \emph{solved};
      otherwise, \emph{fail}.
  \end{algrule}

  \rulespace
  \begin{algrule}[Eager Extension]
    \condition This rule applies to
      $\sub=C_1\sqcap\dots\sqcap C_n \sqsubseteq^? D \in \Gamma$ if there is an
      index $i \in \{1, \dots, n\}$ such that $C_i$ is a variable and
      $\{C_1,\dots, C_n\}\setminus \{C_i\} \subseteq S_{C_i}$.

    \action Add $D$ to $S_{C_i}$. If this makes $S$ cyclic, then \emph{fail}.
      Otherwise, expand~$\Gamma$ w.r.t.~$C_i$ and mark \sub as \emph{solved}.
  \end{algrule}

  \rulespace
  \begin{algrule}[Eager Top Solving]
    \condition This rule applies to
      $\sub=C\nsqsubseteq^?\top\in\Gamma$.

    \action \emph{Fail}.
  \end{algrule}

  \rulespace
  \begin{algrule}[Eager Left Decomposition]
    \condition This rule applies to
      $\sub=C_1\sqcap\dots\sqcap C_n \not\sqsubseteq^? D \in \Gamma$ if
      $n\neq 1$ and $D$ is a non-variable atom.

    \action Mark \sub as \emph{solved} and, for each $i \in \{1, \dots, n\}$,
      add $C_i \not\sqsubseteq^? D$ to~$\Gamma$ and expand~$\Gamma$
      w.r.t.~$C_i$ if $C_i$ is a variable.
  \end{algrule}

  \rulespace
  \begin{algrule}[Eager Atomic Decomposition]
    \condition This rule applies to $\sub=C \not\sqsubseteq^?D\in\Gamma$
      if $C$ and $D$ are non-variable atoms.

    \action Apply the first case that matches~\sub:
      \begin{enumerate}[label=\alph*)]
        \item if $C$ and $D$ are ground and $C\sqsubseteq D$, then \emph{fail};
        \item if $C$ and $D$ are ground and $C\nsqsubseteq D$, then mark \sub
          as \emph{solved};
        \item if $C$ or $D$ is a constant, then mark \sub as
          \emph{solved};
        \item if $C = \exists r. C'$ and $D = \exists s.D'$ with $r\neq s$,
          then mark \sub as \emph{solved};
        \item if $C = \exists r. C'$ and $D = \exists r.D'$, then add
           $C' \not\sqsubseteq^? D'$ to~$\Gamma$, expand~$\Gamma$ w.r.t.~$C'$
           if $C'$ is a variable and $D'$ is not a variable, and mark \sub as
           \emph{solved}.
      \end{enumerate}
  \end{algrule}
  \vspace*{-2ex}
  \caption{Eager rules for Algorithm~\ref{alg:rules-local-disunification}}
  \label{fig:eager-rules}
\end{figure}
\begin{figure}[tb]
  \centering
  \begin{algrule}[Decomposition]
    \condition This rule applies to
      $\sub= C_1 \sqcap \dots \sqcap C_n \sqsubseteq^? \exists s. D \in \Gamma$
      if there is an index $i \in \{1, \dots, n\}$ such that
      $C_i = \exists s.C$.

    \action Choose such an index~$i$, add $C \sqsubseteq^? D$ to~$\Gamma$,
      expand $\Gamma$ w.r.t.~$D$ if $D$ is a variable, and mark \sub as
      \emph{solved}.
  \end{algrule}

  \rulespace
  \begin{algrule}[Extension\vphantom{g}]
    \condition This rule applies to
      $\sub= C_1 \sqcap \dots \sqcap C_n \sqsubseteq^? D \in \Gamma$ if there
      is an index $i \in \{1, \dots, n\}$ such that $C_i$ is a variable.

    \action Choose such an index~$i$ and add~$D$ to~$S_{C_i}$. If this makes
      $S$ cyclic, then \emph{fail}. Otherwise, expand~$\Gamma$ w.r.t.~$C_i$
      and mark \sub as \emph{solved}.
  \end{algrule}

  \rulespace
  \begin{algrule}[Local Extension\vphantom{g}]
    \condition This rule applies to
     $\sub= C \nsqsubseteq^? X \in \Gamma$ if $X$ is a variable.

    \action Choose a non-variable atom~$D$ and add $D$ to~$S_X$. If this makes
      $S$ cyclic, then \emph{fail}. Otherwise, add $C \nsqsubseteq^? D$
      to~$\Gamma$, expand~$\Gamma$ w.r.t.~$X$, expand~$\Gamma$ w.r.t.~$C$ if
      $C$ is a variable, and mark \sub as \emph{solved}.
  \end{algrule}
  \vspace*{-2ex}
  \caption{Nondeterministic rules for
    Algorithm~\ref{alg:rules-local-disunification}}
  \label{fig:nondet-rules}
\end{figure}

\begin{algo}
\label{alg:rules-local-disunification}
Let $\Gamma_0$ be a flat disunification problem. We initialize
$\Gamma := \Gamma_0$ and $S_X := \emptyset$  for all variables~$X$.
While $\Gamma$ contains an unsolved element, do the following:
\begin{enumerate}
  \item \rulefont{Eager rule application:} If any eager rules
    (Figure~\ref{fig:eager-rules}) are applicable to some unsolved element
    $\sub\in\Gamma$, apply an arbitrarily chosen one to~\sub.
    If the rule application fails, return ``failure''.
  \item \rulefont{Nondeterministic rule application:} If no eager rule is
    applicable, let \sub be an unsolved subsumption or dissubsumption
    in~$\Gamma$. If one of the nondeterministic rules
    (Figure~\ref{fig:nondet-rules}) applies to~\sub, choose one and apply it.
    If none of these rules apply to~\sub or the rule application fails, return
    ``failure''.
\end{enumerate}
Once all elements of~$\Gamma$ are solved, return the substitution~$\sigma_S$
that is induced by the current assignment.
\end{algo}
As with Algorithm~\ref{alg:dismatching-reduction}, the choice which
(dis)subsumption to consider next and which eager rule to apply is
\emph{don't care} nondeterministic, while the choice of which nondeterministic
rule to apply and the choices inside the rules are \emph{don't know}
nondeterministic.
Each of these latter choices may result in a different solution~$\sigma_S$.

\subsection{Termination}

\begin{lem}
Every run of Algorithm~\ref{alg:rules-local-disunification} terminates in time
polynomial in the size of~$\Gamma_0$.
\end{lem}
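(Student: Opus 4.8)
The plan is to bound the number of rule applications (and the size of the objects manipulated) polynomially in $|\Gamma_0|$. The key observation is that every subsumption and dissubsumption ever introduced — by a rule application or by expansion — is \emph{flat} and built entirely from atoms in \At and variables in \Var, both of which are fixed, finite sets determined by~$\Gamma_0$ with $|\At|,|\Var|$ linear in $|\Gamma_0|$. (We must check this invariant is preserved by each rule: the decomposition rules break existential restrictions $\exists r.C'$ occurring in \At into their immediate subterms $C',D'$, which are themselves in \At by closure under subterms; \rulefont{Extension} and the eager extension rules only move elements of \NV into the sets $S_X$; \rulefont{Local Extension} adds a dissubsumption $C\nsqsubseteq^? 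D$ with $D\in\NV$; and expansion only adds subsumptions $C_1\sqcap\dots\sqcap C_n\sqsubseteq^? E$ and dissubsumptions $E\nsqsubseteq^? D$ with $E\in\NV$.) Consequently the left-hand side of any subsumption/dissubsumption is a conjunction of elements of $\At\cup\{\top\}$, so there are at most $2^{|\At|}\cdot\dots$ — no, we need a polynomial bound, so the right argument is: left-hand sides that actually arise are \emph{sub-conjunctions of conjunctions already present in $\Gamma_0$ or produced by decomposition}, of which there are polynomially many; more simply, each left-hand side is determined by a set of atoms, but the ones produced are always either original left-hand sides, singletons $\{C_i\}$, or conjunctions appearing inside the \rulefont{Flattening}-style/expansion steps — all drawn from a pool of size polynomial in $|\Gamma_0|$.

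**First I would** make this precise by defining the finite universe $\mathcal{U}$ of all subsumptions and dissubsumptions that can ever occur: finitely many possible left-hand sides (argued as above), at most $|\At|$ possible right-hand atoms, times the two relation symbols, giving $|\mathcal{U}|$ polynomial in $|\Gamma_0|$. Since a subsumption or dissubsumption is never added to $\Gamma$ if it is already present, and no rule ever removes an element, $|\Gamma|$ only grows and is bounded by $|\mathcal{U}|$. **Next I would** argue that each unsolved element is solved by the unique rule application that targets it, and that a rule application never un-solves anything: it only marks its target \sub as solved, possibly enlarges some $S_X$, and possibly adds \emph{new} (hence previously-absent, hence unsolved) elements plus the expansion-induced ones. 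Each $S_X\subseteq\NV$ can be enlarged at most $|\NV|$ times. **Then** the measure argument: consider the pair (number of solved elements, total size $\sum_X|S_X|$) — but enlarging $S_X$ triggers expansion which may add fresh unsolved elements, so this alone is not monotone. The clean measure is a lexicographic or summed quantity such as $\bigl(|\mathcal U|-|\Gamma|\bigr) + \bigl(\text{number of unsolved elements}\bigr) + \bigl(|\NV|\cdot|\Var| - \sum_X|S_X|\bigr)$; each rule application strictly decreases it: it either adds a genuinely new element of $\mathcal U$ (decreasing the first term while not increasing the third), or enlarges some $S_X$ (decreasing the third term, and the finitely many expansion-added fresh elements it triggers are bounded because those too live in $\mathcal U$), or just solves \sub without adding anything or changing $S$ (decreasing the unsolved count). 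One has to verify that whenever the third term decreases, the extra unsolved elements introduced by expansion do not make the first term increase \emph{on balance} — here one uses that expansion adds at most polynomially many elements total over the whole run, since every element it could add lies in the fixed pool $\mathcal U$ and is added at most once.

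**The main obstacle** I expect is exactly this interaction between expansion and the assignment sets: a single \rulefont{Extension} or \rulefont{Eager Extension} step that puts one new atom into $S_X$ can, via "keep $\Gamma$ expanded \wrt all variables", inject many new unsolved (dis)subsumptions at once, so one cannot use a naive "solved count increases" argument. The fix is to observe that the \emph{total} number of distinct (dis)subsumptions that can ever be created across the entire run is bounded by $|\mathcal U|$ (polynomial), so the total work of solving them, including all expansions, is polynomial regardless of the order; combined with the fact that each $S_X$ monotonically grows and is capped at $|\NV|$, the run must halt. **Finally** I would note that each individual rule application — checking its condition, the acyclicity test on $>_S$, the ground subsumption checks (polynomial by~\cite{BaKM-IJCAI99}), and performing the expansion — costs only polynomial time in $|\Gamma_0|$, so the overall running time is the product of polynomially many applications with polynomial-time cost each, hence polynomial; and since the nondeterministic choices are all "don't know" but bounded (choosing an index among $n\le|\Gamma_0|$, or a non-variable atom among $|\NV|$), this bounds every run, establishing the lemma.
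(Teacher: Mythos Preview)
Your approach is essentially the paper's: bound the total number of distinct (dis)subsumptions that can ever appear by a polynomial, observe that each rule application solves exactly one and never unsolves anything, and conclude. However, you have wrapped this simple counting argument in an unnecessary and somewhat confused measure. Your own worry that ``the extra unsolved elements introduced by expansion'' might make ``the first term increase on balance'' is misplaced---the first term $|\mathcal U|-|\Gamma|$ can only \emph{decrease} when elements are added---and the fallback you invoke (``expansion adds at most polynomially many elements total over the whole run, since every element it could add lies in the fixed pool $\mathcal U$'') is already the entire proof: once you know $|\mathcal U|$ is polynomial and elements are never re-added, you are done, with no measure needed.

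The one place where you do need to be more precise is the left-hand-side count, where you correctly abandon the $2^{|\At|}$ bound but then hand-wave. The paper's observation is simply that every (dis)subsumption created by a rule (\rulefont{Decomposition}, \rulefont{Eager Left/Atomic Decomposition}, \rulefont{Local Extension}) has a \emph{single atom} from \At on each side, and every (dis)subsumption created by expansion has a left-hand side that is either a left-hand side already in $\Gamma_0$ or again a single atom (coming from a \rulefont{Decomposition}-created subsumption). That gives $O(|\At|^2)$ plus $O(|\Gamma_0|\cdot|\At|)$ possibilities, which is the polynomial bound you want.
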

\proof
Each rule application solves one subsumption or dissubsumption.
We show that only polynomially many subsumptions and dissubsumptions are
produced during a run of the algorithm, and thus there can be only polynomially
many rule applications during one run of the algorithm.

A new subsumption or dissubsumption may be created only by an application of
the rules \rulefont{Decomposition}, \rulefont{Eager Left Decomposition}, or
\rulefont{Eager Atomic Decomposition}, and then it is of the form
$C\sqsubseteq^?D$ or $C\nsqsubseteq^?D$, with $C,D\in\At$.
Obviously, there are only polynomially many such (dis)subsumptions.

Now, we consider (dis)subsumptions created by expanding~$\Gamma$. They can have
the following forms, where $D,E\in\NV$:
\begin{enumerate}
\item $C_1 \sqcap \dots \sqcap C_n \sqsubseteq^? E$, for $C_1 \sqcap 
\dots 
\sqcap C_n \sqsubseteq^? X$ in $\Gamma$,
\item  $E \nsqsubseteq^? D$, for $X \nsqsubseteq^? D$ in $\Gamma$.
\end{enumerate}
Dissubsumptions of the type~(2) are also of the form described above.
For the subsumptions of type~(1), note that $C_1 \sqcap \dots \sqcap C_n$
is either the left-hand side of a subsumption from the original
problem~$\Gamma_0$, or was created by a \rulefont{Decomposition} rule, in
which case we have $n=1$.
Thus, there can also be at most polynomially many subsumptions of the first
type.

Finally, each rule application takes at most polynomial time.
\qed

\subsection{Soundness}

Assume that a run of the algorithm terminates with success, \ie all
subsumptions and dissubsumptions are solved.
Let $\hat{\Gamma}$ be the set of all subsumptions and dissubsumptions produced 
by this run, $S$ be the final assignment, and $\sigma_S$ the induced
substitution (see Section~\ref{sec:disunification}).
Observe that the algorithm never removes elements from the current
disunification problem, but only marks them as solved, and hence $\hat{\Gamma}$
contains~$\Gamma_0$.
To show that $\sigma_S$ solves~$\hat{\Gamma}$, and thus~$\Gamma_0$, we use
induction on the following order on (dis)subsumptions.

\begin{defi}
  Consider any (dis)subsumption~\sub of the form
  $C_1\sqcap\dots\sqcap C_n\sqsubseteq^?C_{n+1}$ or
  $C_1\sqcap\dots\sqcap C_n\nsqsubseteq^?C_{n+1}$ in~$\hat{\Gamma}$.
  \begin{itemize}
    \item We define $m(\sub) := (m_1(\sub), m_2(\sub))$, where
      \begin{itemize}
        \item $m_1(\sub):=\{X_1,\dots,X_m\}$ is the multiset containing all
          occurrences of variables in the concept terms
          $C_1,\dots,C_n,C_{n+1}$ (and hence $m_1(\sub)=\emptyset$ if \sub is
          ground);
        \item $m_2(\sub):=|\sub|$ is the size of~\sub, \ie the number of
          symbols in~\sub (see the proof of
          Lemma~\ref{lem:dismatching-termination}).
      \end{itemize}
    \item The strict partial order $\succ$ on such pairs is the lexicographic
      order, where the second components are compared \wrt the usual order
      on natural numbers, and the first components are compared \wrt the
      multiset extension of~$>_S$~\cite{BaNi-99}.
    \item We extend $\succ$ to $\hat{\Gamma}$ by setting $\sub_1\succ\sub_2$
      iff $m(\sub_1)\succ m(\sub_2)$.
  \end{itemize}
\end{defi}
Since multiset extensions and lexicographic products of well-founded strict
partial orders are again well-founded~\cite{BaNi-99}, $\succ$ is a well-founded
strict partial order on $\hat{\Gamma}$.

\begin{lem}
  The substitution~$\sigma_S$ is a solution of $\hat{\Gamma}$, and thus also of
  its subset~$\Gamma_0$.
\end{lem}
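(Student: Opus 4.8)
The plan is to prove that $\sigma_S$ solves every (dis)subsumption in $\hat\Gamma$ by well-founded induction on the order $\succ$ just defined. For a given $\sub\in\hat\Gamma$, we distinguish whether $\sub$ is a subsumption or a dissubsumption, and which rule (or the fact that it is initially solved) was responsible for marking it solved; in each case we identify the (dis)subsumptions that the rule added to $\Gamma$ (together with the expansion steps it triggered) and argue that they are $\prec\sub$, so that the induction hypothesis applies to them. The subsumption cases are essentially the soundness argument for the goal-oriented unification algorithm of~\cite{BaMo-LMCS10,BaBM-DL12}, so the real work is in the dissubsumption cases and in checking the order decreases carefully, in particular the interaction with expansion.

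First I would handle the \emph{initially solved} elements. If $\sub = C_1\sqcap\dots\sqcap C_n\sqsubseteq^? X$ then, because $\Gamma$ is kept expanded w.r.t.\ $X$, for every $E\in S_X$ the subsumption $C_1\sqcap\dots\sqcap C_n\sqsubseteq^? E$ lies in $\hat\Gamma$; each such element has the same $m_1$ but strictly smaller $m_2$ (a single atom $E$ replaces the conjunction and, more importantly, $E$ is a non-variable atom so no variable is gained), hence is $\prec\sub$, and by induction $\sigma_S(C_1\sqcap\dots\sqcap C_n)\sqsubseteq\sigma_S(E)$. Since $\sigma_S(X)=\bigsqcap_{E\in S_X}\sigma_S(E)$, Lemma~\ref{lem:sub} gives $\sigma_S(C_1\sqcap\dots\sqcap C_n)\sqsubseteq\sigma_S(X)$. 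The case $\sub = X\nsqsubseteq^? D$ with $D$ a non-variable atom is dual: for every $E\in S_X$ the dissubsumption $E\nsqsubseteq^? D$ is in $\hat\Gamma$ and is $\prec\sub$ (replacing the variable $X$ by the non-variable atom $E$ strictly decreases $m_1$ w.r.t.\ the multiset extension of $>_S$, since every variable in $E$ is $<_S X$), so $\sigma_S(E)\nsqsubseteq\sigma_S(D)$ for each top-level atom $E$ of $\sigma_S(X)$, and Lemma~\ref{lem:dissub} yields $\sigma_S(X)\nsqsubseteq\sigma_S(D)$.

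Next I would go through the rules. For the subsumption rules \rulefont{Eager Ground Solving}, \rulefont{Eager Solving}, \rulefont{Eager Extension}, \rulefont{Decomposition}, and \rulefont{Extension}, the arguments are as in~\cite{BaBM-DL12}: \rulefont{Decomposition} replaces $C_1\sqcap\dots\sqcap C_n\sqsubseteq^?\exists s.D$ (with some $C_i=\exists s.C$) by $C\sqsubseteq^? D$, whose $m_1$ is no larger and whose $m_2$ is strictly smaller, and $\sigma_S(C_i)=\exists s.\sigma_S(C)\sqsubseteq\exists s.\sigma_S(D)$ by Lemma~\ref{lem:sub}; \rulefont{Extension} and \rulefont{Eager Extension} add $D$ to $S_{C_i}$ and trigger expansion w.r.t.\ $C_i$, and solvedness of $\sub$ then follows because $\sigma_S(C_i)\sqsubseteq\sigma_S(D)$ directly from the definition of $\sigma_S$. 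For the dissubsumption rules: \rulefont{Eager Top Solving} never fires on a solved run; \rulefont{Eager Ground Solving} and \rulefont{Eager Solving} are immediate from the side condition; \rulefont{Eager Left Decomposition} replaces $C_1\sqcap\dots\sqcap C_n\nsqsubseteq^? D$ by the $C_i\nsqsubseteq^? D$, each of which is $\prec\sub$ and, by induction, satisfies $\sigma_S(C_i)\nsqsubseteq\sigma_S(D)$, so Lemma~\ref{lem:dissub} applies; \rulefont{Eager Atomic Decomposition} is a direct application of Lemma~\ref{lem:dissub-atoms}, case~e) adding $C'\nsqsubseteq^? D'$ which is smaller in $m_2$ (and not larger in $m_1$); and \rulefont{Local Extension}, applied to $C\nsqsubseteq^? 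X$, adds $D$ to $S_X$ and adds $C\nsqsubseteq^? D$ — this new dissubsumption has strictly smaller $m_1$ (the variable $X$ is replaced by the non-variable atom $D$, all of whose variables are below $X$ in $>_S$ since $S$ stays acyclic), so by induction $\sigma_S(C)\nsqsubseteq\sigma_S(D)$, and since $D\in S_X$ gives $\sigma_S(D)$ as one of the top-level conjuncts of $\sigma_S(X)$, Lemma~\ref{lem:dissub} yields $\sigma_S(C)\nsqsubseteq\sigma_S(X)$. In every case expansion only adds elements strictly below the one being processed, so the induction is well-founded.

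The main obstacle I anticipate is the bookkeeping around expansion and the $\succ$-decrease: one must be sure that whenever a rule marks $\sub$ solved and triggers ``expand $\Gamma$ w.r.t.\ $C_i$'', all the elements thereby introduced are genuinely $\prec\sub$, and that $>_S$ used in $m_1$ is the \emph{final} assignment's order (so that an atom's variables are really below the replaced variable) — this needs the fact, provable by monitoring rule applications, that $S$ only grows and stays acyclic, so $>_S$ on the final $S$ still witnesses all the needed strict decreases. A secondary subtlety is that \rulefont{Eager Atomic Decomposition} case~e) may itself trigger an expansion (when $C'$ is a variable and $D'$ is not), which must also be checked to stay below $\sub$; since that expansion produces dissubsumptions $E\nsqsubseteq^? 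D'$ with $E\in S_{C'}$ and these are $\prec C'\nsqsubseteq^? D'\prec\sub$, it is fine. Once all these order checks are in place, the statement $\sigma_S\models\hat\Gamma\supseteq\Gamma_0$ follows by the well-founded induction, as $\succ$ is a well-founded strict partial order on the (finite) set $\hat\Gamma$.
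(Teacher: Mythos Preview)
Your approach is essentially the paper's: well-founded induction on $\succ$, handling initially solved elements via expansion and rule-solved elements by the rule's output. The case analysis for the rules is correct.

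However, there is a concrete error in the initially solved subsumption case. For $\sub = C_1\sqcap\dots\sqcap C_n\sqsubseteq^? X$ and the expansion $\sub_E = C_1\sqcap\dots\sqcap C_n\sqsubseteq^? E$ with $E\in S_X$, you claim that $m_1(\sub_E)=m_1(\sub)$ and $m_2(\sub_E)<m_2(\sub)$. Neither holds. The left-hand side is unchanged, so no conjunction is ``replaced''; on the right, $X$ (size~$1$) is replaced by the non-variable atom~$E$, which may have size~$2$ (e.g.\ $E=\exists r.A$), so $m_2$ can \emph{increase}. The decrease that actually makes $\sub_E\prec\sub$ is in $m_1$: the variable occurrence~$X$ is removed and replaced by at most one variable occurring in~$E$, and any such variable~$Y$ satisfies $X>_S Y$ by definition of~$>_S$ (since $E\in S_X$). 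Hence $m_1(\sub)>m_1(\sub_E)$ in the multiset extension of~$>_S$. This is exactly the argument you correctly give for the dissubsumption case $X\nsqsubseteq^? D$; you just need to apply it here as well.

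A minor point: your remark that ``expansion only adds elements strictly below the one being processed'' is not needed and not generally true (expanding w.r.t.\ $C_i$ after \rulefont{Extension} can create subsumptions unrelated to~$\sub$ in the order). What matters is that, for each $\sub$, the \emph{specific} (dis)subsumptions you invoke the induction hypothesis on are $\prec\sub$; you have already verified this for each rule. Well-foundedness of the induction comes from $\succ$ being well-founded, not from any global monotonicity of rule applications.
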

\proof
Consider a (dis)subsumption $\sub\in\hat{\Gamma}$ and assume that $\sigma_S$
solves all $\sub'\in\hat{\Gamma}$ with $\sub'\prec\sub$.
Since \sub is solved, either it has been solved by a rule application or it was
initially solved.

If \sub is a dissubsumption that is initially solved, then
$\sub = X \nsqsubseteq^? D$, where $X$ is a variable and $D$ a non-variable
atom. By expansion, for every $E\in S_X$, we have
$\sub_E=E\nsqsubseteq^?D\in\hat{\Gamma}$. We know that $\sub\succ\sub_E$,
because $E$ may only contain a variable strictly smaller than~$X$, and thus
$m_1(\sub)>m_1(\sub_E)$. Hence by induction, $\sigma_S$ solves all
dissubsumptions~$\sub_E$ with $E\in S_X$.
Since the top-level atoms of~$\sigma_S(X)$ are exactly those of the
form~$\sigma_S(E)$ for $E\in S_X$, by Lemma~\ref{lem:dissub} we know that
$\sigma_S$ also solves~\sub.

If \sub is a subsumption that is initially solved, then
$\sub=C_1\sqcap\dots\sqcap C_n\sqsubseteq^?X$, where $X$ is a variable. By
expansion, for every $E\in S_X$, there is a subsumption
$\sub_E=C_1\sqcap\dots\sqcap C_n\sqsubseteq^?E$ in~$\hat{\Gamma}$. We have
$\sub_E\prec\sub$ since $m_1(\sub_E)<m_1(\sub)$, for every $E\in S_X$.
Hence, by induction all subsumptions~$\sub_E$ are solved by~$\sigma_S$.
By the definition of $\sigma_S(X)$ and Lemma~\ref{lem:sub}, $\sigma_S$
solves~\sub.

If \sub was solved by a rule application, we consider which rule was applied.
\begin{itemize}
  \item \rulefont{Eager Ground Solving:} Then \sub is ground and holds under
    any substitution.
  \item \rulefont{Eager Solving:} Since this rule fails for all
    dissubsumptions to which it is applicable, but we assumed that the run was
    successful, we have $\sub=C_1\sqcap\dots\sqcap C_n\sqsubseteq^?D$ and
    $\sigma_S(D)$ occurs on the top-level of
    $\sigma_S(C_1)\sqcap\dots\sqcap\sigma_S(C_n)$. Hence, $\sigma_S$ solves
    the subsumption.
  \item \rulefont{(Eager) Extension:} Then
    $\sub=X\sqcap C_1\sqcap\dots\sqcap C_n\sqsubseteq^?D$ for a variable~$X$
    and $D\in S_X$. By the definition of~$\sigma_S$, we have
    $\sigma_S(X)\sqsubseteq\sigma_S(D)$ and thus $\sigma_S$ solves~\sub.
  \item \rulefont{Eager Top Solving:} This rule cannot have been applied since
    we assumed the run to be successful.
  \item \rulefont{Eager Left Decomposition:} Then either
    $\sub = C_1 \sqcap \dots \sqcap C_n \nsqsubseteq^?D$ with $n>1$, or
    $\sub=\top\nsqsubseteq^?D$, for a non-variable atom~$D$. In the latter
    case, $\sigma_S$ solves~\sub by Lemma~\ref{lem:dissub}. In the former
    case, for each $i\in\{1,\dots,n\}$ we have
    $\sub_i := C_i \nsqsubseteq^?D\in\hat{\Gamma}$. Notice that
    $m_1(\sub) \ge  m_1(\sub_i)$ and $m_2(\sub) > m_2(\sub_i)$ and hence
    $\sub \succ \sub_i$. Thus, by induction we have that
    $\sigma_S(C_i)\nsqsubseteq\sigma_S(D)$. By applying Lemma~\ref{lem:dissub}
    twice, we conclude that
    $\sigma_S(C_1) \sqcap \dots \sqcap \sigma_S(C_n) \nsqsubseteq 
      \sigma_S(D)$.
  \item \rulefont{Eager Atomic Decomposition:} Then
    $\sub = C \nsqsubseteq^? D$, where $C$ and $D$ are non-variable atoms.
    Since we assume that the run was successful, Case~a) cannot apply.
    In Cases b)--d), $\sigma_S$ must solve~\sub by
    Lemma~\ref{lem:dissub-atoms}.
    Finally, in Case~e), we have $C=\exists r.C'$, $D=\exists r.D'$, and
    $\sub'=C'\nsqsubseteq^?D'\in\hat{\Gamma}$. Notice that
    $\sub\succ\sub'$, because $m_1(\sub)=m_1(\sub')$ and
    $m_2(\sub)>m_2(\sub')$. Hence, by induction we get
    $\sigma_S(C')\nsqsubseteq\sigma_S(D')$ and thus
    $\sigma_S(C)\nsqsubseteq\sigma_S(D)$ by Lemma~\ref{lem:dissub-atoms}.
  \item \rulefont{Decomposition:} Then
    $\sub=C_1\sqcap\dots\sqcap C_n\sqsubseteq^?\exists s.D$ with
    $C_i=\exists s.C$ for some $i\in\{1,\dots,n\}$ and we have
    $\sub'=C\sqsubseteq^?D\in\hat{\Gamma}$. We know that $\sub'\prec\sub$,
    because $m_1(\sub')\le m_1(\sub)$ and $m_2(\sub')<m_2(\sub)$. By
    induction, we get $\sigma_S(C)\sqsubseteq\sigma_S(D)$, and hence
    $\sigma_S$ solves~\sub.
  \item \rulefont{Local Extension:} Then
    $\sub=C_1\sqcap\dots\sqcap C_n\nsqsubseteq^?X$
    and there is a non-variable atom $D\in S_X$ such that
    $\sub'=C_1\sqcap\dots\sqcap C_n\nsqsubseteq^?D\in\hat{\Gamma}$. We have
    $\sub \succ \sub'$, because $D$ may only contain a variable strictly
    smaller than $X$, and thus $m_1(\sub) > m_1(\sub')$. Hence by induction,
    $\sigma$ solves~$\sub'$. Since $\sigma_S(D)$ is a top-level atom
    of~$\sigma_S(X)$, $\sigma_S$ solves~\sub by Lemma~\ref{lem:dissub}.
    \qed
\end{itemize}

\subsection{Completeness}

Assume now that $\Gamma_0$ has a local solution~$\sigma$.
We show that $\sigma$ can guide the choices of
Algorithm~\ref{alg:rules-local-disunification} to obtain a local
solution~$\sigma'$ of~$\Gamma_0$ such that, for every variable $X$, we have
$\sigma(X) \sqsubseteq \sigma'(X)$.
The following invariants will be maintained throughout the run of the algorithm
for the current set of (dis)subsumptions~$\Gamma$ and the current
assignment~$S$:
\begin{enumerate}[label=(\Roman*), widest=II]
  \item\label{i}
    $\sigma$ is a solution of~$\Gamma$.
  \item\label{ii}
    For each $D\in S_X$, we have $\sigma(X)\sqsubseteq\sigma(D)$.
\end{enumerate}
By Lemma~\ref{lem:sub}, chains of the form
$\sigma(X_1)\sqsubseteq\sigma(\exists r_1.X_2)$, \dots
$\sigma(X_{n-1})\sqsubseteq\sigma(\exists r_{n-1}.X_n)$ with $X_1=X_n$ are
impossible, and thus invariant~\ref{ii} implies that $S$ is acyclic.
Hence, if extending $S$ during a rule application preserves this invariant,
this extension will not cause the algorithm to fail.

\begin{lem}
  The invariants are maintained by the operation of expanding~$\Gamma$.
\end{lem}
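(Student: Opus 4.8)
The plan is to show that neither of the two clauses defining the expansion operation can violate invariant~\ref{i} or invariant~\ref{ii}. Since expansion never touches~$S$ itself (it only adds new subsumptions or dissubsumptions to~$\Gamma$), invariant~\ref{ii} is trivially preserved: the sets $S_X$ are unchanged, so if $\sigma(X)\sqsubseteq\sigma(D)$ held for every $D\in S_X$ before, it still holds afterwards. The entire content of the lemma is therefore invariant~\ref{i}, namely that $\sigma$ still solves every element of the expanded~$\Gamma$; since the old elements are untouched and already solved by~$\sigma$ by hypothesis, it suffices to check the two kinds of newly added (dis)subsumptions.

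First I would treat the subsumption case. Expansion w.r.t.\ a variable~$X$ takes an initially solved subsumption $\sub = C_1\sqcap\dots\sqcap C_n\sqsubseteq^?X$ in~$\Gamma$ and, for every $E\in S_X$, adds $C_1\sqcap\dots\sqcap C_n\sqsubseteq^?E$. By invariant~\ref{i} we have $\sigma(C_1\sqcap\dots\sqcap C_n)\sqsubseteq\sigma(X)$, and by invariant~\ref{ii} we have $\sigma(X)\sqsubseteq\sigma(E)$ since $E\in S_X$. Transitivity of subsumption gives $\sigma(C_1\sqcap\dots\sqcap C_n)\sqsubseteq\sigma(E)$, so $\sigma$ solves the new subsumption. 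Next I would treat the dissubsumption case, which is the dual argument: expansion takes an initially solved dissubsumption $X\nsqsubseteq^?D$ with $D$ a non-variable atom and, for every $E\in S_X$, adds $E\nsqsubseteq^?D$. Here invariant~\ref{i} gives $\sigma(X)\nsqsubseteq\sigma(D)$, and invariant~\ref{ii} gives $\sigma(E)\sqsubseteq\sigma(X)$; if $\sigma(E)\sqsubseteq\sigma(D)$ were to hold, transitivity would yield $\sigma(X)\sqsubseteq\sigma(D)$, contradicting invariant~\ref{i}. Hence $\sigma(E)\nsqsubseteq\sigma(D)$ and $\sigma$ solves the new dissubsumption as well.

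**The main subtlety** is not any deep obstacle but making sure the direction of the subsumption in invariant~\ref{ii} lines up correctly with the two expansion clauses: for subsumptions we chain $C_1\sqcap\dots\sqcap C_n\sqsubseteq^?X\sqsubseteq^?E$, so invariant~\ref{ii} is used in its stated direction $\sigma(X)\sqsubseteq\sigma(E)$; for dissubsumptions the useful reading is the same inclusion $\sigma(E)\sqsubseteq\sigma(X)$ applied contrapositively. Both are immediate from the statement of invariant~\ref{ii}. I would also remark in passing that, although expansion may add several new elements at once and may be triggered by solving either a subsumption or a dissubsumption, the argument is entirely local to each added element, so no bookkeeping across elements is needed. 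The whole proof is thus two short applications of transitivity of~$\sqsubseteq$ (Section~\ref{sec:subsumption}), one in positive and one in contrapositive form.
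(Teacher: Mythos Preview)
Your approach is exactly the paper's: observe that expansion leaves $S$ untouched so invariant~\ref{ii} is automatic, and then verify invariant~\ref{i} for each of the two kinds of added (dis)subsumptions by a single appeal to transitivity of~$\sqsubseteq$.

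There is, however, a slip in the dissubsumption case. You write that invariant~\ref{ii} gives $\sigma(E)\sqsubseteq\sigma(X)$, and repeat this direction in your ``main subtlety'' paragraph. That is backwards: invariant~\ref{ii} states $\sigma(X)\sqsubseteq\sigma(D)$ for every $D\in S_X$, so for $E\in S_X$ you get $\sigma(X)\sqsubseteq\sigma(E)$, not the reverse. Your transitivity step (``if $\sigma(E)\sqsubseteq\sigma(D)$ held, then $\sigma(X)\sqsubseteq\sigma(D)$'') in fact requires precisely this correct direction $\sigma(X)\sqsubseteq\sigma(E)$; with the direction you wrote, the implication would not follow. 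So the argument you intend is right and matches the paper, but the stated premise needs to be flipped.
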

\proof
  Since expansion does not affect the assignment $S$, we have to check only
  invariant~\ref{i}.
  Consider a subsumption $\sub=C_1\sqcap\dots\sqcap C_n\sqsubseteq^?X$
  in~$\Gamma$, for which a new subsumption
  $\sub_E=C_1\sqcap\dots\sqcap C_n\sqsubseteq^?E$ is created because
  $E\in S_X$.
  By the invariants, $\sigma$ solves~\sub and $\sigma(X)\sqsubseteq\sigma(E)$.
  Hence by transitivity of subsumption, $\sigma$ also solves $\sub_E$, \ie
  invariant~\ref{i} is satisfied after adding~$\sub_E$ to~$\Gamma$.

  For a dissubsumption $\sub=X\nsqsubseteq^?D\in\Gamma$ and $E\in S_X$, a new
  dissubsumption $\sub_E=E\nsqsubseteq^?D$ is created. Since $\sigma$
  solves~\sub and $\sigma(X)\sqsubseteq\sigma(E)$ by invariant~\ref{ii}, we
  have $\sigma(E)\nsqsubseteq\sigma(D)$ by transitivity of subsumption, \ie
  $\sigma$ solves~$\sub_E$.
\qed

Now we show that if the invariants are satisfied, the eager rules maintain the
invariants and do not lead to failure.

\begin{lem}
  The application of an eager rule never fails and maintains the invariants.
\end{lem}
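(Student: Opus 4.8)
The plan is to go rule by rule through the eager rules of Figure~\ref{fig:eager-rules}, in each case checking that (a)~the rule does not invoke \emph{fail}, and (b)~the invariants~\ref{i} and~\ref{ii} still hold after the rule has been applied (including any subsumptions/dissubsumptions added by the subsequent expansion of~$\Gamma$, which is covered by the preceding lemma). Throughout, I would rely on the fact that $\sigma$ solves the selected (dis)subsumption~\sub before the rule is applied (invariant~\ref{i}), together with the characterizations of (dis)subsumption in Lemmata~\ref{lem:sub}--\ref{lem:dissub-atoms}.

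First I would dispatch the rules that do not touch~$S$ and only mark~\sub as solved or add ``decomposed'' consequences, so that only invariant~\ref{i} is at issue. For \rulefont{Eager Ground Solving}: since $\sigma$ solves~\sub and \sub is ground, $\sigma(\sub)=\sub$, so $C_1\sqcap\dots\sqcap C_n\bowtie D$ holds and the rule marks~\sub solved rather than failing. For \rulefont{Eager Solving}: if ${\bowtie}={\nsqsubseteq}$, then the existence of an index~$i$ with $C_i=D$ or $C_i$ a variable with $D\in S_{C_i}$ would force $\sigma(C_i)\sqsubseteq\sigma(D)$ (trivially, or by invariant~\ref{ii}), hence $\sigma(C_1\sqcap\dots\sqcap C_n)\sqsubseteq\sigma(D)$, contradicting invariant~\ref{i}; so this case cannot occur and the rule correctly only fires for subsumptions, where it just marks~\sub solved. \rulefont{Eager Top Solving} applies to $\sub=C\nsqsubseteq^?\top$, which $\sigma$ cannot solve since $\sigma(C)\sqsubseteq\top$ always; so by invariant~\ref{i} this rule is never applicable, hence never fails. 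For \rulefont{Eager Left Decomposition} with $\sub=C_1\sqcap\dots\sqcap C_n\nsqsubseteq^?D$, $n\neq1$, $D$ a non-variable atom: $\sigma(D)$ is an atom, so by Lemma~\ref{lem:dissub} (and Lemma~\ref{lem:dissub} applied to the singleton right-hand side) $\sigma(C_i)\nsqsubseteq\sigma(D)$ for every~$i$; thus each added $C_i\nsqsubseteq^?D$ is solved by~$\sigma$, preserving invariant~\ref{i} (the subsequent expansion is handled by the previous lemma). For \rulefont{Eager Atomic Decomposition} with $\sub=C\nsqsubseteq^?D$, $C,D$ non-variable atoms: since $\sigma(C)\nsqsubseteq\sigma(D)$ and both $\sigma(C),\sigma(D)$ are atoms, Lemma~\ref{lem:dissub-atoms} tells us which case applies --- in particular Case~a) is impossible because $C\sqsubseteq D$ would give $\sigma(C)\sqsubseteq\sigma(D)$, so the rule does not fail; in Cases~b)--d) nothing is added; and in Case~e) we have $C=\exists r.C'$, $D=\exists r.D'$ and, again by Lemma~\ref{lem:dissub-atoms}, $\sigma(C')\nsqsubseteq\sigma(D')$, so the added $C'\nsqsubseteq^?D'$ is solved by~$\sigma$.

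The remaining rule, \rulefont{Eager Extension}, is the only eager rule that modifies~$S$, so it is the one that could in principle cause a failure (by making~$S$ cyclic) and the one where invariant~\ref{ii} must be re-established; I expect this to be the main (though still short) obstacle. Here $\sub=C_1\sqcap\dots\sqcap C_n\sqsubseteq^?D\in\Gamma$, some $C_i$ is a variable~$X$ with $\{C_1,\dots,C_n\}\setminus\{C_i\}\subseteq S_X$, and the rule adds~$D$ to~$S_X$. Since $\sigma$ solves~\sub (invariant~\ref{i}), $\sigma(C_1)\sqcap\dots\sqcap\sigma(C_n)\sqsubseteq\sigma(D)$; and by invariant~\ref{ii} applied to the other conjuncts, $\sigma(X)\sqsubseteq\sigma(C_j)$ for every $C_j$ in $\{C_1,\dots,C_n\}\setminus\{X\}$, hence $\sigma(X)\sqsubseteq\sigma(C_1)\sqcap\dots\sqcap\sigma(C_n)\sqsubseteq\sigma(D)$. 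This is exactly invariant~\ref{ii} for the new element~$D$ of~$S_X$, so \ref{ii} is maintained; and as noted in the text just before the lemma, invariant~\ref{ii} implies acyclicity of~$S$ (otherwise Lemma~\ref{lem:sub} would yield an impossible chain $\sigma(Y)\sqsubseteq\sigma(\exists r.Y)$-type cycle), so the ``make $S$ cyclic'' branch of the rule is not taken and the rule does not fail. Invariant~\ref{i} is untouched by the assignment change, and the triggered expansion w.r.t.~$X$ preserves both invariants by the previous lemma; marking~\sub solved is harmless.

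Putting these cases together establishes that no eager rule application fails and that both invariants are maintained, which is the claim. The write-up is essentially a case analysis with one genuinely load-bearing computation (the \rulefont{Eager Extension} case); everything else is a direct appeal to Lemmata~\ref{lem:sub}--\ref{lem:dissub-atoms}, the transitivity/reflexivity of~$\sqsubseteq$, and the expansion lemma.
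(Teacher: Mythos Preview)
Your proposal is correct and follows essentially the same approach as the paper: a rule-by-rule case analysis appealing to Lemmata~\ref{lem:sub}--\ref{lem:dissub-atoms} and the invariants, with \rulefont{Eager Extension} singled out as the one case where invariant~\ref{ii} must be re-verified and acyclicity re-derived. The only difference is organizational---you group the non-$S$-modifying rules first and treat \rulefont{Eager Extension} last, whereas the paper handles the rules in the order of Figure~\ref{fig:eager-rules}---but the arguments are the same.
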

\proof
  There are six eager rules to consider:
  \begin{itemize}
    \item \rulefont{Eager Ground Solving:} By invariant~\ref{i}, $\sigma$
      solves all ground (dis)subsumptions in~$\Gamma$, and thus they must be
      valid. Therefore the rule cannot fail, and obviously it preserves the
      invariants.
    \item \rulefont{Eager Solving:} The rule does not affect the invariants. It
      could fail only in the case that $\Gamma$ contains a dissubsumption
      $\sub=C_1\sqcap\dots\sqcap C_n\nsqsubseteq^?D$ for which there exists an
      index $i\in\{1,\dots,n\}$ such that $C_i=D$ or $C_i$ is a variable and
      $D\in S_{C_i}$. By invariant~\ref{i} and Lemma~\ref{lem:sub}, the former
      case is impossible. In the latter case, invariant~\ref{ii} similarly
      yields a contradiction to invariant~\ref{i}.
    \item \rulefont{Eager Extension:} Consider any
      $C_1 \sqcap \dots \sqcap C_m \sqsubseteq^? D \in \Gamma$ such that
      there is an index $i \in \{1, \dots, n\}$ with $C_i = X\in\Var$ and
      $\{C_1, \dots, C_m\}\setminus \{X\} \subseteq S_X$.
      By the invariants and Lemma~\ref{lem:sub}, we have
      $\sigma(X)\sqsubseteq\sigma(C_1)\sqcap\dots\sqcap\sigma(C_m)
        \sqsubseteq\sigma(D)$, and thus adding~$D$ to~$S_X$ maintains
      invariant~\ref{ii}. Therefore, the application of the rule does not cause
      $S$ to be cyclic, and does not fail. Invariant~\ref{i} is not affected by
      this rule.
    \item \rulefont{Eager Top Solving:} By invariant~\ref{i}, this rule will
      never be applied since $\sigma(C)\nsqsubseteq^?\top$ is impossible by
      Lemma~\ref{lem:dissub}.
    \item \rulefont{Eager Left Decomposition:} This rule never fails.
      Furthermore, $S$ is not affected by the rule, and hence invariant~\ref{i}
      is preserved. Finally, if $\sigma$ solves
      $C_1\sqcap\dots\sqcap C_n\nsqsubseteq^?D$, then it must also solve
      $C_i \nsqsubseteq^? D$ for each $i \in \{1, \dots, n\}$ by
      Lemma~\ref{lem:dissub}.
    \item \rulefont{Eager Atomic Decomposition:} Case~a) cannot apply since
      $\sigma$ is a solution of~$\Gamma$. Invariant~\ref{ii} is not affected,
      because $S$ is not changed by these rules. The fact that
      invariant~\ref{i} is maintained in Case~e) follows from
      Lemma~\ref{lem:dissub-atoms}.
      \qed
  \end{itemize}

\noindent Now we show that the nondeterministic rules can be applied in such a way that
the invariants are maintained and the application does not lead to failure.

\begin{lem}
  If \sub is an unsolved (dis)subsumption of $\Gamma$ to which no eager rule
  applies, then there is a nondeterministic rule that can be successfully
  applied to~\sub while maintaining the invariants.
\end{lem}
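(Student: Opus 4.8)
I would prove this lemma by a straightforward case analysis on the three nondeterministic rules and the shape of the unsolved \sub. The key point is that since no eager rule applies to \sub, its syntactic form is already constrained: it is not ground, it does not trivially hold, its left-hand side (if a dissubsumption) is not a non-singleton conjunction against a non-variable atom, etc. So \sub is either a subsumption of the form $C_1\sqcap\dots\sqcap C_n\sqsubseteq^?D$ that is not initially solved (hence $D$ is a non-variable atom) and not eagerly solvable, or a dissubsumption $C\nsqsubseteq^?D$ that is not eagerly solvable — and since \rulefont{Eager Left Decomposition} and \rulefont{Eager Atomic Decomposition} do not apply, the only remaining case is $D=X$ a variable, \ie $\sub=C\nsqsubseteq^?X$. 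In each case I will exhibit a rule and a choice for it, verify its condition holds, and check that invariants~\ref{i} and~\ref{ii} survive.

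\textbf{The subsumption case.} Suppose $\sub=C_1\sqcap\dots\sqcap C_n\sqsubseteq^?D$ with $D$ a non-variable atom. I distinguish whether some $C_i$ is a variable. If so, \rulefont{Extension} applies; I need a sound choice of index. Using invariant~\ref{i} we have $\sigma(C_1)\sqcap\dots\sqcap\sigma(C_n)\sqsubseteq\sigma(D)$, and by Lemma~\ref{lem:sub} some top-level atom of some $\sigma(C_i)$ is subsumed by $\sigma(D)$ (or by a top-level atom of it). If that $C_i$ is a variable, choosing it for \rulefont{Extension} preserves invariant~\ref{ii}: $\sigma(C_i)\sqsubseteq\sigma(D)$ need not hold directly, so I must be careful — actually the clean argument is that since \rulefont{Eager Solving} and \rulefont{Eager Extension} do not apply, and $D$ is an atom, the subsumption is solved in the unification algorithm of~\cite{BaMo-LMCS10} by precisely \rulefont{Decomposition} or \rulefont{Extension}; I will reuse the completeness argument from that paper, which shows that $\sigma$ (a solution, hence in particular a unifier of the subsumptions) guides a successful choice for exactly these two rules while maintaining invariant~\ref{ii}. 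For \rulefont{Decomposition}, when some $C_i=\exists s.C$ and $D=\exists s.D'$: since $\sigma(C_i)=\exists s.\sigma(C)$ is a top-level atom of the left side and must subsume $\sigma(D)=\exists s.\sigma(D')$, Lemma~\ref{lem:sub} gives $\sigma(C)\sqsubseteq\sigma(D')$, so the new subsumption $C\sqsubseteq^?D'$ satisfies invariant~\ref{i}, and $S$ is untouched so invariant~\ref{ii} holds. If $D$ is a variable the subsumption would be initially solved, contradicting ``unsolved''; but expansion may already have created it — in any case an initially-solved subsumption would not be selected as ``unsolved to which no eager rule applies'', so this case does not arise.

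\textbf{The dissubsumption case.} The remaining case is $\sub=C\nsqsubseteq^?X$ with $X$ a variable (possibly $C$ itself a variable), to which \rulefont{Local Extension} applies. By invariant~\ref{i}, $\sigma(C)\nsqsubseteq\sigma(X)$, and by Lemma~\ref{lem:dissub} there is a top-level atom $D'$ of $\sigma(X)$ with $\sigma(C)\nsqsubseteq D'$. Since $\sigma=\sigma_{S'}$ is local for some acyclic assignment $S'$, each top-level atom of $\sigma(X)=\sigma_{S'}(X)$ is $\sigma_{S'}(E)$ for some $E\in S'_X\subseteq\NV$; pick such an $E$ with $\sigma(E)=\sigma_{S'}(E)=D'$ (up to the top-level-atom decomposition; more precisely, I pick the non-variable atom $E\in\NV$ of $\Gamma$ whose image is that top-level atom). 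Choosing this $E$ as the atom $D$ in \rulefont{Local Extension}: adding $E$ to $S_X$ preserves invariant~\ref{ii} because $\sigma(X)\sqsubseteq\sigma(E)$ holds (that top-level atom subsumes itself and hence $\sigma(X)\sqsubseteq\sigma(E)$ — wait, rather $\sigma(E)$ being a top-level atom of $\sigma(X)$ gives $\sigma(X)\sqsubseteq\sigma(E)$ by Lemma~\ref{lem:sub}), so $S$ stays acyclic and the rule does not fail; and the new dissubsumption $C\nsqsubseteq^?E$ satisfies invariant~\ref{i} since $\sigma(C)\nsqsubseteq D'=\sigma(E)$. Expansion is handled by the preceding lemma.

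\textbf{Main obstacle.} The delicate point is the dissubsumption case: I must produce a \emph{non-variable atom of $\Gamma$} — an element of $\NV$ — whose $\sigma$-image equals the offending top-level atom $D'$ of $\sigma(X)$, and argue this is always possible. This is where locality of $\sigma$ is essential, and it is the one spot where I would need to spell out carefully that a local solution's value at $X$ is the conjunction of $\sigma(E)$ over $E$ in its assignment set, so every top-level atom of $\sigma(X)$ that is not an image of a variable-subterm is literally $\sigma(E)$ for some $E\in\NV$ — which is exactly the content of how $\sigma_S$ is built in Section~\ref{sec:local-disunification}. Everything else reduces to invoking Lemmata~\ref{lem:sub}--\ref{lem:dissub-atoms} and the transitivity of subsumption, and to the fact that the subsumption-handling rules coincide with those of the existing \EL-unification algorithm, whose completeness we may cite.
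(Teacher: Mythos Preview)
Your approach matches the paper's: split on subsumption versus dissubsumption, and in each case use invariant~\ref{i} together with Lemma~\ref{lem:sub} or Lemma~\ref{lem:dissub} to guide the nondeterministic choice. The dissubsumption case---where locality of~$\sigma$ is needed to realise the witnessing top-level atom of~$\sigma(X)$ as~$\sigma(E)$ for some $E\in\NV$---is exactly the paper's argument.

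The only wrinkle is in your subsumption case. Your worry that ``$\sigma(C_i)\sqsubseteq\sigma(D)$ need not hold directly'' is unfounded: once Lemma~\ref{lem:sub} hands you a top-level atom~$E$ of~$\sigma(C_i)$ with $E\sqsubseteq\sigma(D)$, you immediately have $\sigma(C_i)\sqsubseteq E\sqsubseteq\sigma(D)$, so if that~$C_i$ is a variable, \rulefont{Extension} with index~$i$ preserves invariant~\ref{ii} directly. The paper organises this more cleanly by \emph{first} using Lemma~\ref{lem:sub} to locate the index~$i$, and \emph{then} case-splitting on the form of~$C_i$: constant (contradicts non-applicability of \rulefont{Eager Solving}), existential restriction (apply \rulefont{Decomposition}), or variable (apply \rulefont{Extension}). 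There is no need to defer to the completeness proof of the unification paper; the argument is self-contained in two lines once you pick the right case split.
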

\proof
  If \sub is an unsolved subsumption, then it is of the form
  $C_1 \sqcap \dots \sqcap  C_n \sqsubseteq^? D$, where $D$ is a non-variable
  atom.
  By invariant~\ref{i}, we have
  $\sigma(C_1) \sqcap \dots \sqcap \sigma(C_n) \sqsubseteq \sigma(D)$.
  By Lemma~\ref{lem:sub}, there is an index $i\in\{1,\dots,n\}$ and a top-level
  atom~$E$ of $\sigma(C_i)$ such that $E\sqsubseteq\sigma(D)$.
  \begin{itemize}
    \item If $C_i$ is a constant, then by Lemma~\ref{lem:sub} we have
      $C_i=E=D$, and thus \rulefont{Eager Solving} is applicable, which
      contradicts the assumption.
    \item If $C_i=\exists r.C'$, then $\sigma(C_i)=\exists r.\sigma(C')=E$ and
      by Lemma~\ref{lem:sub} we must have $D=\exists r.D'$ and
      $\sigma(C')\sqsubseteq\sigma(D')$. Thus, the \rulefont{Decomposition}
      rule can be successfully applied to~\sub and results in a new
      subsumption $C'\sqsubseteq^?D'$ that is solved by~$\sigma$.
    \item If $C_i$ is a variable, then invariant~\ref{ii} is preserved by
      adding $D$ to $S_{C_i}$ since we have that
      $\sigma(C_i)\sqsubseteq E\sqsubseteq\sigma(D)$. Thus, we can successfully
      apply the \rulefont{Extension} rule to~\sub.
  \end{itemize}\medskip

\noindent  If \sub is an unsolved dissubsumption, then it must be of the form
  $C_1 \sqcap \dots \sqcap  C_n \nsqsubseteq^? X$ since otherwise one of the
  eager rules in Figure~\ref{fig:eager-rules} would be applicable to it. We
  have $\sigma(C_1) \sqcap \dots \sqcap  \sigma(C_n) \nsqsubseteq \sigma(X)$ by
  invariant~\ref{i}.
  By Lemma~\ref{lem:dissub}, there is a top-level atom~$E$ of~$\sigma(X)$ such
  that $\sigma(C_1) \sqcap \dots \sqcap  \sigma(C_n) \nsqsubseteq E$.
  Since $\sigma$ is local, we must have $E = \sigma(D)$ for some $D \in \NV$.
  Hence, adding $D$ to~$S_X$ maintains invariant~\ref{ii}, and adding
  $C_1 \sqcap \dots \sqcap  C_n \nsqsubseteq^? D$ to $\Gamma$ maintains
  invariant~\ref{i}.
  Thus, we can successfully apply the \rulefont{Local Extension} rule to~\sub.
\qed

This concludes the proof of correctness of
Algorithm~\ref{alg:rules-local-disunification}.
Moreover, together with Lemma~\ref{lem:dismatching-termination}, we obtain an
alternative proof of Fact~\ref{fact:np}.

\begin{thm}
\label{thm:rules-correct}
  The flat disunification problem~$\Gamma_0$ has a local solution iff there is
  a successful run of Algorithm~\ref{alg:rules-local-disunification}
  on~$\Gamma_0$.
\qed
\end{thm}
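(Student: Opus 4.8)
The plan is to obtain the theorem by assembling the termination, soundness, and completeness results established in the three preceding subsections; no genuinely new argument is required, only a careful combination. Concretely, the ``if'' direction will be the soundness statement together with the observation that the substitution computed by a successful run is local, and the ``only if'' direction will be the statement that a local solution can be used to steer the algorithm into a non-failing terminal configuration.

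For the ``if'' direction I would take a successful run with final assignment~$S$ and induced substitution~$\sigma_S$. First note that $S$ is acyclic: the only rules that modify~$S$ are \rulefont{Extension}, \rulefont{Eager Extension}, and \rulefont{Local Extension}, each of which fails as soon as the addition would render~$S$ cyclic; since the run is successful, no such failure occurred, so $S$ stays acyclic throughout, and hence $\sigma_S$ is well defined and local in the sense of Definition~\ref{def:local}. By the soundness lemma, $\sigma_S$ solves~$\hat{\Gamma}$ and therefore its subset~$\Gamma_0$, so $\Gamma_0$ has a local solution.

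For the ``only if'' direction I would start from a local solution~$\sigma$ of~$\Gamma_0$ and use it to resolve every don't-know choice so that invariants~\ref{i} and~\ref{ii} hold after each step; both hold trivially after the initialization. The three completeness lemmas then guarantee that every possible step is safe: expansion preserves the invariants; whenever an eager rule is applicable it can be applied without failure and without breaking the invariants; and whenever no eager rule applies but some unsolved (dis)subsumption~\sub remains, some nondeterministic rule applies to~\sub and, with the choice read off from~$\sigma$, its application does not fail. Hence the main loop can always make progress without returning ``failure'', and by the termination lemma it halts after polynomially many steps with all (dis)subsumptions solved, \ie with a successful run. As a byproduct, invariant~\ref{ii} together with the definition of~$\sigma_S$ yields $\sigma(X)\sqsubseteq\sigma_S(X)$ for every variable~$X$, which is exactly what was claimed at the beginning of the completeness subsection.

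I expect the only delicate point to be the last of the three completeness lemmas. In the nondeterministic phase the algorithm may pick \emph{any} remaining unsolved (dis)subsumption~\sub, so what is needed is that for \emph{every} such~\sub a non-failing nondeterministic step exists, not merely for some cleverly chosen one. This is where the locality of~$\sigma$ is essential: in the dissubsumption case, Lemma~\ref{lem:dissub} supplies a top-level atom~$E$ of~$\sigma(X)$ with $\sigma(C_1)\sqcap\dots\sqcap\sigma(C_n)\nsqsubseteq E$, and locality forces $E=\sigma(D)$ for some $D\in\NV$, so that \rulefont{Local Extension} may legitimately add~$D$ to~$S_X$ while preserving invariant~\ref{ii}. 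The remaining cases reduce to the structural characterisations in Lemmata~\ref{lem:sub}--\ref{lem:dissub-atoms}, which are already at hand.
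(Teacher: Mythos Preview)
Your proposal is correct and matches the paper's approach exactly: the theorem is stated with an immediate \qed because it is the direct combination of the termination, soundness, and completeness lemmas proved in the three preceding subsections, and your write-up simply spells this combination out (including the role of locality in the \rulefont{Local Extension} case, which the paper also isolates).
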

We have restricted the nondeterministic choices of
Algorithm~\ref{alg:rules-local-disunification} in such way that non-variable
atoms are only added to the assignment~$S$ if this is necessary to directly
solve some (dis)subsumption in~$\Gamma$.
Hence, the algorithm cannot be used to compute \emph{all} local solutions
of~$\Gamma$, but already selects the more ``interesting'' ones.
As described in the introduction, further dissubsumptions of the form
$X\nsqsubseteq^?D$ with $X\in\Var$ and $D\in\NV$ can be added to~$\Gamma$ in
order to further restrict the solution space.

\section{Encoding local disunification into SAT}
\label{sec:sat}

In the following, we consider an alternative algorithm for local disunification
that is based on a polynomial encoding into a SAT problem.
This reduction is a generalization of the one developed for unification
problems in~\cite{BaMo-LPAR10}.
We again consider a flat disunification problem~$\Gamma$ and the sets \At,
\Var, and \NV as in Section~\ref{sec:disunification}.
Since we are restricting our considerations to \emph{local} solutions, we can
without loss of generality assume that the sets \Nv, \Nc, and \NR contain
exactly the variables, constants, and role names occurring in~$\Gamma$.
To further simplify the reduction, we assume in the following that all flat
dissubsumptions in~$\Gamma$ are of the form $X\nsqsubseteq^?Y$ for variables
$X,Y$.
This is without loss of generality, which can be shown using a transformation
similar to that of Section~\ref{sec:flat-disunification}.

The translation uses the propositional variables $[C\sqsubseteq D]$ for all
$C,D\in\At$.
The SAT problem consists of a set of clauses $\Cl(\Gamma)$ over these variables
that express properties of (dis)subsumption in \EL and encode the elements
of $\Gamma$.
The intuition is that a satisfying valuation of $\Cl(\Gamma)$ induces a local
solution $\sigma$ of $\Gamma$ such that $\sigma(C)\sqsubseteq\sigma(D)$ holds
whenever $[C\sqsubseteq D]$ is true under the valuation.
The solution $\sigma$ is constructed by first extracting an acyclic assignment
$S$ out of the satisfying valuation and then computing $\sigma:=\sigma_S$.
We additionally introduce the variables $[X>Y]$ for all $X,Y\in\Nv$ to ensure
that the generated assignment $S$ is indeed acyclic. This is achieved by
adding clauses to $\Cl(\Gamma)$ that express that $>_S$ is a strict partial
order, \ie irreflexive and transitive.

We further use the auxiliary variables $p_{C,X,D}$ for all $X\in\Nv$,
$C\in\At$, and $D\in\NV$ to express the restrictions imposed by dissubsumptions
of the form $C\nsqsubseteq^?X$ in clausal form.
More precisely, whenever $[C\sqsubseteq X]$ is false for some $X\in\Nv$ and
$C\in\At$, then the dissubsumption $\sigma(C)\nsqsubseteq\sigma(X)$ should 
hold.
By Lemma~\ref{lem:dissub}, this means that we need to find an atom $D\in\NV$
that is a top-level atom of $\sigma(X)$ and satisfies
$\sigma(C)\nsqsubseteq\sigma(D)$.
This is enforced by making the auxiliary variable $p_{C,X,D}$ true, which makes
$[X\sqsubseteq D]$ true and $[C\sqsubseteq D]$ false (see
Definition~\ref{def:sat}\ref{cl:dissub} and Lemma~\ref{lem:soundness:dissub}
for details).

To denote propositional clauses, we use the implicative form $\phi\to\psi$,
where $\phi$ is the conjunction of all negative literals of the clause, and
$\psi$ is the disjunction of all positive literals.
We use $\top$ to denote an empty conjunction, and $\bot$ for an empty
disjunction.

\begin{defi}
\label{def:sat}
  The set $\Cl(\Gamma)$ contains the following propositional clauses:
  \begin{enumerate}[label=(\Roman*),widest=IV]
    \item\label{cl:gamma}
      \emph{Translation of $\Gamma$.}
      \begin{enumerate}[label=\alph*.,ref=\theenumi\alph*]
        \item\label{cl:gp:nonvar} For every subsumption
          $C_1\sqcap\dots\sqcap C_n\sqsubseteq^?D$ in $\Gamma$ with $D\in\NV$:
          \[ \top\to[C_1\sqsubseteq D]\lor\dots\lor[C_n\sqsubseteq D] \]
        \item\label{cl:gp:var} For every subsumption
          $C_1\sqcap\dots\sqcap C_n\sqsubseteq^?X$ in $\Gamma$ with $X\in\Nv$,
          and every $E\in\NV$:
          \[ [X\sqsubseteq E]\to
            [C_1\sqsubseteq E]\lor\dots\lor[C_n\sqsubseteq E] \]
        \item\label{cl:gm} For every dissubsumption $X\nsqsubseteq^?Y$ in
          $\Gamma$: \  $[X\sqsubseteq Y]\to\bot$
      \end{enumerate}
    \item\label{cl:sub}
      \emph{Properties of subsumptions between non-variable atoms.}
      \begin{enumerate}[label=\alph*.,ref=\theenumi\alph*]
        \item\label{cl:sub:c1} For every $A\in\Nc$: \ $\top\to[A\sqsubseteq A]$
        \item\label{cl:sub:c2} For every $A,B\in\Nc$ with $A\neq B$: \
          $[A\sqsubseteq B]\to\bot$
        \item\label{cl:sub:e1} For every $\exists r.A,\exists s.B\in\NV$ with
          $r\neq s$: \ $[\exists r.A\sqsubseteq \exists s.B]\to\bot$
        \item\label{cl:sub:e2} For every $A\in\Nc$ and $\exists r.B\in\NV$:
          \[ [A\sqsubseteq\exists r.B]\to\bot
            \quad\text{and}\quad
            [\exists r.B\sqsubseteq A]\to\bot \]
        \item\label{cl:sub:decomp}
          For every $\exists r.A,\exists r.B\in\NV$:
          \[ [\exists r.A\sqsubseteq\exists r.B]\to[A\sqsubseteq B]
            \quad\text{and}\quad
            [A\sqsubseteq B]\to[\exists r.A\sqsubseteq\exists r.B] \]
      \end{enumerate}
    \item\label{cl:sub:trans}
      \emph{Transitivity of subsumption.}\\
          For every $C_1,C_2,C_3\in\At$: \
          $[C_1\sqsubseteq C_2]\land[C_2\sqsubseteq C_3]\to
            [C_1\sqsubseteq C_3]$
    \item\label{cl:dissub}
      \emph{Dissubsumptions of the form $C\nsqsubseteq^?X$ with a
        variable~$X$.}\\
        For every $C\in\At$, $X\in\Nv$:
        \[ \top\to[C\sqsubseteq X]\lor\bigvee_{D\in\NV}p_{C,X,D}, \]
        and additionally for every $D\in\NV$:
        \[ p_{C,X,D}\to[X\sqsubseteq D]
          \quad\text{and}\quad
          p_{C,X,D}\land[C\sqsubseteq D]\to\bot \]
    \item\label{cl:order}
      \emph{Properties of $>$.}
      \begin{enumerate}[label=\alph*.,ref=\theenumi\alph*]
        \item\label{cl:irrefl}
          For every $X\in\Nv$: \  $[X>X]\to\bot$
        \item\label{cl:trans}
          For every $X,Y,Z\in\Nv$: \  $[X>Y]\land[Y>Z]\to[X>Z]$
        \item\label{cl:conn}
          For every $X,Y\in\Nv$ and $\exists r.Y\in\At$: \
          $[X\sqsubseteq\exists r.Y]\to[X>Y]$
      \end{enumerate}
  \end{enumerate}
\end{defi}\medskip

\noindent The main difference to the encoding in~\cite{BaMo-LPAR10} lies in the
clauses~\ref{cl:dissub} that ensure the presence of a non-variable atom~$D$
that solves the dissubsumption $C\nsqsubseteq^?X$ (cf.\ 
Lemma~\ref{lem:dissub}).
We also need some additional clauses in~\ref{cl:sub} to deal with
dissubsumptions.
It is easy to see that $\Cl(\Gamma)$ can be constructed in time cubic in the 
size
of~$\Gamma$ (due to the clauses in~\ref{cl:sub:trans} and~\ref{cl:trans}). We
prove the correctness of this reduction in the following two sections.

\subsection{Soundness}
\label{sec:sat-soundness}

Let $\tau$ be a valuation of the propositional variables that satisfies
$\Cl(\Gamma)$. We define the assignment $S^\tau$ as follows:
\[ S^\tau_X:=\{D\in\NV \mid \tau([X\sqsubseteq D])=1 \}. \]
We show the following property of~$>_{S^\tau}$; the proof is exactly the same
as in~\cite{BaMo-LPAR10}, but uses a different notation.

\begin{lem}
\label{lem:acyclic}
  The relation $>_{S^\tau}$ is irreflexive.
\end{lem}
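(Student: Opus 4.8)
The plan is to show that $X >_{S^\tau} X$ is impossible for every variable $X$ by transferring the relation $>_{S^\tau}$ into the truth values of the propositional atoms $[X>Y]$ under~$\tau$. Concretely, I would prove the auxiliary claim that $X >_{S^\tau} Y$ implies $\tau([X>Y])=1$ for all $X,Y\in\Nv$, and then combine this with the irreflexivity clauses.

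First I would unfold the definition of $>_{S^\tau}$ from Section~\ref{sec:local-disunification}: it is the transitive closure of the one-step relation $\{(X,Y)\in\Nv\times\Nv \mid Y \text{ occurs in an atom of } S^\tau_X\}$. The key observation is that, since $\Gamma$ is flat, every atom in $\NV$ is either a constant or of the form $\exists r.A$ with $A\in\Nc$; hence a \emph{variable} $Y$ can occur in an atom $D\in S^\tau_X$ only if $D=\exists r.Y$ for some $r$. By the definition of $S^\tau$ this means $\tau([X\sqsubseteq\exists r.Y])=1$, and the connection clauses~\ref{cl:conn} of Definition~\ref{def:sat} then give $\tau([X>Y])=1$. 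This settles the one-step case.

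Next I would prove the auxiliary claim by induction on the length~$k$ of a chain $X=Z_0,Z_1,\dots,Z_k=Y$ witnessing $X>_{S^\tau}Y$, where each consecutive pair is related by the one-step relation. The base case $k=1$ is exactly the observation above. For the inductive step, the chain $Z_0,\dots,Z_{k-1}$ witnesses $X>_{S^\tau}Z_{k-1}$, so by the induction hypothesis $\tau([X>Z_{k-1}])=1$; the one-step relation between $Z_{k-1}$ and $Y$ gives $\tau([Z_{k-1}>Y])=1$ as in the base case; and the transitivity clauses~\ref{cl:trans} yield $\tau([X>Y])=1$. Finally, the irreflexivity clauses~\ref{cl:irrefl} force $\tau([X>X])=0$ for every $X\in\Nv$, so $X>_{S^\tau}X$ would contradict the auxiliary claim; hence $>_{S^\tau}$ is irreflexive.

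I do not expect a genuine obstacle here — the argument is a direct unfolding of definitions against the clause set, essentially the same as in~\cite{BaMo-LPAR10}. The only point requiring a moment's care is the flatness observation, namely that any occurrence of a variable inside a non-variable atom must have the shape $\exists r.Y$; this is precisely what makes the clauses~\ref{cl:conn} applicable and thereby links the assignment $S^\tau$ to the ordering variables $[X>Y]$.
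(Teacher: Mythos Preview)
Your proposal is correct and follows essentially the same approach as the paper: both prove the auxiliary claim that $X>_{S^\tau}Y$ implies $\tau([X>Y])=1$ via the observation that a variable in a non-variable atom must appear as $\exists r.Y$, invoke clauses~\ref{cl:conn} for the one-step case, lift to the transitive closure via clauses~\ref{cl:trans}, and conclude with the irreflexivity clauses~\ref{cl:irrefl}. One small slip: you write that flat non-variable atoms have the form $\exists r.A$ with $A\in\Nc$, but it should be $A\in\NC$ (concept names, not just constants), since otherwise no variable could occur at all; your subsequent conclusion is nonetheless correct.
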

\proof
  We first show that $X>_{S^\tau}Y$ implies $\tau([X>Y])=1$ for all
  $X,Y\in\Nv$.
  If $Y$ occurs in an atom of $S^\tau_X$, then this atom must be of the form
  $\exists r.Y$ with $r\in\NR$. By construction of $S^\tau$, this implies that
  $\tau([X\sqsubseteq\exists r.Y])=1$. Since $\tau$ satisfies the clauses in
  \ref{cl:conn}, we have $\tau([X>Y])=1$.
  By definition of $>_{S^\tau}$ and the transitivity clauses in
  \ref{cl:trans}, we conclude that $\tau([X>Y])=1$ whenever $X>_{S^\tau}Y$.

  Assume now that $X>_{S^\tau}X$ holds for some $X\in\Nv$. By the claim above,
  this implies that  $\tau([X>X])=1$. But this is impossible since $\tau$
  satisfies the clauses in~\ref{cl:irrefl}.
\qed

This in particular shows that $S^\tau$ is acyclic. In the following, let
$\sigma_\tau$ denote the substitution $\sigma_{S^\tau}$ induced by $S^\tau$.
We show that $\sigma_\tau$ is a solution of $\Gamma$.

\begin{lem}
\label{lem:soundness:sub}
  If $C,D\in\At$ such that $\tau([C\sqsubseteq D])=1$, then
  $\sigma_\tau(C)\sqsubseteq\sigma_\tau(D)$.
\end{lem}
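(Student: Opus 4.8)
The plan is to prove Lemma~\ref{lem:soundness:sub} by induction on the combined structure of the atoms~$C$ and~$D$, mirroring the recursive characterization of subsumption in Lemma~\ref{lem:sub}. Concretely, I would induct on the role depth of~$D$ (or, equivalently, on~$|C|+|D|$), and in each case appeal to the clauses of~$\Cl(\Gamma)$ that constrain the truth value of $[C\sqsubseteq D]$ together with how $\sigma_\tau=\sigma_{S^\tau}$ was defined from the assignment $S^\tau_X = \{D\in\NV \mid \tau([X\sqsubseteq D])=1\}$.

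The case analysis would proceed as follows. First, if $D$ is a constant $A\in\Nc$: by clause~\ref{cl:sub:c2}, $[C\sqsubseteq A]$ can only be true when $C=A$ (clause~\ref{cl:sub:e2} rules out $C$ being an existential restriction), in which case $\sigma_\tau(C)=A=\sigma_\tau(D)$; if $C=X$ is a variable, then $A\in S^\tau_X$ by construction, so $A$ is a top-level conjunct of $\sigma_\tau(X)$ and hence $\sigma_\tau(X)\sqsubseteq A$. Second, if $D=\exists r.B\in\NV$: clauses~\ref{cl:sub:e1} and~\ref{cl:sub:e2} force $C$ to be of the form $\exists r.B'\in\NV$ (or a variable). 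In the former case, clause~\ref{cl:sub:decomp} gives $\tau([B'\sqsubseteq B])=1$, so by the induction hypothesis $\sigma_\tau(B')\sqsubseteq\sigma_\tau(B)$, whence $\sigma_\tau(\exists r.B')\sqsubseteq\sigma_\tau(\exists r.B)$ by Lemma~\ref{lem:sub}. In the latter case where $C=X$ is a variable, we again have $\exists r.B\in S^\tau_X$, so $\exists r.\sigma_\tau(B)$ is a top-level atom of $\sigma_\tau(X)$, giving the required subsumption. Finally, if $C$ is a variable $X$ and $D$ is also a variable $Y$: then $Y\in S^\tau_X$, so $\sigma_\tau(Y)$ is (a conjunct contributing to) $\sigma_\tau(X)$; more precisely, by the definition of $\sigma_S$ we have $\sigma_\tau(X)\sqsubseteq\sigma_\tau(Y)$ since $Y$ (hence $\sigma_\tau(Y)$) appears in the conjunction defining $\sigma_\tau(X)$.

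One subtlety to handle carefully is that $C$ itself may be a variable even though the clauses in~\ref{cl:sub} were stated for non-variable atoms; the definition of $\sigma_\tau$ via $S^\tau$ is exactly what bridges this gap, since $\tau([X\sqsubseteq D])=1$ directly places $D$ into $S^\tau_X$ whenever $D\in\NV$, and the remaining case $D\in\Nv$ is covered as above. Another point is that the induction must be well-founded: since $\sigma_\tau(D)$ for $D\in\NV$ involves only atoms of strictly smaller role depth via the recursive definition along $>_{S^\tau}$ (which is acyclic by Lemma~\ref{lem:acyclic}), and the decomposition in clause~\ref{cl:sub:decomp} strictly decreases role depth, the induction on $|C|+|D|$ (or role depth of~$D$) terminates.

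The main obstacle I anticipate is bookkeeping the interaction between the propositional level (truth of $[C\sqsubseteq D]$) and the semantic level (actual subsumption of $\sigma_\tau(C)$ and $\sigma_\tau(D)$), particularly making sure every case where one or both of $C,D$ is a variable is matched to the correct defining clause or to the construction of $S^\tau$, rather than to a clause in~\ref{cl:sub} that only applies to non-variable atoms. Once the case split is organized by the shape of $D$ first and then the shape of $C$, each individual case is a short application of Lemma~\ref{lem:sub} plus the relevant clause, so the proof should be routine but requires care to be exhaustive.
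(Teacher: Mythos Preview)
Your argument breaks down when $D$ is a variable. You write that if $C=X$ and $D=Y$ are both variables with $\tau([X\sqsubseteq Y])=1$, then ``$Y\in S^\tau_X$''; but by definition $S^\tau_X=\{D\in\NV\mid\tau([X\sqsubseteq D])=1\}$ contains only \emph{non-variable} atoms, so $Y\notin S^\tau_X$ regardless of~$\tau$. Moreover, your case split omits the situation where $C$ is a non-variable atom and $D$ is a variable, which is perfectly possible (no clause in~\ref{cl:sub} forbids $\tau([\exists r.A\sqsubseteq Y])=1$). Finally, your proposed induction measure (role depth of~$D$, or $|C|+|D|$) does not decrease in the step needed to handle variable~$D$: if you pass from $[C\sqsubseteq Y]$ to $[C\sqsubseteq E]$ for $E\in S^\tau_Y$, then $E$ may be an existential restriction of larger role depth and size than~$Y$.

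The paper's proof treats the case ``$D$ is a variable'' first and uniformly, without looking at~$C$: for each top-level atom $\sigma_\tau(E)$ of $\sigma_\tau(D)$ one has $\tau([D\sqsubseteq E])=1$, and then the transitivity clauses~\ref{cl:sub:trans} yield $\tau([C\sqsubseteq E])=1$, to which the induction hypothesis applies. To make this well-founded the paper orders pairs $(\rd(\sigma_\tau(D)),\Var(D))$ lexicographically, comparing the second component with~$>_{S^\tau}$; this is exactly what decreases when passing from~$D$ to $E\in S^\tau_D$. Your remaining cases ($D\in\NV$, split by whether $C$ is a variable or not) are fine and match the paper, but you need to replace the variable-$D$ case and adapt the induction measure accordingly.
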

\proof
  We show this by induction on the pairs
  $(\rd(\sigma_\tau(D)),\Var(D))$, where $\Var(D)$ is either the variable that
  occurs in $D$, or $\bot$ if $D$ is ground. These pairs are compared by the
  lexicographic extension of the order $>$ on natural numbers for the first
  component and the order $>_{S^\tau}$ for the second component, which is
  extended by $Y>_{S^\tau}\bot$ for all $Y\in\Nv$.

  We make a case distinction on the form of $C$ and $D$ and consider first the
  case that $D$ is a variable. Let $\sigma_\tau(E)$ be any top-level atom of
  $\sigma_\tau(D)$, which means that $\tau([D\sqsubseteq E])=1$. By the clauses
  in~\ref{cl:sub:trans}, we also have $\tau([C\sqsubseteq E])=1$. Since
  $\rd(\sigma_\tau(D))\geq \rd(\sigma_\tau(E))$ and
  $\Var(D)=D>_{S^\tau}\Var(E)$, by induction we get
  $\sigma_\tau(C)\sqsubseteq\sigma_\tau(E)$. Since $\sigma_\tau(D)$ is
  equivalent to the conjunction of all its top-level atoms, by
  Lemma~\ref{lem:sub} we obtain $\sigma_\tau(C)\sqsubseteq\sigma_\tau(D)$.

  If $D$ is a non-variable atom and $C$ is a variable, then
  $\sigma_\tau(C)\sqsubseteq\sigma_\tau(D)$ holds by construction
  of~$S^\tau$ and Lemma~\ref{lem:sub}.

  If $C,D$ are both non-variable atoms, then by the clauses in~\ref{cl:sub}
  they must either be the same concept constant, or be existential restrictions
  using the same role name. In the first case, the claim follows immediately.
  In the latter case, let $C=\exists r.C'$ and $D=\exists r.D'$. By the clauses
  in~\ref{cl:sub:decomp}, we have $\tau([C'\sqsubseteq D'])=1$. Since
  $\rd(\sigma_\tau(D))>\rd(\sigma_\tau(D'))$, by induction we get
  $\sigma_\tau(C')\sqsubseteq\sigma_\tau(D')$, and thus
  $\sigma_\tau(C)\sqsubseteq\sigma_\tau(D)$ by Lemma~\ref{lem:sub}.
\qed

We now show that the converse of this lemma also holds.

\begin{lem}
\label{lem:soundness:dissub}
  If $C,D\in\At$ such that $\tau([C\sqsubseteq D])=0$, then
  $\sigma_\tau(C)\nsqsubseteq\sigma_\tau(D)$.
\end{lem}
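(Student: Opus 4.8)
### Proof proposal

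The plan is to prove the contrapositive of Lemma~\ref{lem:soundness:sub} by induction on the same well-founded measure, namely the pair $(\rd(\sigma_\tau(D)),\Var(D))$ compared lexicographically (with $Y>_{S^\tau}\bot$ for all $Y\in\Nv$). So assume $C,D\in\At$ with $\tau([C\sqsubseteq D])=0$; I want to show $\sigma_\tau(C)\nsqsubseteq\sigma_\tau(D)$. The case analysis mirrors that of the previous lemma but in a ``dual'' direction, using Lemma~\ref{lem:dissub} and Lemma~\ref{lem:dissub-atoms} in place of Lemma~\ref{lem:sub}, and the crucial extra ingredient is the family of clauses~\ref{cl:dissub}.

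First I would handle the case where $D=X$ is a variable. By the first clause of~\ref{cl:dissub}, since $\tau([C\sqsubseteq X])=0$, there must be some $D'\in\NV$ with $\tau(p_{C,X,D'})=1$; the two remaining clauses of~\ref{cl:dissub} then give $\tau([X\sqsubseteq D'])=1$ (so $\sigma_\tau(D')$ is a top-level atom of $\sigma_\tau(X)$ by construction of $S^\tau$) and $\tau([C\sqsubseteq D'])=0$. Since $D'$ is a non-variable atom, $\Var(D')\in\{\bot\}\cup(\Nv\cap\{\text{vars in }D'\})$, and in either case $\Var(X)=X>_{S^\tau}\Var(D')$ (if $D'$ contains a variable $Y$, then $Y$ occurs in an atom of $S^\tau_X$, hence $X>_{S^\tau}Y$); also $\rd(\sigma_\tau(X))\ge\rd(\sigma_\tau(D'))$ since $\sigma_\tau(D')$ is a top-level atom of $\sigma_\tau(X)$. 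So the induction hypothesis applies to the pair $(C,D')$ and yields $\sigma_\tau(C)\nsqsubseteq\sigma_\tau(D')$. Since $\sigma_\tau(D')$ is a top-level atom of $\sigma_\tau(X)$, Lemma~\ref{lem:dissub} gives $\sigma_\tau(C)\nsqsubseteq\sigma_\tau(X)$.

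Next, the case where $D$ is a non-variable atom. If $C$ is a variable, I would argue directly: $\sigma_\tau(C)$ is the conjunction of the $\sigma_\tau(E)$ for $E\in S^\tau_C$, and for each such $E$ we have $\tau([C\sqsubseteq E])=1$; if we had $\sigma_\tau(C)\sqsubseteq\sigma_\tau(D)$, Lemma~\ref{lem:sub} would furnish a top-level atom $\sigma_\tau(E)$ of $\sigma_\tau(C)$ with $\sigma_\tau(E)\sqsubseteq\sigma_\tau(D)$, and then transitivity clauses~\ref{cl:sub:trans} (using $\tau([C\sqsubseteq E])=1$ together with a propositional variable recording $[E\sqsubseteq D]$)\,--- here one has to be a little careful, because $\tau([E\sqsubseteq D])$ need not be forced to be $1$ just because $\sigma_\tau(E)\sqsubseteq\sigma_\tau(D)$. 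The clean route is instead: apply Lemma~\ref{lem:soundness:sub} in the contrapositive is not available, so I would argue that since $C$ is a variable and $D$ non-variable, $\tau([C\sqsubseteq D])=0$ means $D\notin S^\tau_C$, hence by definition of $\sigma_\tau$ and Lemma~\ref{lem:sub} no top-level atom of $\sigma_\tau(C)$ is of the form needed to be subsumed by $\sigma_\tau(D)$\,--- more precisely, every top-level atom of $\sigma_\tau(C)$ is $\sigma_\tau(E)$ for some $E\in S^\tau_C\subseteq\NV$ with $E\ne D$; if $\sigma_\tau(E)\sqsubseteq\sigma_\tau(D)$ then by Lemma~\ref{lem:soundness:sub}'s structural content plus the clauses in~\ref{cl:sub} one checks $\tau([E\sqsubseteq D])=1$, whence $\tau([C\sqsubseteq D])=1$ by~\ref{cl:sub:trans}, a contradiction. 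Finally, if both $C,D$ are non-variable atoms, then $\tau([C\sqsubseteq D])=0$ together with the clauses in~\ref{cl:sub} forces either that one of $C,D$ is a constant and $C\ne D$, or that $C=\exists r.C'$, $D=\exists s.D'$ with $r\ne s$, or that $C=\exists r.C'$, $D=\exists r.D'$ with $\tau([C'\sqsubseteq D'])=0$; in the first two cases Lemma~\ref{lem:dissub-atoms} immediately gives $\sigma_\tau(C)\nsqsubseteq\sigma_\tau(D)$, and in the third case $\rd(\sigma_\tau(D))>\rd(\sigma_\tau(D'))$, so the induction hypothesis gives $\sigma_\tau(C')\nsqsubseteq\sigma_\tau(D')$ and hence $\sigma_\tau(C)\nsqsubseteq\sigma_\tau(D)$ by Lemma~\ref{lem:dissub-atoms}(3).

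The main obstacle I anticipate is the subcase ``$C$ a variable, $D$ a non-variable atom'': one must make precise why $\tau([C\sqsubseteq D])=0$ really does prevent $\sigma_\tau(C)\sqsubseteq\sigma_\tau(D)$, given that $\tau$ is only required to satisfy the listed clauses and is not a priori ``closed'' under all semantic consequences. The resolution is that $\Cl(\Gamma)$ contains exactly enough structural clauses (\ref{cl:sub:c1}, \ref{cl:sub:c2}, \ref{cl:sub:e1}, \ref{cl:sub:e2}, \ref{cl:sub:decomp}) plus transitivity (\ref{cl:sub:trans}) so that on the variables $[E\sqsubseteq D]$ with $E,D\in\NV$ the valuation $\tau$ necessarily agrees with the actual subsumption relation between $\sigma_\tau(E)$ and $\sigma_\tau(D)$; I would isolate this as the statement that for $E,D\in\NV$, $\sigma_\tau(E)\sqsubseteq\sigma_\tau(D)$ iff $\tau([E\sqsubseteq D])=1$ (the ``if'' direction being Lemma~\ref{lem:soundness:sub}, the ``only if'' direction being a short induction on role depth using exactly clauses~\ref{cl:sub}), and then the variable/non-variable case of the present lemma follows by combining this with~\ref{cl:sub:trans} and Lemma~\ref{lem:sub}. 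With that auxiliary observation in hand, every case goes through mechanically, and the well-foundedness of the measure (already noted after Definition~\ref{def:sat}'s analogue in the soundness section) guarantees the induction terminates.
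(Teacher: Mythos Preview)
Your overall case structure matches the paper's, and Cases~1 ($D$ a variable) and~3 (both non-variable) are fine. The gap is in Case~2 ($C$ a variable, $D\in\NV$), and it stems from your choice of induction measure.

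With the measure $(\rd(\sigma_\tau(D)),\Var(D))$, nothing decreases when you pass from $(C,D)$ to $(E,D)$ for $E\in S^\tau_C$, since $D$ is unchanged. You try to sidestep this by proving the auxiliary claim ``for $E,D\in\NV$, $\sigma_\tau(E)\sqsubseteq\sigma_\tau(D)$ implies $\tau([E\sqsubseteq D])=1$'' as a separate short induction on role depth using only the clauses in~\ref{cl:sub}. But this does not close: when $E=\exists r.E'$ and $D=\exists r.D'$, the atoms $E',D'$ may be \emph{variables}, and to conclude $\tau([E'\sqsubseteq D'])=1$ from $\sigma_\tau(E')\sqsubseteq\sigma_\tau(D')$ you are back to the full strength of the lemma you are proving (indeed you then need clauses~\ref{cl:dissub} and~\ref{cl:sub:trans}, not just~\ref{cl:sub}). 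So the auxiliary claim is not self-contained, and your induction does not terminate as stated.

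The paper avoids this by changing the measure to the triple $(\rd(\sigma_\tau(C)),\Var(C),\Var(D))$. In Case~2 one then argues directly: for each $E\in S^\tau_C$ we have $\tau([C\sqsubseteq E])=1$, so transitivity~\ref{cl:sub:trans} forces $\tau([E\sqsubseteq D])=0$; since $\rd(\sigma_\tau(C))\ge\rd(\sigma_\tau(E))$ and $\Var(C)=C>_{S^\tau}\Var(E)$, the measure strictly drops and induction yields $\sigma_\tau(E)\nsqsubseteq\sigma_\tau(D)$ for every top-level atom of $\sigma_\tau(C)$, whence $\sigma_\tau(C)\nsqsubseteq\sigma_\tau(D)$ by Lemma~\ref{lem:dissub}. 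Cases~1 and~3 still go through with this measure (decreasing in the third, respectively first, component). Switching to this measure fixes your proof with no other changes.
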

\proof
  We show this by induction on the tuples
  $(\rd(\sigma_\tau(C)),\Var(C),\Var(D))$ and make a case distinction on the
  form of $C$ and $D$.
  If $D$ is a variable, then by the clauses in~\ref{cl:dissub} there must be a
  $D'\in\NV$ such that $\tau(p_{C,D,D'})=1$. This implies that
  $\tau([D\sqsubseteq D'])=1$ and $\tau([C\sqsubseteq D'])=0$.
  By construction of $S^\tau$, $\sigma_\tau(D')$ is a top-level atom of
  $\sigma_\tau(D)$ and $\Var(D)>_{S^\tau}\Var(D')$. Since
  $\rd(\sigma_\tau(C))=\rd(\sigma_\tau(C))$ and $\Var(C)=\Var(C)$, by induction
  we get $\sigma_\tau(C)\nsqsubseteq\sigma_\tau(D')$,
  and thus $\sigma_\tau(C)\nsqsubseteq\sigma_\tau(D)$ by
  Lemma~\ref{lem:dissub}.

  If $D$ is a non-variable atom and $C$ is a variable, then consider any
  top-level atom $\sigma_\tau(E)$ of $\sigma_\tau(C)$, which means that we have
  $\tau([C\sqsubseteq E])=1$. By the clauses in~\ref{cl:sub:trans} this
  implies that $\tau([E\sqsubseteq D])=0$. Since we have
  $\rd(\sigma_\tau(C))\geq\rd(\sigma_\tau(E))$ and
  $\Var(C)=C>_{S^\tau}\Var(E)$, by induction we get
  $\sigma_\tau(E)\nsqsubseteq\sigma_\tau(D)$. Since $\sigma_\tau(C)$ is
  equivalent to the conjunction of all its top-level atoms, by
  Lemma~\ref{lem:dissub} we get $\sigma_\tau(C)\nsqsubseteq\sigma_\tau(D)$.

  If $C,D$ are both non-variable atoms, then by the clauses in~\ref{cl:sub},
  they are either different constants, a constant and an existential
  restriction, or two existential restrictions. In the first two cases,
  $\sigma_\tau(C)\nsqsubseteq\sigma_\tau(D)$ holds by Lemma~\ref{lem:sub}. In
  the last case, they can either contain two different roles or the same role.
  Again, the former case is covered by Lemma~\ref{lem:sub}, while in the latter
  case we have $C=\exists r.C'$, $D=\exists r.D'$, and
  $\tau([C'\sqsubseteq D'])=0$ by the clauses in~\ref{cl:sub:decomp}. Since
  $\rd(\sigma_\tau(C))>\rd(\sigma_\tau(C'))$, by induction we get
  $\sigma_\tau(C')\nsqsubseteq\sigma_\tau(D')$, and thus
  $\sigma_\tau(C)\nsqsubseteq\sigma_\tau(D)$ by Lemma~\ref{lem:dissub}.
\qed

This suffices to show soundness of the reduction.

\begin{lem}
\label{lem:soundness}
  The local substitution~$\sigma_\tau$ solves~$\Gamma$.
\end{lem}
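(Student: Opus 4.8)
The plan is to check directly that $\sigma_\tau$ solves every (dis)subsumption in $\Gamma$, using the two preceding lemmas as black boxes together with the clauses in group~\ref{cl:gamma}. First I would record that $\sigma_\tau$ is well defined: by Lemma~\ref{lem:acyclic} the relation $>_{S^\tau}$ is irreflexive, so $S^\tau$ is acyclic, and as observed in Section~\ref{sec:local-disunification} the induced substitution $\sigma_{S^\tau}=\sigma_\tau$ then maps each variable to a ground concept term over~$\Sigma$ and is thus a valid candidate solution (and, by definition, a local one).

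Next I would go through the possible forms of the flat (dis)subsumptions in $\Gamma$. Recall that, by the simplifying assumptions made at the start of Section~\ref{sec:sat}, every dissubsumption in $\Gamma$ has the form $X\nsqsubseteq^?Y$ with $X,Y\in\Nv$, and every subsumption has a single flat atom on the right. For a dissubsumption $X\nsqsubseteq^?Y$, clause~\ref{cl:gm} forces $\tau([X\sqsubseteq Y])=0$, and then Lemma~\ref{lem:soundness:dissub} gives $\sigma_\tau(X)\nsqsubseteq\sigma_\tau(Y)$ immediately. For a subsumption $C_1\sqcap\dots\sqcap C_n\sqsubseteq^?D$ with $D\in\NV$, clause~\ref{cl:gp:nonvar} forces $\tau([C_i\sqsubseteq D])=1$ for some $i$, hence $\sigma_\tau(C_i)\sqsubseteq\sigma_\tau(D)$ by Lemma~\ref{lem:soundness:sub}, and Lemma~\ref{lem:sub} yields $\sigma_\tau(C_1\sqcap\dots\sqcap C_n)\sqsubseteq\sigma_\tau(D)$.

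The one case needing a little care is a subsumption $C_1\sqcap\dots\sqcap C_n\sqsubseteq^?X$ with $X\in\Nv$. Here I would use that, by the definitions of $S^\tau$ and $\sigma_\tau$, the top-level atoms of $\sigma_\tau(X)$ are exactly the terms $\sigma_\tau(E)$ for $E\in\NV$ with $\tau([X\sqsubseteq E])=1$. Fixing such an $E$, clause~\ref{cl:gp:var} gives $\tau([C_i\sqsubseteq E])=1$ for some $i$, so $\sigma_\tau(C_i)\sqsubseteq\sigma_\tau(E)$ by Lemma~\ref{lem:soundness:sub}; since this holds for every top-level atom of $\sigma_\tau(X)$, Lemma~\ref{lem:sub} gives $\sigma_\tau(C_1\sqcap\dots\sqcap C_n)\sqsubseteq\sigma_\tau(X)$. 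This exhausts the elements of $\Gamma$, so $\sigma_\tau$ solves $\Gamma$. I expect no genuine obstacle here: all the real work has been pushed into Lemmas~\ref{lem:soundness:sub} and~\ref{lem:soundness:dissub}, and what remains is routine bookkeeping over the three clause types in~\ref{cl:gamma}; the only spot to be careful is correctly identifying the top-level atoms of $\sigma_\tau(X)$ and applying Lemma~\ref{lem:sub} in the right direction in the variable-on-the-right case.
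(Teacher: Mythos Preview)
Your proposal is correct and follows essentially the same approach as the paper's proof: both split into the three clause types of~\ref{cl:gamma}, handle dissubsumptions via~\ref{cl:gm} and Lemma~\ref{lem:soundness:dissub}, handle subsumptions with a non-variable right-hand side via~\ref{cl:gp:nonvar} and Lemma~\ref{lem:soundness:sub}, and handle the variable-on-the-right case by quantifying over the top-level atoms $\sigma_\tau(E)$ of $\sigma_\tau(X)$ (equivalently, over $E\in S^\tau_X$) and invoking~\ref{cl:gp:var}. Your extra remark that $\sigma_\tau$ is well defined via Lemma~\ref{lem:acyclic} is already established in the paper just before this lemma, so it is harmless but not needed.
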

\proof
  Consider any flat subsumption $C_1\sqcap\dots\sqcap C_n\sqsubseteq^?D$
  in~$\Gamma$. If $D\in\NV$, then we have
  $\sigma_\tau(C_i)\sqsubseteq\sigma_\tau(D)$ for some $i$, $1\le i\le n$, by
  the clauses in~\ref{cl:gamma} and Lemma~\ref{lem:soundness:sub}. By
  Lemma~\ref{lem:sub}, $\sigma_\tau$ solves the subsumption.

  If $D$ is a variable, then consider any top-level atom $\sigma_\tau(E)$ of
  $\sigma_\tau(D)$, for which we must have $\tau([D\sqsubseteq E])=1$. By the
  clauses in~\ref{cl:gamma}, there must be an $i$, $1\leq i\leq n,$ such that
  $\tau([C_i\sqsubseteq E])=1$, and thus
  $\sigma_\tau(C_i)\sqsubseteq\sigma_\tau(E)$ by
  Lemma~\ref{lem:soundness:sub}. Again, by Lemma~\ref{lem:sub} this implies
  that $\sigma_\tau$ solves the subsumption.

  Finally, consider a dissubsumption $X\nsqsubseteq^?Y$ in $\Gamma$. Then by
  the clauses in~\ref{cl:gamma} and Lemma~\ref{lem:soundness:dissub} we have
  $\sigma_\tau(X)\nsqsubseteq\sigma_\tau(Y)$ \ie $\sigma_\tau$ solves the
  dissubsumption.
\qed

\subsection{Completeness}
\label{sec:sat-completeness}

Let now $\sigma$ be a ground local solution of $\Gamma$ and $>_\sigma$ the
resulting partial order on~\Nv, defined as follows for all $X,Y\in\Nv$:
\[ X>_\sigma Y \text{ iff } \sigma(X)\sqsubseteq\exists r_1.\dots\exists
r_n.\sigma(Y) \text{ for some } r_1,\dots,r_n\in\NR \text{ with } n\ge 1. \]
Note that $>_\sigma$ is irreflexive since $X>_\sigma X$ is impossible by
Lemma~\ref{lem:sub}, and it is transitive since $\sqsubseteq$ is transitive
and closed under applying existential restrictions on both sides. Thus,
$>_\sigma$ is a strict partial order.
We define a valuation~$\tau_\sigma$ as follows for all $C,D\in\At$, $E\in\NV$,
and $X,Y\in\Nv$:
\begin{align*}
  \tau_\sigma([C\sqsubseteq D]) &:= \begin{cases}
    1 &\text{if $\sigma(C)\sqsubseteq\sigma(D)$}\\
    0 &\text{otherwise}
  \end{cases} \\
  \tau_\sigma(p_{C,X,E}) &:= \begin{cases}
    1 &\text{if $\sigma(X)\sqsubseteq\sigma(E)$ and
      $\sigma(C)\nsqsubseteq\sigma(E)$}\\
    0 &\text{otherwise}
  \end{cases}\\
  \tau_\sigma([X>Y]) &:= \begin{cases}
    1 &\text{if $X>_\sigma Y$}\\
    0 &\text{otherwise}
  \end{cases}
\end{align*}

\begin{lem}
  The valuation $\tau_\sigma$ satisfies all clauses of~$\Cl(\Gamma)$.
\end{lem}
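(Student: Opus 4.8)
The plan is to verify, clause group by clause group, that the valuation $\tau_\sigma$ defined from the given local solution $\sigma$ satisfies every clause of $\Cl(\Gamma)$ listed in Definition~\ref{def:sat}. The work splits naturally along the items \ref{cl:gamma}--\ref{cl:order}, and in each case the argument is a direct translation of a semantic fact about $\sqsubseteq$ in \EL (Lemmata~\ref{lem:sub}--\ref{lem:dissub-atoms}) into the Boolean setting via the definition of $\tau_\sigma$.

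First I would treat the clauses~\ref{cl:sub} about subsumptions between non-variable atoms and the transitivity clauses~\ref{cl:sub:trans}: these are immediate since $\tau_\sigma([C\sqsubseteq D])=1$ literally means $\sigma(C)\sqsubseteq\sigma(D)$, and $\sqsubseteq$ is reflexive, transitive, and behaves on atoms exactly as described by Lemma~\ref{lem:sub} (distinct constants are incomparable, a constant and an existential restriction are incomparable, $\exists r.A\sqsubseteq\exists s.B$ forces $r=s$, and $\exists r.A\sqsubseteq\exists r.B$ iff $A\sqsubseteq B$, where here $A,B$ may be variables so one reads $\sigma(A),\sigma(B)$). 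Next, the order clauses~\ref{cl:order}: irreflexivity~\ref{cl:irrefl} and transitivity~\ref{cl:trans} of $[X>Y]$ hold because, as already observed right after the definition of $>_\sigma$, the relation $>_\sigma$ is a strict partial order; and the connecting clause~\ref{cl:conn} holds because if $\tau_\sigma([X\sqsubseteq\exists r.Y])=1$ then $\sigma(X)\sqsubseteq\exists r.\sigma(Y)$, which is the $n=1$ case of the definition of $X>_\sigma Y$. For the translation clauses~\ref{cl:gamma}: \ref{cl:gp:nonvar} and~\ref{cl:gp:var} follow from the fact that $\sigma$ solves the subsumptions of $\Gamma$ together with Lemma~\ref{lem:sub} (for a subsumption with right-hand side $X\in\Nv$, if $[X\sqsubseteq E]$ is true then $\sigma(E)$ is a top-level atom of $\sigma(X)$, hence of $\sigma(C_1\sqcap\dots\sqcap C_n)$, so some $[C_i\sqsubseteq E]$ is true), and~\ref{cl:gm} follows because $\sigma$ solves $X\nsqsubseteq^?Y$, so $\sigma(X)\nsqsubseteq\sigma(Y)$ and $[X\sqsubseteq Y]$ is false.

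The one genuinely substantive group is~\ref{cl:dissub}, the clauses encoding dissubsumptions $C\nsqsubseteq^?X$, and this is where I expect the main obstacle to lie. The ``big'' clause $\top\to[C\sqsubseteq X]\lor\bigvee_{D\in\NV}p_{C,X,D}$ requires: whenever $\sigma(C)\nsqsubseteq\sigma(X)$, there exists $D\in\NV$ with $\sigma(X)\sqsubseteq\sigma(D)$ and $\sigma(C)\nsqsubseteq\sigma(D)$ — and it is precisely here that locality of $\sigma$ is used. By Lemma~\ref{lem:dissub}, $\sigma(C)\nsqsubseteq\sigma(X)$ yields a top-level atom $E$ of $\sigma(X)$ with $\sigma(C)\nsqsubseteq E$; since $\sigma=\sigma_S$ for an acyclic assignment $S$, every top-level atom of $\sigma(X)$ is of the form $\sigma(D)$ for some $D\in S_X\subseteq\NV$, so this $D$ witnesses $p_{C,X,D}$. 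The two remaining clauses $p_{C,X,D}\to[X\sqsubseteq D]$ and $p_{C,X,D}\land[C\sqsubseteq D]\to\bot$ are then immediate from the definition of $\tau_\sigma(p_{C,X,E})$: if $\tau_\sigma(p_{C,X,D})=1$ then by definition $\sigma(X)\sqsubseteq\sigma(D)$, giving the first, and $\sigma(C)\nsqsubseteq\sigma(D)$, which contradicts $\tau_\sigma([C\sqsubseteq D])=1$, giving the second.

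In writing this up I would organize the proof as a bulleted or enumerated walk through items~\ref{cl:gamma} through~\ref{cl:order}, flagging~\ref{cl:sub:trans}, \ref{cl:trans}, \ref{cl:irrefl}, \ref{cl:conn} as trivial consequences of properties of $\sqsubseteq$ and $>_\sigma$ already recorded, handling~\ref{cl:sub} by a short appeal to Lemma~\ref{lem:sub}, handling~\ref{cl:gamma} by invoking that $\sigma$ solves $\Gamma$ plus Lemma~\ref{lem:sub}/Lemma~\ref{lem:dissub}, and devoting the bulk of the argument to~\ref{cl:dissub} with the locality step spelled out explicitly. The only subtlety to be careful about is the interaction between $\Var(\cdot)$-style bookkeeping and variables appearing inside atoms when applying Lemma~\ref{lem:sub} to clauses like~\ref{cl:sub:decomp}; but since we are only checking clause satisfaction (not running an induction here), this reduces to noting that $\sigma$ is a homomorphism for $\sqcap$ and $\exists r.(\cdot)$, so the semantic facts about atoms transfer verbatim to their $\sigma$-images.
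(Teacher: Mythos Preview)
Your approach is correct and essentially identical to the paper's: both walk through the clause groups \ref{cl:gamma}--\ref{cl:order}, dispatch \ref{cl:sub}, \ref{cl:sub:trans}, and \ref{cl:order} by basic properties of~$\sqsubseteq$ and of the strict partial order~$>_\sigma$, and isolate locality of~$\sigma$ as the one nontrivial ingredient, needed precisely for the big disjunctive clause in~\ref{cl:dissub}.

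One slip to correct in your parenthetical for~\ref{cl:gp:var}: $\tau_\sigma([X\sqsubseteq E])=1$ only says $\sigma(X)\sqsubseteq\sigma(E)$, \emph{not} that $\sigma(E)$ is a top-level atom of~$\sigma(X)$ (and even if it were, that would not make it a top-level atom of $\sigma(C_1\sqcap\dots\sqcap C_n)$). The paper's argument, which is what you should write, is plain transitivity: from $\sigma(C_1)\sqcap\dots\sqcap\sigma(C_n)\sqsubseteq\sigma(X)\sqsubseteq\sigma(E)$ and the fact that $\sigma(E)$ is an atom, Lemma~\ref{lem:sub} yields some~$i$ with $\sigma(C_i)\sqsubseteq\sigma(E)$, hence $\tau_\sigma([C_i\sqsubseteq E])=1$.
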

\proof
  For~\ref{cl:gp:nonvar}, consider any flat subsumption
  $C_1\sqcap\dots\sqcap C_n\sqsubseteq^?D$ in~$\Gamma$ with $D\in\NV$. Since
  $\sigma$ solves~$\Gamma$, we have
  $\sigma(C_1)\sqcap\dots\sqcap\sigma(C_n)\sqsubseteq\sigma(D)$. Since
  $\sigma(D)$ is an atom, by Lemma~\ref{lem:sub} there must be an $i$,
  $1\le i\le n$, and a top-level atom~$E$ of~$\sigma(C_i)$ such that
  $\sigma(C_i)\sqsubseteq E\sqsubseteq\sigma(D)$. By the definition
  of~$\tau_\sigma$, this shows that $\tau_\sigma([C_i\sqsubseteq D])=1$, and
  thus the clause is satisfied.

  Consider now an arbitrary flat subsumption
  $C_1\sqcap\dots\sqcap C_n\sqsubseteq^?X$ from~$\Gamma$ where $X$ is a
  variable, and any $E\in\NV$ such that $\tau_\sigma([X\sqsubseteq E])=1$.
  This implies that we have
  $\sigma(C_1)\sqcap\dots\sqcap\sigma(C_n)
    \sqsubseteq\sigma(X)
    \sqsubseteq\sigma(E)$,
  and thus as above there is a top-level atom~$F$ of some~$\sigma(C_i)$
  such that $\sigma(C_i)\sqsubseteq F\sqsubseteq\sigma(E)$, which shows
  that $\tau_\sigma([C_i\sqsubseteq E])=1$, as required for the clause
  in~\ref{cl:gp:var}.

  For every dissubsumption $X\nsqsubseteq^?Y$ in~$\Gamma$, we must have
  $\sigma(X)\nsqsubseteq\sigma(Y)$, and thus $\tau_\sigma([X\sqsubseteq Y])=0$,
  satisfying the clause in~\ref{cl:gm}.

  For $A\in\Nc$, we have $\sigma(A)\sqsubseteq\sigma(A)$, and thus
  $\tau_\sigma([A\sqsubseteq A])=1$. Similar arguments show that the remaining
  clauses in~\ref{cl:sub} are also satisfied (see Lemma~\ref{lem:sub}).
  For~\ref{cl:sub:trans}, consider $C_1,C_2,C_3\in\At$ with
  $\tau_\sigma([C_1\sqsubseteq C_2])=\tau_\sigma([C_2\sqsubseteq C_3])=1$, and
  thus $\sigma(C_1)\sqsubseteq\sigma(C_2)\sqsubseteq\sigma(C_3)$. By
  transitivity of~$\sqsubseteq$, we infer
  $\tau_\sigma([C_1\sqsubseteq C_3])=1$.

  For every $C\in\At$, $X\in\Nv$, and $D\in\NV$ with $\tau_\sigma(p_{C,X,D})=1$,
  we must have $\tau_\sigma([X\sqsubseteq D])=1$ and
  $\tau_\sigma([C\sqsubseteq D])=0$ by the definition of~$\tau_\sigma$.
  Furthermore, whenever $\tau_\sigma([C\sqsubseteq X])=0$, we have
  $\sigma(C)\nsqsubseteq\sigma(X)$, and thus by Lemma~\ref{lem:dissub} there
  must be a top-level atom~$E$ of $\sigma(X)$ such that
  $\sigma(C)\nsqsubseteq E$. Since $\sigma$ is a local solution, $E$ must be of
  the form $\sigma(F)$ for some $F\in\NV$, and thus we obtain
  $\sigma(X)\sqsubseteq\sigma(F)$ and $\sigma(C)\nsqsubseteq\sigma(F)$, and
  hence $\tau_\sigma(p_{C,X,F})=1$.
  This shows that all clauses in~\ref{cl:dissub} are satisfied
  by~$\tau_\sigma$.

  For~\ref{cl:irrefl}, recall that $>_\sigma$ is irreflexive. Transitivity
  of~$>_\sigma$ yields satisfaction of the clauses in~\ref{cl:trans}.
  Finally, if $\sigma(X)\sqsubseteq\sigma(\exists r.Y)=\exists r.\sigma(Y)$ for
  some $X,Y\in\Nv$ with $\exists r.Y\in\At$, we have $X>_\sigma Y$ by
  definition, and thus the clauses in~\ref{cl:conn} are satisfied
  by~$\tau_\sigma$.
\qed

This completes the proof of the correctness of the translation presented in
Definition~\ref{def:sat}, which provides us with a reduction of local
disunification (and thus also of dismatching) to SAT.
Since the size of $\Cl(\Gamma)$ is polynomial in the size of~$\Gamma$, we
obtain yet another proof of Fact~\ref{fact:np}.

\begin{thm}
\label{thm:sat-correct}
  The flat disunification problem~$\Gamma$ has a local solution iff
  $\Cl(\Gamma)$ is satisfiable.
\qed
\end{thm}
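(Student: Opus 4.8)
The plan is to read off Theorem~\ref{thm:sat-correct} as an immediate corollary of the soundness and completeness results of Sections~\ref{sec:sat-soundness} and~\ref{sec:sat-completeness}; essentially all that remains is to assemble the two directions and check that the substitution/valuation constructed in each direction is of the required kind. No new argument is needed.

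For the ``if'' direction, I would start from a satisfying valuation~$\tau$ of $\Cl(\Gamma)$ and form the assignment $S^\tau$ with $S^\tau_X=\{D\in\NV\mid\tau([X\sqsubseteq D])=1\}$ as in Section~\ref{sec:sat-soundness}. By Lemma~\ref{lem:acyclic} the relation $>_{S^\tau}$ is irreflexive, so $S^\tau$ is acyclic and the induced substitution $\sigma_\tau:=\sigma_{S^\tau}$ is well-defined; since it has the form $\sigma_S$ for an acyclic assignment~$S$ with values in~$\NV$, it is a local substitution in the sense of Definition~\ref{def:local}. Lemma~\ref{lem:soundness} then shows that $\sigma_\tau$ solves~$\Gamma$, so $\Gamma$ has a local solution.

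For the ``only if'' direction, I would take a local solution~$\sigma$ of~$\Gamma$. By Definition~\ref{def:disunification} such a solution is ground, so the strict partial order $>_\sigma$ on~\Nv and the valuation~$\tau_\sigma$ of Section~\ref{sec:sat-completeness} are available, and the final lemma of that section states exactly that $\tau_\sigma$ satisfies every clause of~$\Cl(\Gamma)$; hence $\Cl(\Gamma)$ is satisfiable. The only point requiring a little care is the normalization assumed at the start of Section~\ref{sec:sat}, that every flat dissubsumption in~$\Gamma$ has the form $X\nsqsubseteq^?Y$ with $X,Y\in\Nv$: one first checks that the transformation achieving this (analogous to the flattening of Section~\ref{sec:flat-disunification}) preserves local solvability in both directions, so that the theorem for the normalized problem transfers back to the original~$\Gamma$. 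Beyond this bookkeeping there is no genuine obstacle — the real work, namely the case analyses driven by Lemmata~\ref{lem:sub}--\ref{lem:dissub-atoms} together with the well-founded inductions on role depth and on $>_{S^\tau}$ (resp.\ $>_\sigma$), has already been carried out in the two preceding subsections.
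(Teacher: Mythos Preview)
Your proposal is correct and matches the paper's approach exactly: the theorem is stated with an immediate \qed in the paper because both directions are direct consequences of Lemma~\ref{lem:soundness} (soundness) and the final lemma of Section~\ref{sec:sat-completeness} (completeness), precisely as you describe. Your additional remark about the normalization of dissubsumptions to the form $X\nsqsubseteq^?Y$ is a fair point of bookkeeping that the paper dismisses with a ``without loss of generality'' at the start of Section~\ref{sec:sat}.
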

Regarding the computation of actual solutions, we note that
the definition of $S^\tau$ in Section~\ref{sec:sat-soundness}
describes how to obtain local solutions of~$\Gamma$ from the satisfying
valuations of~$\Cl(\Gamma)$.
From a syntactic point of view, this approach does not yield all local solutions. 
In fact, the
transitivity clauses~\ref{cl:sub:trans} may force us to add atoms
to~$S^\tau$ that are, syntactically, not necessary to solve~$\Gamma$.
Also note that different satisfying valuations~$\tau$ may sometimes yield
equivalent unifiers, because some atoms in the substitution~$\sigma_\tau(X)$
of a variable~$X$ may be subsumed by others.
Nevertheless, we can show that, by applying the construction of Section~\ref{sec:sat-soundness}
to the satisfying valuations of $\Cl(\Gamma)$,
we obtain \emph{all}
local solutions of~$\Gamma$ \emph{modulo equivalence}. We call two
solutions~$\sigma$ and $\gamma$ \emph{equivalent} if $\sigma(X)\equiv\gamma(X)$
holds for all $X\in\Nv$.

\begin{lem}
\label{lem:sat-all-solutions}
  Let $\sigma$ be a local solution of the flat disunification problem~$\Gamma$.
  Then there is a satisfying valuation~$\tau$ of $\Cl(\Gamma)$ such that
  $\sigma_{S^\tau}$ is equivalent to~$\sigma$.
\end{lem}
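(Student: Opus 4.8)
The plan is to reuse the valuation $\tau_\sigma$ constructed in Section~\ref{sec:sat-completeness}: by the preceding lemma it already satisfies $\Cl(\Gamma)$, so taking $\tau:=\tau_\sigma$ it only remains to check that $\sigma_{S^\tau}$ is equivalent to~$\sigma$. Unravelling the definitions of $\tau_\sigma$ and of $S^\tau$ (Section~\ref{sec:sat-soundness}), we get $S^\tau_X=\{\,D\in\NV\mid\sigma(X)\sqsubseteq\sigma(D)\,\}$, which is exactly the assignment ``induced by~$\sigma$'' that already appeared in the proof of Lemma~\ref{lem:dismatching}; in particular $S^\tau$ is acyclic (alternatively, Lemma~\ref{lem:acyclic} applies since $\tau_\sigma$ satisfies $\Cl(\Gamma)$). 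Write $S:=S^\tau$ from now on. The goal is thus to show $\sigma(X)\equiv\sigma_S(X)$ for every variable~$X$, which is precisely what equivalence of~$\sigma$ and~$\sigma_S$ means.

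For the inclusion $\sigma(X)\sqsubseteq\sigma_S(X)$ I would invoke the (routine) fact, already used for Lemma~\ref{lem:dismatching}, that $\sigma(C)\sqsubseteq\sigma_S(C)$ holds for every concept term~$C$. Concretely, by induction along~$>_S$: for each $D\in S_X$ one has $\sigma(X)\sqsubseteq\sigma(D)\sqsubseteq\sigma_S(D)$, the first subsumption by the definition of~$S$ and the second by the induction hypothesis applied to the variable occurring in~$D$ (if any, using Lemma~\ref{lem:sub}); since the top-level atoms of~$\sigma_S(X)$ are precisely the atoms $\sigma_S(D)$ with $D\in S_X$, Lemma~\ref{lem:sub} yields $\sigma(X)\sqsubseteq\sigma_S(X)$.

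The converse inclusion $\sigma_S(X)\sqsubseteq\sigma(X)$ is where locality of~$\sigma$ enters. Fix, by Definition~\ref{def:local}, an acyclic assignment~$S'$ \emph{for~$\Gamma$} with $\sigma=\sigma_{S'}$. From $\sigma(X)=\sigma_{S'}(X)=\bigsqcap_{D\in S'_X}\sigma_{S'}(D)$ and Lemma~\ref{lem:sub} one reads off $\sigma(X)\sqsubseteq\sigma(D)$ for each $D\in S'_X$, hence $S'_X\subseteq S_X$ for every~$X$. Next, by induction along~$>_S$, I would prove that $\sigma_S(D)\sqsubseteq\sigma_{S'}(D)$ for every $D\in\NV$: this is trivial when~$D$ is ground, and when $D=\exists r.Y$ with $Y$ a variable it reduces via Lemma~\ref{lem:sub} to $\sigma_S(Y)\sqsubseteq\sigma_{S'}(Y)=\sigma(Y)$, which is the claim for the strictly $>_S$-smaller variable~$Y$. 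Combining the two observations: $\sigma_S(X)$ is the conjunction $\bigsqcap_{D\in S_X}\sigma_S(D)$; discarding the conjuncts with $D\notin S'_X$ (legitimate since $S'_X\subseteq S_X$, because a conjunction is subsumed by any sub-conjunction) and then weakening each remaining $\sigma_S(D)$ to $\sigma_{S'}(D)$ gives $\sigma_{S'}(X)=\sigma(X)$, so $\sigma_S(X)\sqsubseteq\sigma(X)$. Together with the first inclusion this yields $\sigma_S(X)\equiv\sigma(X)$, completing the proof.

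The main obstacle is precisely this converse inclusion: it is the only point where locality of~$\sigma$ is used essentially, and the subtlety is that the witnessing assignment~$S'$ need not coincide with~$S$. The two facts that make the argument go through are that $S'$ refines into~$S$ (i.e.\ $S'_X\subseteq S_X$) and that $\sigma_S$ and $\sigma_{S'}$ are generated by the same recursion along a common variable order, so a single induction along~$>_S$ relates them term by term. Everything else — acyclicity of~$S$ and the fact that $\tau_\sigma$ satisfies $\Cl(\Gamma)$ — is already supplied by the lemmas preceding this one.
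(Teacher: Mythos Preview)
Your proof is correct and follows essentially the same approach as the paper: take $\tau:=\tau_\sigma$, observe that the witnessing assignment of~$\sigma$ is contained in~$S^\tau$, and establish the equivalence by induction along~$>_{S^\tau}$. The only cosmetic differences are a notation swap (your $S,S'$ correspond to the paper's $S^\tau,S$) and that you treat the two inclusions separately---the paper folds both directions into a single induction proving $\sigma(X)\equiv\sigma_{S^\tau}(X)$ at once, whereas you handle $\sigma(X)\sqsubseteq\sigma_S(X)$ via the general fact $\sigma(C)\sqsubseteq\sigma_S(C)$ (which indeed does not need locality) and invoke locality only for the converse.
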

\proof
  Let $S$ be the acyclic assignment underlying~$\sigma$, $\tau:=\tau_\sigma$ be
  the satisfying valuation induced by~$\sigma$ as defined in
  Section~\ref{sec:sat-completeness}, and $S^\tau$ and
  $\gamma:=\sigma_{S^\tau}$ be as defined in Section~\ref{sec:sat-soundness}.
  We first show that $S_X\subseteq S^\tau_X$ holds for all $X\in\Nv$.
  To this end, consider any non-variable atom $D\in S_X$.
  Since $\sigma(D)$ is a top-level atom of $\sigma(X)$, by Lemma~\ref{lem:sub}
  we have $\sigma(X)\sqsubseteq\sigma(D)$.
  Hence, the definitions of~$\tau$ and~$S^\tau$ yield that
  $\tau([X\sqsubseteq D])=1$ and $D\in S^\tau_X$, as required.

  We can now show by induction on the well-founded strict partial
  order~$>_{S^\tau}$ that $\sigma(X)\equiv\gamma(X)$ holds for all $X\in\Nv$.
  Assume that $\sigma(Y)\equiv\gamma(Y)$ holds for all variables
  $Y <_{S^\tau} X$, and hence $\sigma(D)\equiv\gamma(D)$ holds for all
  non-variable atoms $D\in S^\tau_X$, including those in~$S_X$ (it trivially
  holds if $D$ is ground).
  Since $\sigma(X)$ consists exactly of the top-level atoms $\sigma(D)$,
  $D\in S_X$, and similarly $\gamma(X)$ consists exactly of the top-level atoms
  $\gamma(D)$, $D\in S^\tau_X$, we thus know that each top-level atom
  of~$\sigma(X)$ is equivalent to a top-level atom of~$\gamma(X)$.
  Hence, $\gamma(X)\sqsubseteq\sigma(X)$ holds by Lemma~\ref{lem:sub}.
  For the other direction, consider any top-level atom of~$\gamma(X)$, which
  must be of the form~$\gamma(D)$ with $D\in S^\tau_X$.
  By the definition of~$S^\tau$, we obtain $\tau([X\sqsubseteq D])=1$, which
  yields $\sigma(X)\sqsubseteq\sigma(D)$ by the definition of~$\tau$.
  Hence, there must be a top-level atom of $\sigma(X)$ that is subsumed by
  $\gamma(D)\equiv\sigma(D)$, and thus Lemma~\ref{lem:sub} yields
  $\sigma(X)\sqsubseteq\gamma(X)$.
\qed

The SAT reduction has been implemented in our prototype system~UEL,%
\footnote{version 1.4.0, available at \url{http://uel.sourceforge.net/}}
which uses SAT4J%
\footnote{\url{http://www.sat4j.org/}}
as external SAT solver.
First experiments show that disunification is indeed helpful for reducing the
number and the size of solutions.
For example, a slightly modified version of the example from the introduction
has $128$ solutions without any dissubsumptions (see~\cite{BBMM-DL12} for more
details).
Each additional dissubsumption disallowing a particular non-variable atom in
the assignments (\eg the dissubsumption~\eqref{dissubs:ex} from the
introduction) roughly halves the number of remaining solutions.
The runtime performance of the solver for local disunification problems is
comparable to the one for pure unification problems, even on larger problems.

\section{Related work}
\label{sec:related-work}

Since Description Logics and Modal Logics are closely
related~\cite{Schi-IJCAI91}, results on unification in one of these two areas
carry over to the other one.
In Modal Logics, unification has mostly been considered for expressive logics
with all Boolean operators~\cite{Ghil97,Ghil99,Ryba08}. An important open
problem in the area is the question whether unification in the basic modal
logic~{\sf K}, which corresponds to the DL~$\mathcal{ALC}$, is decidable. It is
only known that relatively minor extensions of~{\sf K} have an undecidable
unification problem~\cite{WoZa-TOCL08}.

Disunification also plays an important role in Modal Logics since it is
basically the same as the admissibility problem for inference rules
\cite{Ryba97,IeMe09,BRST09}.
To be more precise, a normal modal logic~$L$ induces an equational
theory~$E_L$ that axiomatizes equivalence in this logic, where the formulas are
viewed as terms. Validity is then just equivalence to~$\top$ and inconsistency
is equivalence to~$\bot$. An \emph{inference rule} is of the form
\begin{equation}
\label{inference-rule}
  \frac{A_1,\ldots,A_m}{B_1,\ldots,B_n}
\end{equation}
where $A_1,\ldots,B_n$ are formulas (terms) that may contain variables. More 
precisely, it is not a single rule but a rule schema that stands for all its
instances
\begin{equation}
\label{inference-rule-instances}
  \frac{\sigma(A_1),\ldots,\sigma(A_m)}{\sigma(B_1),\ldots,\sigma(B_n)}
\end{equation}
where $\sigma$ is a substitution. The semantics of such a
rule~\eqref{inference-rule-instances} is the following: whenever all of its
premises are valid, then one of the consequences must be valid as well. We only
admit the inference rule~\eqref{inference-rule} for the logic~$L$ if all its
instances~\eqref{inference-rule-instances} satisfy this requirement. Thus, we
say that the inference rule~\eqref{inference-rule} is \emph{admissible for~$L$}
if
\[
  \sigma(A_1) =_{E_L} \top \wedge\ldots\wedge \sigma(A_m) =_{E_L} \top
  \ \ \mbox{implies}\ \
  \sigma(B_1) =_{E_L} \top \vee\ldots\vee \sigma(B_n) =_{E_L} \top
\]
for all substitutions~$\sigma$. Obviously, this is the case iff the
disunification problem
\[
  \{A_1 \equiv^? \top,\ldots,A_m \equiv^? \top,
    B_1 \nequiv^? \top,\ldots,B_n \nequiv^? \top\}
\]
does \emph{not} have a solution.

As already mentioned in the introduction,
(dis)unification in \EL is actually a
special case of (dis)unification modulo equational
theories~\cite{BuerckertBuntine94,Comon91,Como-JLC89}.
As shown in \cite{BaMo-LMCS10}, equivalence in \EL can be axiomatized
by the equational theory of semilattices
with monotone functions, which extends the theory ACUI of an
associative-commutative-idempotent
binary function symbol $\ast$ (corresponding to~$\sqcap$) with unit 
(corresponding to~$\top$) 
by unary function symbols $h_r$ (corresponding to $\exists r$)
that are monotone in the sense that they satisfy the identities $h_r(x)\ast h_r(x\ast y) = h_r(x\ast y)$.
Perhaps the closest to our present work is thus the investigation of disunification
in ACUI with free function symbols (i.e., additional function symbols of arbitrary arity
that satisfy no non-trivial identities). This problem is shown to be in \NP in~\cite{BaSc-TCS95,DoPP-C04};
NP-hardness follows from NP-hardness of ACUI-unification with free function symbols \cite{KapurNarendranJAR92}.
To be more precise, the \NP upper bound is shown in \cite{BaSc-TCS95} for the theory ACI with
free function symbols, using general combination results for disunification developed in the same article.
However, it is easy to see that the approach applied in \cite{BaSc-TCS95} also works for ACUI.
In contrast, the \NP upper bound in \cite{DoPP-C04} is shown for ACUI with free function symbols by directly
designing a dedicated algorithm for disunification in this theory.

\section{Conclusions}
\label{sec:conclusions}

We have considered disunification in the description logic \EL.
While the complexity of the general problem remains open, we have identified
two restrictions under which the complexity does not increase when compared to
plain unification in~\EL, \ie remains in \NP.
We developed a nondeterministic polynomial reduction from dismatching problems
to local disunification problems, and presented two algorithms to solve the
latter. These procedures extend known algorithms for unification in~\EL without
a large negative impact on their performance.

Regarding future work, we want to investigate the decidability and complexity
of general disunification in~\EL, and consider also the case where non-ground
solutions are allowed.
In contrast to unification, these extensions make the problem harder to solve.
From a more practical point of view, we plan to implement also the
goal-oriented algorithm for local disunification, and to evaluate the
performance of both presented algorithms on real-world problems.
In addition, we will investigate whether a reduction to answer set programming
(ASP) \cite{Bara-03,GeLi-ICLP88} rather than SAT leads to a better performance.

\bibliographystyle{plainnat}
\bibliography{ref}

\end{document}